\newcommand{\id}{\mbox{\sf Id}}
\newcommand{\T}{\mbox{\sf T}}
\newcommand{\MF}{\mbox{\tt ME}}
\newcommand{\parent}{\mbox{\it p}}
\newcommand{\lab}{\mbox{\rm $\ell$}}
\newcommand{\size}{\mbox{\it size}}
\newcommand{\m}{\mbox{\it Out}}
\newcommand{\mIns}{\mbox{\it In}}
\newcommand{\N}{\mbox{\sf N}}
\newcommand{\Child}{\mbox{C}}
\newcommand{\SizeC}{\mbox{\rm SizeC}}
\newcommand{\NbrNdS}{\mbox{\rm nbrNdS}}
\newcommand{\LabelR}{\mbox{\rm Label$_R$}}
\newcommand{\LabelNd}{\mbox{\rm Label$_{Nd}$}}
\newcommand{\Label}{\mbox{\rm Label}}
\newcommand{\Isleaf}{\mbox{\rm Leaf}}
\newcommand{\Heavy}{\mbox{\rm Heavy}}
\newcommand{\Light}{\mbox{\rm Light}}
\newcommand{\Distance} {\mbox{\rm Distance}}
\newcommand{\OEd}{\mbox{\rm O\hspace*{-0,05cm}E}}
\newcommand{\IE}{\mbox{\rm I\hspace*{-0,01cm}E}}
\newcommand{\FC}{\mbox{\rm Cand$_c$}}
\newcommand{\FA}{\mbox{\rm Cand$_l$}}
\newcommand{\Fusion}{\mbox{\rm Candidate}}
\newcommand{\Lca}{\mbox{\rm $nca$}}
\newcommand{\path}{\mbox{\tt path}}
\newcommand{\RLC}{\mbox{\rm R$_{\tt Label}$}} 
\newcommand{\RSC}{\mbox{\rm R$_{\tt Size}$}} 
\newcommand{\RMin}{\mbox{\rm R$_{\tt Min}$}}
\newcommand{\RC}{\mbox{\rm R$_{\tt Rec}$}}%{\lightning}$}} 
\newcommand{\RF}{\mbox{\rm R$_{\tt Merge}$}}%{\Join}$}}
\newcommand{\RCo}{\mbox{\rm R$_{\tt Correct}$}}
\newcommand{\RDist}{\mbox{\rm R$_{\tt Dist}$}}
\newcommand{\RFEnd}{\mbox{\rm R$_{\tt End}$}}
\newcommand{\MSTA}{\mbox{\tt SS-MST}}
\newcommand{\LabA}{\mbox{\tt NCA-L}}
\newtheorem{lemma}{Lemma}
\newtheorem{corollary}{Corollary}
\newenvironment{proof}{\noindent\textbf{Proof.}}{ \bigskip\hfill$\Box$}
\newtheorem{definition}{Definition}
\newtheorem{theorem}{Theorem}
\newtheorem{specification}{Specification}
\begin{document}
\title{Fast Self-Stabilizing Minimum Spanning Tree Construction
Using Compact Nearest Common Ancestor Labeling Scheme \footnote{A preliminary version of this paper has appeared in the proceedings of the 24th International Conference on Distributed Computing (DISC 2010), see \cite{BlinDGR10}.}}

\author{
L{\'e}lia Blin\\
\footnotesize{Universit\'e d'Evry-Val d'Essonne, 91000 Evry, France}\\
\footnotesize{LIP6-CNRS UMR 7606, France.}\\
\footnotesize{lelia.blin@lip6.fr}
\and
Shlomi Dolev\\
\footnotesize{Department of Computer Science, Ben-Gurion University of the Negev, Beer-Sheva, 84105, Israel.}\\
\footnotesize{dolev@cs.bgu.ac.il}
\and
Maria Gradinariu Potop-Butucaru\\
\footnotesize{Universit\'e Pierre \& Marie Curie - Paris 6, 75005 Paris, France.}\\
\footnotesize{LIP6-CNRS UMR 7606, France.}\\
\footnotesize{maria.gradinariu@lip6.fr}
\and
St{\'e}phane Rovedakis\\
\footnotesize{Laboratoire CEDRIC, CNAM, 292 Rue St Martin, 75141 Paris, France.}\\
\footnotesize{stephane.rovedakis@cnam.fr}
}

\date{}
\maketitle

%======================================================================
% ABSTRACT
%======================================================================
\begin{abstract}
We present a novel self-stabilizing algorithm for minimum spanning tree (MST) construction. 
The space complexity of our solution is $O(\log^2n)$ bits and it converges in $O(n^2)$ rounds. 
Thus, this algorithm improves the convergence time of previously known self-stabilizing asynchronous MST algorithms by 
a multiplicative factor $\Theta(n)$, to the price of increasing the best known space complexity by a factor $O(\log n)$. 
The main ingredient used in our algorithm is the design, for the first time in self-stabilizing settings, of a labeling scheme  
for computing the nearest common ancestor with only $O(\log^2n)$ bits. 
%Although this label size is not the best possible one (there exist nearest common ancestor labeling schemes using 
%labels of size $O(\log n)$ bits), our scheme is, up to our knowledge, the first known informative labeling scheme constructed in a self-stabilizing manner. 
\end{abstract}

\thispagestyle{empty}
\setcounter{page}{0}
\newpage 

%======================================================================
% INTRO
%======================================================================
\section{Introduction}
\label{sec:intro}
Since its introduction in a centralized context~\cite{Prim57,Kruskal56}, the minimum spanning tree (or MST) problem gained a 
benchmark status in distributed computing thanks to the seminal work of Gallager, Humblet and Spira~\cite{GallagerHS83}.  

The emergence of large scale and dynamic systems revives the study of scalable algorithms.
A \emph{scalable} algorithm does not rely on any global parameter of the system (e.g. upper bound on the  number of nodes or the diameter). 

In the context of dynamic systems, after a topology change a minimum spanning tree previously computed is not necessarily a minimum one (e.g., an edge with a weight lower than the existing edges can be added). A mechanism must be used to replace some edges from the constructed tree by edges of lower weight. Park et al.~\cite{ParkMHT90,ParkMHT92} proposed a distributed algorithm to maintain a MST in a dynamic network using the Gallager, Humblet and Spira algorithm. In their approach, each node know its ancestors and the edges weight leading to the root in the tree. Moreover, the common ancestor between two nodes in the tree can be identified. For each non-tree edge $(u,v)$, the tree is detected as not optimal by $u$ and $v$ if there exist a tree edge with a higher weight than $w(u,v)$ between $u$ (resp. $v$) and the common ancestor of $u$ and $v$. In this case, the edge of maximum weight on this path is deleted. This yields to the creation of several sub-trees, from which a new MST can be constructed following the merging procedure given by Gallager et al.~\cite{GallagerHS83}. Flocchini et al. ~\cite{FlocchiniEPPS07, FlocchiniEPPS12} considered another point of view to address the same problem. The authors were interested to the problem of precomputing all the replacement minimum spanning trees when a node or an edge of the network fails. They proposed the first distributed algorithms to efficiently solve each of these problems (i.e., by considering either node or edge failure). Additional techniques and algorithms related to the construction of light weight spanning structures are extensively detailed in~\cite{Peleg00_book}.

Large scale systems are often subject to transient faults. \emph{Self-stabilization} introduced first by Dijkstra in~\cite{D74j} and later publicized by several books~\cite{Dolev00,Tel94} deals with the ability of a system to recover from catastrophic situation (i.e., the global state may be arbitrarily far from a legal state)
without external (e.g. human) intervention in finite time.

%This algorithm is fast and does not rely on any global knowledge.
%Therefore it is appealing for large scale networks.
%According to this criteria the Gallager, 
%Humblet and Spira algorithm steams from its scalability.  
Although there already exist self-stabilizing solutions for the MST construction, none of them considered the extension of   
the Gallager, Humblet and Spira algorithm (GHS) to self-stabilizing settings.  
Interestingly, this algorithm unifies the best properties for designing large scale 
MSTs: it is fast and totally decentralized and it does not 
rely on any global parameter of the system. Our work proposes an extension of this algorithm to self-stabilizing settings. Our extension uses only poly-logarithmic memory and  
preserves all the good characteristics of the original solution in terms of 
convergence time and scalability. 

Antonoiu and Srimani, and Gupta and Srimani presented in~\cite{AntonoiuS97, GuptaS03} the first self-stabilizing algorithm for the MST problem. 
%It applies on graphs with uniquely identified nodes, where edges have integer edge weights, and a weight can appear at most once in the whole network. 
The MST construction is based on the computation of all shortest paths (for a certain cost function) between all pairs of nodes. While executing the algorithm, every node stores the cost of all paths from it to all the other nodes. To implement this algorithm, the authors assume that every node knows the number $n$ of nodes in the network, and that the identifiers of the nodes are in $\{1,\dots,n\}$. Every node $u$ stores the weight of the edge $(u,v)$ placed in the MST for each node $v\neq u$. Therefore the algorithm requires $\Omega(\sum_{v\neq u}\log w(u,v))$ bits of memory at node $u$. Since all the weights are distinct integers, the memory requirement at each node is $\Omega(n\log n)$ bits. The main drawback of this solution is its lack of scalability since each node has to know and maintain information for all the nodes in the system. Note that the authors introduce a time complexity definition related to the transmission of beacon in the context of ad-hoc networks. In a round, each node receives a beacon from all its neighbors. So, the $O(n)$ time complexity announced by the authors stays only in the particular synchronous settings. In asynchronous setting, a node is activated at the reception of a beacon from each neighbor leading to a $O(n^2)$ time complexity.
%  Note also that the time complexity announced by the authors, $O(n)$ stays only in the particular synchronous settings considered by the authors. In fact, the definition of round complexity In asynchronous setting the complexity is $\Omega(n^2)$ rounds. 
A different approach for the message-passing model was proposed by 
Higham and Liang~\cite{HighamL01}. 
%Their work applies to undirected connected graphs with unique integer edge weights and unique node identifiers. In the mode 
%knows only an upper bound on the number of nodes in the system. 
The algorithm works roughly as follows: every edge checks 
whether it should belong to the MST or not. 
To this end, every non tree-edge $e$ floods the network to find 
a potential cycle, and when $e$ receives its own message back along a cycle, it uses the information collected by this message (i.e, 
the maximum edge weight of the traversed cycle) to decide whether $e$ could potentially be in the MST or not. If the edge $e$ has not 
received its message back after the time-out interval, it decides to become tree edge. The memory used by each node is $O(\log n)$ bits, but the information exchanged between neighboring nodes is of size $O(n \log n)$ bits, 
thus only slightly improving that of \cite{AntonoiuS97}. This solution
also assumes that each node has 
access to a global parameter of the system: the diameter. Its
computation 
is expensive in large scale systems and becomes even harder in dynamic settings.
The time complexity of this approach is $O(mD)$ rounds where $m$ and $D$ are the number of edges and the upper bound of the diameter of the network respectively, i.e., $O(n^3)$ rounds in the worst case.\\
% Note also that the time complexity announced by the authors, $O(n)$
% stays only in the particular synchronous settings considered by the
% authors. In asynchronous setting the complexity is $O(n^2)$ rounds.

In \cite{BPRT09c} we proposed a  self-stabilizing loop-free algorithm for the MST problem. 
Contrary to previous self-stabilizing MST protocols, this algorithm does not make
any assumption on the network size (including upper bounds) or the uniqueness of the
edge weights. The proposed solution improves on the memory space usage since each
participant needs only $O(\log n)$ bits while preserving the same time complexity as the algorithm in \cite{HighamL01}.  

Clearly, in the self-stabilizing implementation of the MST algorithms there is a trade-off between 
the memory complexity and their time complexity (see Table \ref{tableresume}). The challenge we address in this paper is to design fast 
and scalable self-stabilizing MST with little memory. Our approach brings together two worlds: the time efficient 
MST constructions and the memory compact informative labeling schemes. 
We do this by extending the GHS algorithm to the self-stabilizing setting while keeping it memory space compact, but using a self-stabilizing extension of the nearest common ancestor labeling scheme ~\cite{Peleg00,AGKR02}.
% Therefore, we extend the GHS algorithm to self-stabilizing settings and keep compact its memory space by using a self-stabilizing extension of the nearest common ancestor labeling scheme ~\cite{Peleg00,AGKR02}.
Note that labeling schemes have already been used in 
order to maintain compact information linked with
vertex adjacency, distance, tree ancestry or tree routing~\cite{BeinDV05},  
however none of these schemes have been studied in self-stabilizing
settings (except for the tree routing).

Our contribution is therefore twofold. 
We propose for the first time in self-stabilizing settings a $O(\log^2n)$ bits 
scheme for computing the nearest common ancestor. 
Furthermore, based on this scheme, we describe a new 
self-stabilizing algorithm for the MST
problem. Our algorithm does not make any assumption on the network size
(including upper bounds) or the existence of an a priori known root. 
%Moreover,
%our solution is the best space/time compromise among the existing self-stabilizing MST solutions.
The convergence time is $O(n^2)$ asynchronous rounds 
and the memory space per node is 
$O(\log^2 n)$ bits. Interestingly, our work is the first to prove the effectiveness 
of an informative labeling scheme in self-stabilizing settings and therefore 
opens a wide research path in this direction. 
%  Our algorithm is not silent \cite{??}, but can be made silent by using the proof labeling scheme of~\cite{KormanK07}, which uses $O(\log^2n)$ bits per node. 
The description of our algorithm is \emph{explicit}, in the sense that we describe all procedures using the formal framework $$\langle\mbox{label}\rangle : \langle\mbox{guard}\rangle \rightarrow \langle\mbox{statement}\rangle.$$

The recent paper \cite{KormanKM11} announces an improvement of our results, by sketching the \emph{implicit} description of a self-stabilizing algorithm for MST converging in $O(n)$ rounds, with a memory of $O(\log n)$ bits per node. This algorithm is also based on an informative labeling scheme. 
%However, in order to overcome the $\Omega(\log^2n)$ bit barrier of any proof labeling scheme for MST (see~\cite{KormanK07}), the informative labeling scheme in \cite{KormanKM11} involves %exchanging messages with nodes at distance $\Delta \log^2n$ to detect faults, where $\Delta$ is the maximum degree of the nodes. It establishes a tradeoff between local communications and %memory space. If one insists on detecting faults by communicating only between neighbors, then our approach suffices, to the price of $O(\log^2n)$ bits per node. 
The approach proposed by Korman et al.~\cite{KormanKM11} is based on the composition of many sub-algorithms (some of them not stabilizing) presented in the paper as black boxes and the composition of all these modules was not proved formally correct in self-stabilizing settings up to date. The main feature of our solution in comparison with~\cite{KormanKM11} is its straightforward implementation.
% It is thus not formally silent (communication at distance~1), but establishes a tradeoff between silentness (communication at short distance) and memory size. If one insists on detecting faults by communicating only between neighbors, then our approach suffices, to the price of $O(\log^2n)$ bits per node.   

\begin{table}[t]
\begin{center}
\scalebox{1}
{
\begin{tabular}{|c|c|c|c|}
\hline
 & a priori knowledge & space complexity & convergence time\\
\hline
 \cite{AntonoiuS97} &network size and & $O(n \log n)$ & $O(n^2)$ \\
 &  the nodes in the network && \\
\hline
 \cite{HighamL01}&upper bound  on diameter &$O(\log n)$& $O(n^3)$\\
 && messages of size $O(n\log n)$ & \\
\hline
\cite{BPRT09c}  &none &$O(\log n)$ &$O(n^3)$\\
%\hline
%\cite{KormanKM11} & none & $\Delta (\log^2n)$&$O(\log n)$ & $O(n)$ \\
\hline
This paper & none &$O(\log^2 n)$ & $O(n^2)$\\
\hline 
\end{tabular}
}
\vspace*{0,3cm}
\caption{\small Distributed Self-Stabilizing algorithms for the MST problem}
\label{tableresume}
\end{center}
\end{table}

%======================================================================
\section{Model and overview of our solution} 
\subsection{Model}
\label{sec:model}
%======================================================================

% \paragraph{System model}

We consider an undirected weighted connected network $G=\langle V,E,w \rangle$ where $V$ is the set of nodes, $E$ is the set of edges and $w: E \rightarrow {\mathbb R^+}$ is a positive cost function. 
Nodes represent processors and edges represent bidirectional communication links.

The processors asynchronously execute their programs consisting of a set of variables and a finite set of rules. We consider the local shared memory model of computation\footnote{The fined-grained communication atomicity model \cite{BK07,Dolev00} can be used to design more easily a self-stabilizing algorithm for message passing model. Each node maintains a local copy of the variables of its neighbors. These variables are refreshed via special messages exchanged periodically by neighboring nodes. Therefore, in the message passing model the space complexity of our algorithm is $O(\Delta \log^2 n)$ bits per node by considering also the local copies of neighbors' variables, with $\Delta$ the maximum degree of a node in the network.}. The variables are part of the shared register which is used to communicate with the neighbors. A processor can read and write its own registers and can read the shared registers of its neighbors. 
%So, variables of a processor can be accessed by the processor and its neighbors. 
Each processor executes a program consisting of a sequence of guarded rules. Each \emph{rule} contains a \emph{guard} (Boolean expression over the variables of a node and its neighborhood) and an \emph{action} (update of the node variables only). Any rule whose guard is \emph{true} is said to be \emph{enabled}. A node with one or more enabled rules is said to be \emph{enabled} and may execute the action corresponding to the chosen enabled rule.

A {\it local state} of a node is the value of the local variables of the node and the state of its program counter. A {\it configuration} of the system $G=(V,E)$ is the cross product of the local states of all nodes in the system. The transition from a configuration to the next one is produced by the execution of an action at a node. A {\it computation} of the system is defined as a \emph{weakly fair, maximal} sequence of configurations, $e=(c_0, c_1, \ldots c_i, \ldots)$, where each configuration $c_{i+1}$ follows from $c_i$ by the execution of a single action of at least one node. During an execution step, one or more processors execute an action and a processor may take at most one action. \emph{Weak fairness} of the sequence means that if any action in $G$ is continuously enabled along the sequence, it is eventually chosen for execution. \emph{Maximality} means that the sequence is either infinite, or it is finite and no action of $G$ is enabled in the final global state.
In this context, a \emph{round} is the smallest portion of an execution where every process has the opportunity to execute at least one action.
%%%%%%%
%\paragraph{Faults and self-stabilization}
In the sequel we consider the system can start in any configuration. That is, the local state of a node can be corrupted. We don't make any assumption on the number of corrupted nodes. In the worst case all the nodes in the system may start in a corrupted configuration. In order to tackle these faults we use self-stabilization techniques. The definition hardly uses the legitimate predicate. A legitimate predicate is defined over the configurations of a system and describes the set of correct configurations.

\begin{definition}[self-stabilization]
Let $\mathcal{L_{A}}$ be a non-empty \emph{legitimate predicate} of an algorithm $\mathcal{A}$ with respect to a specification predicate $Spec$ such that every configuration satisfying $\mathcal{L_{A}}$ satisfies $Spec$. Algorithm $\mathcal{A}$ is \emph{self-stabilizing} with respect to $Spec$ iff the following two conditions hold:\\
\textsf{(i)} Every computation of $\mathcal{A}$ starting from a configuration satisfying $\mathcal{L_A}$ preserves $\mathcal{L_A}$ and verifies $Spec$ (\emph{closure}).  \\
\textsf{(ii)} Every computation of $\mathcal{A}$ starting from an arbitrary configuration contains a configuration that satisfies $\mathcal{L_A}$ (\emph{convergence}).
\end{definition}

To compute the time complexity, we use the definition of \emph{round}~\cite{Dolev00}. Given a computation $e$ ($e \in \mathcal{E}$), the \emph{first round} of $e$ (let us call it $e^{\prime}$) is the minimal prefix of $e$ containing the execution of one action (an action of the protocol or a disabling action) of every enabled processor from the initial configuration.  Let $e^{\prime \prime}$ be the suffix of $e$ such that $e=e^{\prime}e^{\prime \prime}$. The \emph{second round} of $e$ is the first round of $e^{\prime \prime}$.

\subsection{Overview of our solution}
\label{sec:over}
We propose to extend the Gallager, Humblet and Spira (GHS) algorithm~\cite{GallagerHS83}, to self-stabilizing settings via a compact informative labeling scheme. Thus, the resulting solution presents several advantages appealing to large scale systems: it is compact since it uses only memory whose size is poly-logarithmic  in the size of the network, it scales well since it does not rely on any global parameter of the system.

The notion of a \textit{fragment} is central to the GHS approach. A fragment is a sub-tree of the graph, i.e., a fragment is a tree which spans a subset of nodes. Note that a fragment can be limited to a single node. An outgoing edge of a fragment $F$ is an edge with a single endpoint in $F$. The minimum-weight outgoing edge of a fragment $F$ is an outgoing edge of $F$ with minimum weight among outgoing edges of $F$, denoted in the following as \MF$_F$. In the GHS construction, initially each node is a fragment. For each fragment $F$, the GHS algorithm in~\cite{GallagerHS83} identifies the \MF$_F$ and merges the two fragments endpoints of \MF$_F$. It is important to mention that with this scheme, more than two fragments may be merged concurrently. The merging process is repeated in a iterative fashion until a single fragment remains. The result is a MST. The above approach is often called \emph{blue rule} for MST construction~\cite{tar83}.

This approach is particularly appealing when transient faults create a forest of fragments (which are sub-trees of a MST). The direct application of the blue rule allows the system to reconstruct a MST and to recover from faults which have divided the existing MST. However, when more severe faults hit the system the process' states may be corrupted leading to a configuration of the network where the set of fragments are not sub-trees of some MST. This may include, a spanning tree but not a MST or spanning structure containing cycles. In these different types of spanning structures, the application of the \emph{blue rule} is not always sufficient to reconstruct a MST. To overcome this difficulty, we combine the \emph{blue rule} with another method, referred in the literature as the \emph{red rule}~\cite{tar83}. The \emph{red rule} considers all the possible cycles in a graph, and removes the heaviest edge from every cycle, the resulting is a MST. To maintain a MST regardless of the starting configuration, we use the \emph{red rule} as follows. Let ${\tt T}$ denote a spanning tree  of graph $G$, and $e$ an edge in $G$ but not in ${\tt T}$. Clearly, if $e$ is added to ${\tt T}$, this creates a (unique) cycle composed by $e$ and some edges of ${\tt T}$. This cycle is called a \textit{fundamental cycle}, and denoted by $C_e$. 
%(Formally, this cycle depends on ${\tt T}$; Nevertheless no confusion should arise from omitting ${\tt T}$ in the notation $C_e$). 
According to the \emph{red rule}, if $e$ is not the edge of maximum weight in $C_e$, then there exists an edge $f \neq e$ in $C_e$, $f\in T$ such that $w(f)>w(e)$. In this case, $f$ can be removed since it is not part of any MST. 

Our solution, called in the following \MSTA\/ Algorithm, combines both the \emph{blue rule} and \emph{red rule}. The application of the \emph{blue rule} needs that each node identifies the fragment it belongs to. The \emph{red rule} also requires that each node can identify the fundamental cycle associated to each of its adjacent non-tree-edges. Note that a simple scheme broadcasting the root identifier in each fragment (of memory size $O(\log n)$ bits per node) can be used to identify the fragments, but this cannot allow to identify fundamental cycles. In order to identify fragments or fundamental cycles, we use a self-stabilizing labeling scheme, called $\LabA$. This scheme provides at each node a distinct label. For two nodes $u$ and $v$ in the same fragment, the comparison of their labels provides to these two nodes their \textbf{nearest common ancestor} in a tree (see Section~\ref{sec:label}). Thus, the advantage of this labeling is twofold. First the labeling scheme helps each node to identify the fragment it belongs to. Second, given any non-tree edge $e=\{u,v\}$, the path in the tree going from $u$ to the nearest common ancestor of $u$ and $v$, then from there to $v$, and finally back to $u$ by traversing $e$, constitute the fundamental cycle $C_e$.

To summarize, \MSTA\/ algorithm will use the \emph{blue rule} to build a spanning tree, and the \emph{red rule} to recover from invalid configurations. In both cases, it uses our algorithm \LabA\/ 
%the nearest-common ancestor labeling scheme 
to identify both fragments and fundamental cycles. Note that, in~\cite{ParkMHT90,ParkMHT92} distributed algorithms using the \emph{blue} and \emph{red rules} to construct a MST in a dynamic network are proposed, however these algorithms are not self-stabilizing.
%Of course, the labeling scheme it self can be corrupted. Our algorithm handles this as follows..................

\subsection{Notations}

In this section we fix some general assumptions used in the current paper. Let $G=\langle V,E,w \rangle$ be an undirected weighted graph, where $V$ is the set of nodes, $E$ is the set of edges and the weight of each edge is given by a positive cost function $w: E \rightarrow {\mathbb R^+}$. We consider w.l.o.g. that the edges' weight are polynomial in $|V|$. Moreover, the nodes are allowed to have unique identifiers denoted by $\id$ encoded using $O(\log n)$ bits where $n=|V|$. No assumption is made about the fact that edges' weight must be distinct. In the current paper  $\N_v$ denotes the set of all neighbors of $v$ in $G$, for any node $v \in V(G)$.

Each node $v$ maintains several information a pointer to one of its neighbor node called \textit{the parent}. The set of these pointers induces a spanning tree if the spanning structure is composed with all the nodes and contains no cycle. We denote by $\path(u,v)$ the path from $u$ to $v$ in the tree. For handling the nearest common ancestor labeling scheme we will define some notations. Let $\lab_v$ be the label of a node $v$ composed by a list of pairs of integers, where each pair is an identifier and a distance. $\lab_v[i]$ denotes the $i$th pair of the list, and for every pair $i$ the first element is denoted by $\lab_v[i][0]$ and the second one by $\lab_v[i][1]$. The last pair of the list is denoted by $\lab^{-1}_v$.

%
%\end{enumerate}
%======================================================================

\section{Self-stabilizing Nearest Common Ancestor Labeling scheme}
\label{sec:label}

Previously, we explained that our  \MSTA\/ algorithm needs to identify fragments, internal and outgoing edges of each fragment and the presence of cycles. To achieve this identification we use a nearest common ancestor labeling scheme. This section is dedicated to the presentation of a self-stabilizing version of the distributed algorithm proposed by Peleg~\cite{Peleg00}. The self-stabilizing algorithm is called in the following  \LabA.  \LabA\/ algorithm can be used to solve other tasks than constructing a MST, hence we present this part in a separate section.

In~\cite{Peleg00}, Peleg gives a nearest common ancestor labeling scheme for a tree structure with a memory complexity of $\Theta(\log^2n)$ bits. We will first give in this section a self-stabilizing version of this scheme, that is the encoder and decoder part related to the labeling scheme, and finally we prove the correctness and the complexity of our self-stabilizing algorithm. For simplicity, we assume in this current section that all the nodes of the network belong to a single tree. It is easy to see that without a tree structure, the nodes cannot have a common ancestor. Therefore, in the next section we have to deal with the general case in which cycles can be contained in the starting configuration.

\subsection{Variables}

Before presenting the nearest common ancestor labeling scheme, we describe below the variables used by the labeling scheme. Each node $v \in V$ maintains three variables:
\begin{itemize}
\item A parent pointer to a neighbor of $v$ stored in $\parent_v$ defining the spanning tree.
\item $\size_v$ is a pair of integers, whose the first element is an estimation of the number of nodes in the sub-tree of $v$ and the second one is the identifier of the child of $v$ with the subtree of highest size. If $v$ has no child then $\size_v$ is equal to $(1,\bot)$. Note that, the first integer of the pair is referenced by $\size_v[0]$, while the second integer by $\size_v[1]$.
\item The label of $v$ (composed of a list of pairs of integers where each pair is an identifier and a distance (described below)) is stored in variable $\lab_v$.
\end{itemize}

We will now present the manner the nearest common ancestor scheme computes the label of each node in a spanning tree.

\subsection{Labeling encoder}
The main idea of this protocol is to divide a tree structure in sub-paths to minimize the label size of each node. Let us describe more precisely our self-stabilizing version of this protocol. In a rooted tree, a \textit{heavy} edge is an edge between a node $u$ and one of its children $v$ with the highest number of nodes 
in its sub-tree. The other edges between $u$ and its other children are tagged as \textit{light} edges. We extend this edge designation to the nodes, a node $v$ is called \emph{heavy node} if the edge between $v$ and its parent is a heavy edge (see Predicate $\Heavy(v)$ in Figure~\ref{fig:predicates1}), otherwise $v$ is called \emph{light node} (see Predicate $\Light(v)$ in Figure~\ref{fig:predicates1}). Moreover, the root of a tree is a heavy node. The idea of the scheme is as follows. A tree is recursively divided into paths of disjoint edges: \emph{heavy} and \emph{light} paths. Remark: Any child of highest number of nodes can be selected as heavy node, so among these children the one of highest identifier can be selected.

% \begin{algorithm}
\begin{figure}
\fbox{
\begin{minipage}{16cm}
\begin{small}
%{\bf Macros:}\\
\begin{tabular}{lll}
$\Child(v)$ & $=$ & $\{u \in \N_v : \parent_u=\id_v\}$\\
$\NbrNdS(v)$ & $=$ & $\Big(1+\mathlarger{\sum}_{u \in \Child(v)} \size_u[0], \quad \max\{\id_u: u \in \Child(v) \wedge \size_u[0]=\max\{\size_x[0]: x \in \Child(v)\}\}\Big)$\\
% $\NbrNdS(v)$ & $=$ & $\Big(1+\mathlarger{\sum}_{u \in \Child(v)} \size_u[0], \quad \arg\max\{\size_u: u\in \Child(v)\}\Big)$\\
\end{tabular}\\

%{\bf Predicates:}\\
\begin{tabular}{lll}
$\Isleaf(v)$ & $\equiv$ & $(\Child(v)=\emptyset  \wedge \size_v=(1,\bot))$\\
$\SizeC(v)$ & $\equiv$ & $\Isleaf(v) \vee (\Child(v)\neq\emptyset  \wedge \size_v=\NbrNdS(v))$\\
$\LabelR(v)$ & $\equiv$ & $(\parent_v=\emptyset \wedge \lab_v=(\id_v,0))$\\
$\LabelNd(v)$ & $\equiv$ & $(\parent_v \in N(v)) \wedge (\Heavy(v) \vee \Light(v))$\\
$\Label(v)$ & $\equiv$ & $\LabelR(v) \vee \LabelNd(v)$\\
$\Heavy(v)$ & $\equiv$ & $(\size_{\parent_v}[1]=\id_v )\wedge (\size_v[0]< \size_{\parent_v}[0]) \wedge (\lab_{\parent_v}\backslash\lab^{-1}_{\parent_v}=\lab_v\backslash\lab^{-1}_v) \wedge (\lab^{-1}_{\parent_v}[1]+1=\lab^{-1}_v[1])$\\
$\Light(v)$ & $\equiv$ & $(\size_{\parent_v}[1]\neq \id_v)  \wedge (\size_v[0]\leq \size_{\parent_v}[0]/2) \wedge (\lab_v=\lab_{\parent_v}.(\id_v,0))$\\
\end{tabular}\\
\end{small}
\end{minipage}
}
\caption{Macros and predicates of Algorithm \LabA\/ for any $v \in V$.}
\label{fig:predicates1}
% \end{algorithm}
\end{figure}

% \begin{figure}[ht]
% \fbox{
% \begin{minipage}{15cm}
% \begin{description}
% \item Macros:
% \item[-] $\Child(v) = \{\parent_u=\id_v: u\in \N_v \}$
% \item[-] $\NbrNdS(v)= \Big(1+\mathlarger{\sum}_{u \in \Child(v)} \size_u[0],\arg\max\{\size_u: u\in \Child(v)\}\Big)$
% 
% \item Predicates:
% \item[-] $\Isleaf(v) \equiv (\Child(v)=\emptyset  \wedge \size_v=(1,\bot))$
% \item[-] $\SizeC(v) \equiv \Isleaf(v) \vee (\Child(v)\neq\emptyset  \wedge \size_v=\NbrNdS(v))$
% \item[-] $\LabelR(v) \equiv (\parent_v=\emptyset \wedge \lab_v=(\id_v,0)) $
% \item[-] $\LabelNd(v) \equiv  (\parent_v \in N(v)) \wedge ( \Heavy(v) \vee \Light(v))$
% \item[-] $\Label(v) \equiv \LabelR(v) \vee \LabelNd(v)$
% \item[-] $\Heavy(v) \equiv (\size_{\parent_v}[1]=\id_v )\wedge (\size_v[0]< \size_{\parent_v}[0]) \wedge (\lab_{\parent_v}\backslash\lab^{-1}_{\parent_v}=\lab_v\backslash\lab^{-1}_v) \wedge (\lab^{-1}_{\parent_v}[1]+1=\lab^{-1}_v[1])$
% \item[-] $\Light(v) \equiv (\size_{\parent_v}[1]\neq \id_v)  \wedge (\size_v[0]\leq \size_{\parent_v}[0]/2) \wedge (\lab_v=\lab_{\parent_v}.(\id_v,0))$
% \end{description}
% \end{minipage}
% }
% \caption{Macros and predicates of Algorithm \LabA\/.}
% \label{fig:predicates1}
% \end{figure}

To label the nodes in a tree $\T$, the size of each subtree rooted at each node of $\T$ is needed to identify heavy edges leading the heaviest subtrees at each level of $\T$. To this end, each node $v$ maintains a variable named $\size_v$ which is a pair of integers. The first integer is the local estimation of the number of nodes in the subtree rooted at $v$. For a node $v$ this value is computed by summing up all the estimated values of its children plus one. The value of $\size_v$ is processed in a bottom-up fashion from the leaves to the root of the tree (see  Predicate $\SizeC(v)$ in Figure~\ref{fig:predicates1} and rule $\RSC$). The second integer is the identifier of a child of $v$ with maximum number of nodes in its sub-tree, which indicates the heavy edge. We suppose w.l.o.g that, in case of equality between the size of the children's subtrees the child with the minimum identity is chosen. The variable $\size_v$ is setted to $(1,\bot)$ for a leaf node $v$ (see Predicate $\Isleaf(v)$ in Figure~\ref{fig:predicates1}).

Based on the heavy and light nodes in a tree $\T$ indicated by variable $\size_v$ at each node $v \in \T$, each node of $\T$ can compute its label (see rule $\RLC$ in Figure~\ref{fig:algo_lab}). The label of a node $v$ stored in $\lab_v$ is a list of pair of integers. Each pair of the list contains the identifier of the node which is the root of the heavy path (i.e., a path including only heavy edges) that $v$ belongs to and the distance to it. For the root $v$ of a fragment, the label $\lab_v$ is the following pair $(\id_v,0)$, respectively the identifier of $v$ and the distance to itself, i.e., zero (see Predicate $\LabelR(v)$ in Figure~\ref{fig:predicates1}). When a node $u$ is tagged by its parent as a heavy node (i.e., $\size_{\parent_v}[1]=\id_u$), then the node $u$ takes the label of its parent but it increases by one the distance of the last pair of the parent label (see Predicate $\LabelNd(v)$ in Figure~\ref{fig:predicates1}).\\
Otherwise, a node $u$ is tagged by its parent $v$ as a light node (i.e., $\size_{\parent_v}[1] \neq \id_u$), then the node $u$ becomes the root of a heavy path and it takes the following label: the label of its parent to which $u$ concatenates to a new pair composed by its identifier and a zero distance (we note the step of concatenation by the operator ".").\\
Examples of theses cases are given in Figure~\ref{fig:label}, where integers inside the nodes are node identifiers and lists of pairs of values are node labels.

\begin{figure}[tbp]
\begin{center}
\includegraphics[scale=0.65]{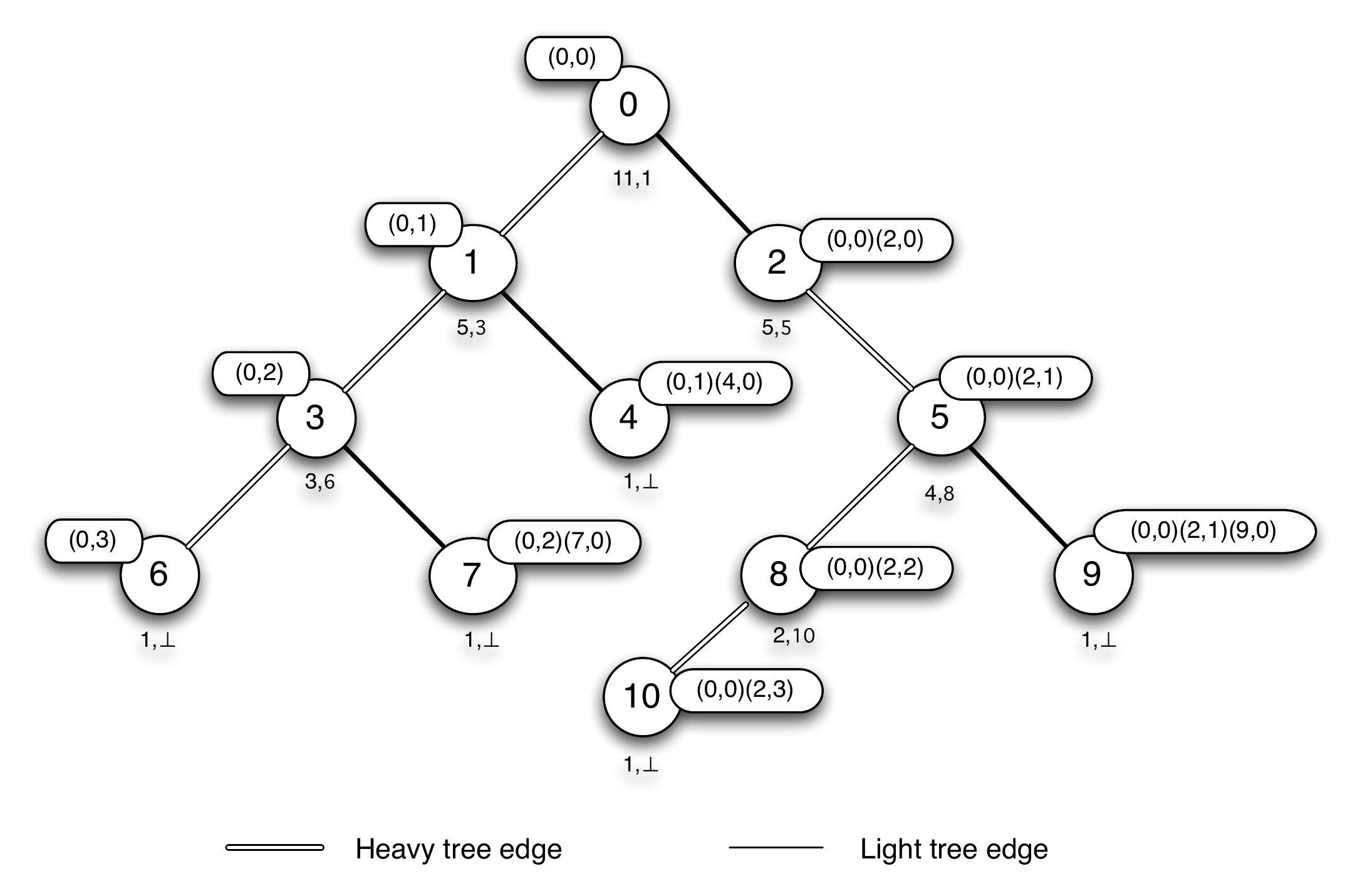}
\caption{Nearest Common Ancestor Labeling scheme for a tree. The bubble at each node $v$ corresponds to the label of $v$. The integer inside each node corresponds to the node's identifier, while the other notation corresponds to the variable $\size$.}
\label{fig:label}
\end{center}
\end{figure}

Algorithm \LabA\/ is composed by the rules $\RSC$ and $\RLC$ given in Figure~\ref{fig:algo_lab} which correct the variables \size\/ and \lab\/ respectively if needed.

\begin{figure}
\fbox{
\begin{minipage}{15cm}
\begin{description}
\item[$\RSC$: [\ Size correction]]
\item\textbf{If} $\neg \SizeC(v)$ \textbf{Then}\\
\hspace*{0,1cm}\textbf{If} $\Child(v)=\emptyset$ \textbf{then}  $\size_v:=(1,\bot)$\\
% \hspace*{0,1cm}\textbf{If} $\Child(v)\neq\emptyset$ \textbf{then} $\size_v:= \NbrNdS(v);$
\hspace*{0,1cm}\textbf{Else} $\size_v:= \NbrNdS(v);$
\item[$\RLC$: [\ Label correction]]
\item\textbf{If} $\SizeC(v) \wedge \neg \Label(v)$ \textbf{Then}\\
\hspace*{0,1cm}\textbf{If} $\size_{\parent_v}[1]=\id_v$ \mbox{\textbf{then }} $\lab_v:=\lab_{\parent_v};\lab^{-1}_v[1]:=\lab^{-1}_v[1]+1;$\\
% \hspace*{0,1cm}\textbf{If} $\size_{\parent_v}[1]\neq\id_v$ \mbox{\textbf{then }}$\lab_v:=\lab_{\parent_v}.(\id_v,0)$
\hspace*{0,1cm}\textbf{Else} $\lab_v:=\lab_{\parent_v}.(\id_v,0)$
\end{description}
\end{minipage}
}
\caption{Formal description of Algorithm \LabA\/ for any $v \in V$.}
\label{fig:algo_lab}
% \end{algorithm}
\end{figure}

\subsection{Labeling decoder}
\label{subsec:decoder}

Let us now describe the decoder for the nearest common ancestor. This decoder is given in~\cite{Peleg00}, but for simplicity we present it using our own notations (see predicate $\Lca$ in Figure~\ref{fig:nca}). Let us consider two nodes $u$ and $v$, we denote by $\Lca(\lab_u,\lab_v)$ the label of the nearest common ancestor of $u$ and $v$. For the remainder of this paper, we define the following notations: $\lab^{\mathsmaller{\cap}}_{u,v}=\lab_u\cap\lab_v$ and $\lab'_{u,v}=\lab_u\backslash\lab^{\mathsmaller{\cap}}_{u,v}$. The nearest common ancestor of $u$ and $v$ is composed by the common part of the label of $u$ and $v$ ($\lab^{\mathsmaller{\cap}}_{u,v}$) and by the smaller pair following the lexicographic order of the last pair of their labels (i.e., minimum between $\lab'_u[0]$ and $\lab'_v[0]$). In the other case $u$ and $v$ have not common ancestor.

\begin{figure}[ht]
\fbox{
\begin{minipage}{15cm}
\begin{description}
\item $\Lca(\lab_u,\lab_v) \equiv \left\{ \begin{array}{ll}
\lab^{\mathsmaller{\cap}}_{u,v}.\lab'_{u,v}[0]&\hspace{0,5cm}\mbox{\textbf{If} $\lab'_{u,v}[0][0]=\lab'_{v,u}[0][0] \wedge \lab'_{u,v}[0][1]<\lab'_{v,u}[0][1] \wedge \lab^{\mathsmaller{\cap}}_{u,v} \neq \emptyset$}\\
\lab^{\mathsmaller{\cap}}_{u,v}.\lab'_{v,u}[0]&\hspace{0,5cm}\mbox{\textbf{If} $\lab'_{u,v}[0][0]=\lab'_{v,u}[0][0] \wedge \lab'_{v,u}[0][1]>\lab'_{u,v}[0][1] \wedge \lab^{\mathsmaller{\cap}}_{u,v} \neq \emptyset$}\\
\emptyset&\hspace{0,5cm}\mbox{\textbf{otherwise}}\\
\end{array} \right.$\\
\end{description}
\end{minipage}
}
\caption{Macro used for computing the nearest common ancestor.}
\label{fig:nca}
\end{figure}

On the example defined on  Figure~\ref{fig:label}, the labels of nodes $9$ and $10$ are respectively $\lab_{9}=(0,0)(2,1)(9,0)$ and $\lab_{10}=(0,0)(2,3)$. In this case, we have for the defined notations on labels: $\lab^{\mathsmaller{\cap}}_{9,10}=(0,0)$, $\lab'_{10,9}=(2,1)(9,0)$ and $\lab'_{9,10}=(2,3)$. Since we have $\lab'_{9,10}[0][1]<\lab'_{10,9}[0][1]$ then on this example $\Lca(\lab_9,\lab_{10})=(0,0)(2,1)$.

\subsection{Correctness and complexity}
\label{subsec:CorLabel}
This subsection is dedicated to the correctness of the self-stabilizing nearest common ancestor labeling scheme. Let $\Gamma$ be the set of all possible configurations of the system. In order to prove the correctness of the $\LabA\/$ algorithm, we denote $\Gamma_{\size}$ the set of configurations in $\Gamma$ such that variables $\size$ are correct in the system. More precisely, we define the following   function $s$: $V\rightarrow \mathbb{N}$ be the function defined by $$s(v)=|\big((\size_v[0]-1)-\mathlarger{\sum}_{\mathsmaller{u\in \mathcal{C}(v)}} \size_u[0]\big)|.$$
Note that $s(v)\geq 0$, and the variable $\size$ has a correct value at node $v$ if and only if $s(v)=0$.
In the following, we show that any execution of the system converges to a configuration in $\Gamma_{\size}$, and the set of configurations $\Gamma_{\size}$ is closed. The following lemma establishes the former property. We assume that all the nodes of the system belongs to the tree ${\tt T}$ and we define below a legitimate configuration for the informative labeling scheme considered in this section.

\begin{definition}[Legitimate configuration for Labeling scheme]
A configuration $\gamma \in \Gamma_\Lambda$ is called legitimate if the following conditions are satisfied:
\begin{enumerate}
\item the root node $r$ of the tree ${\tt T}$ has label equal to $(\id_r, 0)$,
\item every \emph{heavy} node $v \in {\tt T}$ has a label equal to $(\lab_{\parent_v} \backslash \lab^{-1}_{\parent_v}).(\lab^{-1}_{\parent_v}[0], \lab^{-1}_{\parent_v}[1]+1)$,
\item every \emph{light} node $v \in {\tt T}$ has a label equal to $\lab_{\parent_v}.(\id_v, 0)$.
\end{enumerate}
\end{definition}

\begin{lemma}
Starting from an arbitrary configuration $\gamma \in \Gamma$, the system reaches a configuration in $\gamma' \in \Gamma_{\size}$  in $O(\delta_{\tt T})$ rounds, where $\delta_{\tt T}$ is  the depth of the tree ${\tt T}$.
\label{lem:size_convergence}
\end{lemma}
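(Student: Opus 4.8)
The plan is to argue by a layered (level-by-level) convergence argument on the tree ${\tt T}$, exploiting the fact that rule $\RSC$ is driven by the bottom-up potential $s(v)$. First I would observe that the only rule that modifies $\size_v$ is $\RSC$, and its guard is $\neg\SizeC(v)$, i.e. $\size_v$ disagrees with the value $\NbrNdS(v)$ dictated by the children of $v$ (or with $(1,\bot)$ if $v$ is a leaf). Crucially, the right-hand side of $\RSC$ sets $\size_v$ to exactly the value computed from $\{\size_u[0] : u \in \Child(v)\}$, so after one execution of $\RSC$ at $v$ we have $s(v)=0$, and this remains true as long as no child changes its $\size$. Hence the natural induction is on the height of a node in ${\tt T}$ (distance to the deepest leaf below it).

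The key steps, in order: (1) Base case — every leaf $v$ of ${\tt T}$ has $\Child(v)=\emptyset$, so $\SizeC(v)\equiv(\size_v=(1,\bot))$; if it is violated in the initial configuration, the guard of $\RSC$ is enabled and stays enabled until $v$ executes, which by weak fairness happens within the first round; after that $\size_v=(1,\bot)$ permanently since leaves have no children to perturb them. So after $1$ round all leaves are correct. (2) Inductive step — suppose after $k$ rounds every node at height $<k$ has a correct (and henceforth stable) $\size$ value. Consider a node $v$ at height $k$. All of its children lie at height $\le k-1$, hence by the induction hypothesis their $\size[0]$ values are correct and frozen from round $k$ on. Therefore from round $k$ onward the value $\NbrNdS(v)$ is fixed and equals the correct value for $v$; if $\SizeC(v)$ is false, rule $\RSC$ is continuously enabled at $v$, so within one more round $v$ executes it and sets $s(v)=0$; and because $v$'s children never change again, $\SizeC(v)$ stays true thereafter. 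Thus after $k+1$ rounds all nodes at height $\le k$ are correct and stable. (3) Conclusion — iterating up to height $\delta_{\tt T}$ (the depth of ${\tt T}$, which bounds the maximal height) gives that after $O(\delta_{\tt T})$ rounds all nodes $v$ satisfy $s(v)=0$, i.e. the configuration lies in $\Gamma_{\size}$; and the same stability argument shows $\Gamma_{\size}$ is closed, since in any configuration of $\Gamma_{\size}$ every guard $\neg\SizeC(v)$ is false so $\RSC$ is disabled everywhere.

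The one subtlety — and the main thing to handle carefully rather than the "main obstacle" in a deep sense — is that we must not assume $\parent$ pointers are already correct: the lemma only claims convergence of $\size$, and all nodes are assumed to belong to the single tree ${\tt T}$, so $\Child(v)$ is well defined and induces exactly the tree structure. One should note that the guard of $\RSC$ depends only on $\size$ variables (of $v$ and its children) and on $\Child(v)$, never on $\lab$, so the convergence of $\size$ is genuinely independent of the label rule $\RLC$ and of the initial garbage in $\lab$. A second point to state explicitly: once a node at height $<k$ is corrected it cannot be "re-enabled", precisely because $\SizeC(v)=\NbrNdS(v)$-matching is a fixed point under frozen children — this monotone freezing is what turns the round-by-round bound into $O(\delta_{\tt T})$ rather than something larger. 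The write-up should make the freezing claim a small standalone observation and then run the height induction cleanly on top of it.
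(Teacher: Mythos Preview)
Your proposal is correct and follows essentially the same bottom-up, layer-by-layer convergence argument as the paper. The only cosmetic difference is packaging: the paper encodes the same idea via a depth-indexed potential function $\Phi(\gamma)=\sum_{d}\nu_d(\gamma)(n+1)^d$ (showing the deepest incorrect layer $d_0$ vanishes each round), whereas you run an explicit induction on node height; both amount to the observation that once all descendants of $v$ have frozen their $\size$ values, one more round suffices to fix $v$.
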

\begin{proof}
First, we  define the following  potential function $\Phi$. 
We denote by~$\delta_{\tt T}$ the depth of the tree ${\tt T}$, i.e., the length of the longest path from the root to the leaves.
Let $\gamma \in \Gamma$ be a configuration, and let $\Gamma$ be the set of all configurations. Let $\Phi: \Gamma\rightarrow\mathbb{N}$ be the function defined by $$\Phi(\gamma)=\mathlarger{\sum}_{\mathsmaller {d=0}}^{\mathsmaller{\delta_T}} \nu_d(\gamma) (n+1)^d$$
where $\nu_d(\gamma)$ is the number of nodes $v$ at depth $d$ in ${\tt T}$ with $s(v)\neq 0$.  
Note that $0\leq \nu_d(\gamma) \leq n$, and $0\leq \Phi(\gamma)\leq (n+1)^{\delta_T+1}$. Also, the variable $\size$ has a correct value at every node if and only if  $\Phi(\gamma)=0$. Let $\gamma(t)$ denotes the configuration of the system after round $t$. Let $d_0$ be the largest index such that $\nu_{d_0}(\gamma(t))\neq 0$. Since we use a weakly fair scheduler, all the nodes are scheduled during the execution of round $t+1$. Every node $v$ at depth $d>d_0$ does not change its value of variable $\size$ (see the predicate $\SizeC$), and therefore $s(v)$ remains zero, so $\nu_d(\gamma(t+1))$ remains zero as well. The nodes at depth $d_0$ change their variable $\size$ according to the variable $\size$ of their children. Let $v$ be a node at depth $d_0$.  The children of $v$ (if any) are at depth $d>d_0$. Thus, their variable $\size$ has not changed, and therefore $s(v)$ becomes zero after round $t+1$. As a consequence, $\nu_{d_0}(\gamma(t+1))=0$. Therefore, we get 
$$\Phi(\gamma(t+1))<\Phi(\gamma(t))$$
and  thus the system will eventually reach a configuration in $\Gamma_{\size}$. To measure the number of rounds it takes to get into $\Gamma_{size}$, observe that $\delta_{\tt T}$ decreases by at least one at each round.Starting from any arbitrary configuration, the system reaches a configuration in $\Gamma_{\size}$ in $O(\delta_{\tt T})$ rounds.
\end{proof}

\begin{lemma}
Starting from a configuration in $\Gamma_{\size}$ the system can only reach configurations in $\Gamma_{\size}$.
\label{lem:size_closure}
\end{lemma}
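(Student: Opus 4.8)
The plan is to prove closure of $\Gamma_{\size}$ by showing that no node can become enabled for rule $\RSC$ once the configuration lies in $\Gamma_{\size}$. Recall that $\gamma \in \Gamma_{\size}$ means $s(v)=0$ for every node $v$, equivalently $\SizeC(v)$ holds at every node (one needs to check that $s(v)=0$ together with the subtree structure forces $\size_v = \NbrNdS(v)$ when $v$ has children, and $\size_v=(1,\bot)$ when $v$ is a leaf — here the second component of $\size_v$ also matters, so strictly speaking $\Gamma_{\size}$ should be read as the set of configurations where $\SizeC(v)$ holds everywhere; I would make this identification explicit at the start of the proof).

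First I would observe that the only rule that modifies a $\size$ variable is $\RSC$, and its guard is $\neg \SizeC(v)$. So to prove closure it suffices to show that in any configuration $\gamma \in \Gamma_{\size}$, no node $v$ has $\neg \SizeC(v)$, i.e. $\RSC$ is disabled everywhere; then trivially $\RLC$ (which does not touch $\size$) cannot take the system out of $\Gamma_{\size}$ either, and the successor configuration still satisfies $\SizeC(v)$ for all $v$. The key point is that $\SizeC(v)$ depends only on $\Child(v)$, $\size_v$, and the $\size_u[0]$ values of children $u \in \Child(v)$ — and none of these change: $\size_v$ is unchanged because $\RSC$ is disabled at $v$, and each $\size_u$ is unchanged because $\RSC$ is disabled at $u$. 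The set $\Child(v) = \{u \in \N_v : \parent_u = \id_v\}$ could in principle change if some neighbor $u$ modified $\parent_u$, but in Algorithm $\LabA$ neither $\RSC$ nor $\RLC$ writes to $\parent$, so $\Child(v)$ is also invariant. Hence $\SizeC(v)$ remains true, which is exactly the claim.

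The main obstacle — really the only subtlety — is being careful about what "the system" means here: if $\LabA$ is run as a standalone algorithm (as assumed in this section, where all nodes belong to a single fixed tree ${\tt T}$), then $\parent$ is a constant and the argument above is immediate. I would state this assumption explicitly. A secondary point worth a sentence is that closure must hold against the distributed adversary: even if several nodes execute simultaneously in one step, each executing node was (vacuously, since none are enabled) executing a $\RLC$-action only, so the invariant $\bigwedge_v \SizeC(v)$ is preserved by any admissible step. I expect the whole proof to be about five lines, essentially a guard-inspection argument; the write-up should simply make precise that $\Gamma_{\size} = \{\gamma : \forall v,\ \SizeC(v)\}$ and that every variable appearing in $\SizeC(v)$ is frozen along any computation starting in $\Gamma_{\size}$.
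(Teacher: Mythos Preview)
Your proposal is correct and follows essentially the same guard-inspection argument as the paper: observe that only $\RSC$ writes to $\size$, that its guard $\neg\SizeC(v)$ is false everywhere in $\Gamma_{\size}$, and conclude closure. Your write-up is in fact more careful than the paper's (you explicitly argue that $\RLC$ does not touch $\size$ or $\parent$, so $\Child(v)$ and hence $\SizeC(v)$ cannot be falsified by any enabled action), but the underlying idea is identical.
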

\begin{proof}
According to algorithm $\LabA$, the variable $\size$ is modified only by Rule $\RSC$. Consider a configuration $\gamma \in \Gamma_{\size}$ such that variables $\size$ are correct. For each node $v$, we have $s(v)=0$ and Predicate $\SizeC(v)$ is true. Thus, Rule $\RSC$ cannot be executed by a node $v$ and we have $s(v)=0$ which implied that $\Phi(\gamma)=0 $. Therefore, for any execution starting from a configuration $\gamma \in \Gamma_{\size}$, the system remains in a configuration in $\Gamma_{\size}$.
\end{proof}

\begin{lemma}[Convergence for $\LabA$]
 \label{lem:label_convergence}
 Starting from an illegitimate configuration, Algorithm \LabA\/  reaches in $O(\delta_{\tt T})$ rounds a legitimate configuration, where $\delta_{\tt T}$ is  the depth of the tree ${\tt T}$.
 \end{lemma}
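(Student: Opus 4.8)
The plan is to build on the two preceding lemmas about $\Gamma_{\size}$ and then argue level-by-level about the labels in essentially the same way as in the proof of Lemma~\ref{lem:size_convergence}. First I would invoke Lemma~\ref{lem:size_convergence} to get that, after $O(\delta_{\tt T})$ rounds, the system is in a configuration $\gamma' \in \Gamma_{\size}$, and Lemma~\ref{lem:size_closure} to ensure it stays there forever afterwards. From that point on, Predicate $\SizeC(v)$ holds at every node, so Rule $\RSC$ is permanently disabled and only Rule $\RLC$ can fire; moreover the heavy/light designation of every node is now fixed and correct, since it depends only on the (now stable) $\size$ variables of a node and its parent. So it remains to show that, once $\size$ is globally correct, the labels stabilize to the legitimate values in $O(\delta_{\tt T})$ further rounds.

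For this second phase I would define a potential function analogous to $\Phi$, but counting nodes whose label is \emph{locally wrong}, i.e.\ nodes $v$ for which $\neg\Label(v)$, weighted by $(n+1)^{d}$ where $d$ is the depth of $v$ — except that here the dependency runs top-down (a node's correct label is determined by its parent's label), so I would instead weight by $(n+1)^{\delta_{\tt T}-d}$, giving more weight to shallower nodes. The key observations are: (i) the root $r$ has no parent, so once $\SizeC(r)$ holds, $\LabelR(r)$ becomes enabled and after $r$'s next move $\lab_r=(\id_r,0)$ permanently (the guard of $\RLC$ can never again be satisfied at $r$, and $\RSC$ is disabled); (ii) inductively, if every node at depth $<d$ has its correct, stable label, then each node $v$ at depth $d$ sees a correct stable $\lab_{\parent_v}$, so exactly one branch of $\RLC$ applies and sets $\lab_v$ to the legitimate value (heavy case: extend the last pair's distance; light case: append $(\id_v,0)$), after which $\Label(v)$ holds and stays true. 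Hence after the round in which all depth-$<d$ labels are already correct, all depth-$d$ labels become correct within one more round, so $\Phi$ strictly decreases each round and the correct-label frontier descends one level per round; the legitimate configuration is reached within $\delta_{\tt T}$ rounds of entering $\Gamma_{\size}$.

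The main obstacle — and the point that needs the most care — is handling arbitrary \emph{initial} label values during the transient: before the frontier of correctness reaches depth $d$, a node $v$ at depth $d$ may repeatedly rewrite $\lab_v$ from a still-garbage $\lab_{\parent_v}$, so I cannot claim a node's label is monotone or that it is touched only once. The argument must therefore be the same "highest bad level is cleared and never reappears" induction as in Lemma~\ref{lem:size_convergence}, but run from the top down: I need to check that a node $v$ whose ancestors all already have correct stable labels will, after its next activation, have $\Label(v)$ true and will never again be enabled for $\RLC$ (because both the heavy and the light predicate are then satisfied by the stable configuration, so the guard $\SizeC(v)\wedge\neg\Label(v)$ fails forever). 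A secondary subtlety is that a node's heavy/light status, and in the heavy case the identity $\lab^{-1}_{\parent_v}[0]$, must match between the predicate $\Heavy(v)/\Light(v)$ in Figure~\ref{fig:predicates1} and the assignment in $\RLC$; I would note that these coincide precisely once $\size$ is correct, which is why the first phase (convergence of $\size$) has to be completed first. Combining the $O(\delta_{\tt T})$ rounds of phase one with the $O(\delta_{\tt T})$ rounds of phase two gives the claimed $O(\delta_{\tt T})$ bound.
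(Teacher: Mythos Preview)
Your proposal is correct and follows essentially the same approach as the paper: first invoke Lemmas~\ref{lem:size_convergence} and~\ref{lem:size_closure} to reach and remain in $\Gamma_{\size}$, then use a top-down potential function weighting nodes by $(n+1)$ raised to a power decreasing in depth, and argue that the shallowest incorrect level is cleared each round. The only cosmetic difference is that the paper defines a numerical ``label-error'' function $L(v)$ (incorporating $s(v)$ and a coordinate-wise label difference) and counts nodes with $L(v)\neq 0$, whereas you simply count nodes with $\neg\Label(v)$; once in $\Gamma_{\size}$ these are equivalent, so the arguments coincide.
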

\begin{proof}
Let us introduce some notations that we will use throughout in the proof. Let \mbox{ $\bar \lab_v=\lab_v\backslash \lab^{-1}_v$}  be the pairs list of the node's label $v$ such that the last pair is removed, and $|\lab_v|$ the number of pairs in the label of $v$. For two labels $\lab_v$ and $\lab_u$ the step $\circleddash$ is defined by: $$\lab_v\circleddash \lab_u=\mathlarger{\sum}_{i=0}^{|\lab_v|-1}|\lab_v[i][0]-\lab_u[i][0]| + |\lab_v[i][1]-\lab_u[i][1]|.$$

We first define a first function $L(v)$ on the state of each node $v \in V$ as following:
\begin{small}
$$L(v) \equiv \left\{ \begin{array}{ll} s(v)+||\lab_v|-1|+\lab_v \circleddash (\id_v, 0) & \mbox{If } v=r\\
s(v)+||\lab_v|-|\lab_{\parent_v}||+(\bar \lab_v\circleddash\bar\lab_{\parent_v})+|\lab^{-1}_v[0]-\lab^{-1}_{\parent_v}[0]| +|\lab^{-1}_v[1]-\lab^{-1}_{\parent_v}[1]-1| &\mbox{If } \size_{\parent_v}[1]=\id_v\\
s(v)+||\lab_v|-|\lab_{\parent_v}|-1|+(\bar \lab_v\circleddash\lab_{\parent_v})+(\lab^{-1}_v\circleddash(\id_v,0)) & \mbox{Otherwise}
\end{array} \right.$$
\end{small}
% We define the following  potential function $\Lambda$. Let $h$: $V\rightarrow \mathbb{N}$ be the function defined by :
% $$h(v)=|(\size_p[1]-\id_v)+(|\lab_v|-|\lab_p|)+(\bar \lab_v\circleddash\bar\lab_p)+(\lab^{-1}_v[0]-\lab^{-1}_p[0])+(\lab^{-1}_v[1]-\lab^{-1}_p[1]-1)|.$$ 
% And let $g$:$V\rightarrow \mathbb{N}$, be the function defined by : 
% $$g(v)=|(|\lab_v|-(|\lab_p|+1))+(\bar \lab_v\circleddash\lab_p)+(\lab^{-1}_v\circleddash(\id_v,0))|.$$ 
% Moreover, $\forall v\in V\setminus\{r\}$, where $r$ is a root of the tree,  we define the function $L$: $V\rightarrow \mathbb{N}$ by : 
% $$L(v)=\min\{h(v);g(v)\} \text{ and } L(r)= \lab_v\circleddash (\id_r,0)$$

Note that $L(v)\geq 0$ and when $L(v)=0$  the variable $\lab$ has a correct value for a node $v$. Let $\Lambda$: $\Gamma\rightarrow\mathbb{N}$ be the function defined by, $$\Lambda(\gamma)=\mathlarger{\sum}_{\mathsmaller {d=0}}^{\mathsmaller{\delta_T}} \xi_d(\gamma) (n+1)^{n+1-d}$$ 
where $\xi_d(\gamma)$ is the number of nodes $v$ at depth $d$ in ${\tt T}$ with $L(v)\neq 0$.  
Remark that $0\leq \xi_d(\gamma) \leq n$, and $0\leq \Lambda(\gamma)$. Also, the variable $\lab$ has a correct value at every node if and only if  $\Lambda(\gamma)=0$. 
 Let $\gamma(t)$ denote the configuration of the system after round $t$, and suppose that $t>n$.   
By lemma~\ref{lem:size_convergence} and lemma~\ref{lem:size_closure} we prove that $\gamma(t) \in \Gamma_{\size}$. 
It is important to mention that, in $\gamma(t)$ all node $v$ can check if it is a heavy node or light node (see variable $\size$).   Let $d_0$ be the smallest index such that $\xi_{d_0}(\gamma(t))\neq 0$. Since we use a weakly fair scheduler, all the nodes are scheduled during the execution of round $t+1$. Every node $v$ at depth $d<d_0$ does not change its value of variable $\lab$ (see the predicate $\Label$), and therefore $L(v)$ remains zero, so $\xi_d(\gamma(t+1))$ remains zero as well. The nodes at depth $d_0$ change their variable $\lab$ according to the variables $\lab$ and $\size[1]$ of their parent (see Rule $\RLC$). Let $v$ be a node at depth $d_0$.  The parent of $v$  is at depth $d<d_0$. 
Thus, its variable $\lab$ have not changed, and therefore $L(v)$ becomes zero after round $t+1$. As a consequence, $\xi_{d_0}(\gamma(t+1))=0$. Therefore, we get 
$$\Lambda(\gamma(t+1))<\Lambda(\gamma(t))$$
and  thus the system will eventually reach a legitimate configuration for algorithm $\LabA$. To measure the number of rounds it takes to get into a legitimate configuration  for algorithm $\LabA$, observe that $\delta_T$ decreases by at least one at each round. Since $\delta_T \leq n-1$ for every $\gamma\in\Gamma_{\size}$, we get that, starting from any configuration in $\Gamma_{\size}$ configuration, the system reaches a legitimate configuration  for algorithm $\LabA$ in $O(\delta_{\tt T})$ rounds. Using lemma~\ref{lem:size_convergence} and lemma~\ref{lem:size_closure}, we can conclude starting from any arbitrary configuration, the system reaches a legitimate configuration for algorithm $\LabA$ in $O(\delta_{\tt T})$ rounds.
 \end{proof}

\begin{lemma}[Closure for $\LabA$]
 \label{lem:label_closure}
 The set of legitimate configurations for \LabA\/ is closed. That is, starting from any legitimate configuration, the system remains in a legitimate configuration. 
 \end{lemma}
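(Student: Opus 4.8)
The plan is to show that if $\gamma$ is a legitimate configuration for \LabA, then no rule of \LabA\ is enabled at any node, hence $\gamma$ is a fixed point and trivially the system remains in a legitimate configuration. First I would invoke Lemma~\ref{lem:size_closure}: a legitimate configuration is in $\Gamma_{\size}$ (indeed, the three label conditions presuppose the $\size$ values define a consistent heavy/light structure, and one checks directly that $s(v)=0$ everywhere), so Rule $\RSC$ is disabled at every node because Predicate $\SizeC(v)$ holds. It remains to argue that Rule $\RLC$ is disabled at every node, i.e. that $\Label(v)$ holds for all $v$; since $\SizeC(v)$ is already true, $\RLC$ requires $\neg\Label(v)$, so establishing $\Label(v)$ everywhere closes the proof.

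To establish $\Label(v)$ I would proceed by the position of $v$ in $\T$, mirroring the three clauses of the definition of legitimate configuration. For the root $r$: condition~1 gives $\lab_r=(\id_r,0)$ and $\parent_r=\emptyset$, so $\LabelR(r)$ holds and hence $\Label(r)$. For a heavy node $v$: condition~2 says $\lab_v=(\lab_{\parent_v}\backslash\lab^{-1}_{\parent_v}).(\lab^{-1}_{\parent_v}[0],\lab^{-1}_{\parent_v}[1]+1)$; combined with $\size_{\parent_v}[1]=\id_v$ (which holds because $v$ is heavy, from the $\Gamma_{\size}$ structure), $\size_v[0]<\size_{\parent_v}[0]$ (a heavy child has strictly fewer nodes than its parent's subtree), and $\parent_v\in N(v)$ (the parent pointer is valid in a legitimate configuration), one verifies each conjunct of $\Heavy(v)$: in particular $\lab_{\parent_v}\backslash\lab^{-1}_{\parent_v}=\lab_v\backslash\lab^{-1}_v$ and $\lab^{-1}_{\parent_v}[1]+1=\lab^{-1}_v[1]$ follow immediately from condition~2. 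Thus $\LabelNd(v)$, hence $\Label(v)$. For a light node $v$: condition~3 gives $\lab_v=\lab_{\parent_v}.(\id_v,0)$; together with $\size_{\parent_v}[1]\neq\id_v$ and $\size_v[0]\le\size_{\parent_v}[0]/2$ (both consequences of $v$ being light in the $\Gamma_{\size}$ structure) we get $\Light(v)$, hence $\LabelNd(v)$ and $\Label(v)$.

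Having shown $\SizeC(v)\wedge\Label(v)$ for every $v$, both guards of \LabA\ are false everywhere, so $\gamma$ admits no enabled rule of \LabA; the configuration is unchanged by any step, so in particular every reachable configuration equals $\gamma$ and is legitimate, proving closure. The only delicate point — the main obstacle — is justifying that the three label conditions of the legitimate predicate really do force the $\size$ variables into the shape that makes the $\Heavy$/$\Light$ predicates evaluate as intended (e.g. that $\size_{\parent_v}[1]=\id_v$ exactly for heavy children, with the tie-break on identifiers consistent); this is where I would be most careful, appealing to the definition of $\Gamma_{\size}$, the macro $\NbrNdS$, and the tie-breaking convention stated in the text, rather than treating it as automatic.
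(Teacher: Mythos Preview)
Your approach is essentially the same as the paper's: show that in a legitimate configuration both guards of \LabA\ are false, hence no rule is enabled and the configuration is a fixed point. The paper's own proof is considerably terser---it simply asserts that in a legitimate configuration the potential functions satisfy $\Phi(\gamma)=0$ and $\Lambda(\gamma)=0$, from which it concludes $\SizeC(v)$ and $\Label(v)$ hold everywhere and therefore neither $\RSC$ nor $\RLC$ is enabled. You instead verify $\Label(v)$ by a direct case analysis on root/heavy/light nodes, checking each conjunct of $\Heavy(v)$ or $\Light(v)$ against the corresponding clause of the legitimate predicate; this is more explicit but logically the same route.

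Your closing remark is well taken and, if anything, identifies a genuine soft spot that the paper glosses over: the definition of a legitimate configuration speaks only of the labels $\lab_v$ relative to the \emph{structural} heavy/light decomposition of $\T$, and says nothing directly about the $\size$ variables. Both your argument and the paper's implicitly assume $\gamma\in\Gamma_{\size}$ (the paper via $\Phi(\gamma)=0$, you via the $\Heavy$/$\Light$ predicates reading $\size_{\parent_v}[1]$ correctly). Strictly speaking, a configuration with correct labels but arbitrary $\size$ values would satisfy the stated legitimate predicate yet have $\RSC$ enabled. The intended reading---supported by the convergence proof, which passes through $\Gamma_{\size}$ before reaching $\Gamma_\Lambda$---is that legitimacy includes correct $\size$ values; you are right to flag that this should be made explicit rather than taken as automatic.
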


 \begin{proof}
 According to Algorithm $\LabA$, the labeling procedure is done using only Rule $\RLC$. Let $\gamma$ a legitimate configuration. For each node $v$ in $\gamma$, we have $\Phi(\gamma)=0$ and $\Lambda(\gamma)=0$. Moreover in $\gamma$, Predicates $\SizeC(v)$ and $\Label(v)$ are true and Rules $\RLC$ and $\RSC$ cannot be executed by any node $v \in V$. In conclusion, starting from a legitimate configuration for algorithm $\LabA$ the system remains in a legitimate configuration.
 \end{proof}

The following theorem is a direct consequence from Lemmas~\ref{lem:label_convergence} and \ref{lem:label_closure}.

 \begin{theorem}
\label{thm:self-stab_lab}
 Algorithm $\LabA$ is self-stabilizing for the informative nearest common ancestor labeling scheme.
 \end{theorem}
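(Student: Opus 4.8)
The plan is to derive Theorem~\ref{thm:self-stab_lab} directly from the two lemmas that immediately precede it, exactly as the excerpt announces. Recall that the definition of self-stabilization (Definition~1) for an algorithm $\mathcal{A}$ with respect to a specification $Spec$ requires two things: a non-empty legitimate predicate $\mathcal{L}_\mathcal{A}$ every configuration of which satisfies $Spec$, together with \textsf{(i)} closure of $\mathcal{L}_\mathcal{A}$ under computations of $\mathcal{A}$, and \textsf{(ii)} convergence, i.e. every computation from an arbitrary configuration eventually reaches a configuration in $\mathcal{L}_\mathcal{A}$. So the first step is simply to instantiate: take $\mathcal{A} = \LabA$, take $\mathcal{L}_\mathcal{A}$ to be the predicate characterising the legitimate configurations for the labeling scheme (the three-clause Definition just before Lemma~\ref{lem:size_convergence}), and take $Spec$ to be the predicate stating that every node $v$ carries the correct nearest-common-ancestor label, i.e. $\lab_v$ is the list of (heavy-path-root identifier, distance) pairs along $\path(r,v)$, so that the decoder $\Lca$ of Figure~\ref{fig:nca} returns, for every pair $u,v$, the label of their nearest common ancestor.

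Next I would check the hypotheses of Definition~1 one by one. Non-emptiness of $\mathcal{L}_\mathcal{A}$: in any fixed tree ${\tt T}$ the three clauses of the legitimate-configuration definition determine the values of $\size_v$ and $\lab_v$ at every node uniquely by a bottom-up then top-down pass, so at least one legitimate configuration exists. The implication ``$\mathcal{L}_\mathcal{A} \Rightarrow Spec$'': here I would argue that the inductive clauses of the legitimate-configuration definition force each $\lab_v$ to be exactly the concatenation of pairs obtained along the heavy/light decomposition of $\path(r,v)$, which is precisely Peleg's labeling; since the decoder $\Lca$ is (by~\cite{Peleg00}, restated in Section~\ref{subsec:decoder}) correct on such labels, $Spec$ holds. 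Condition~\textsf{(i)}, closure, is exactly the content of Lemma~\ref{lem:label_closure}. Condition~\textsf{(ii)}, convergence, is exactly the content of Lemma~\ref{lem:label_convergence}, which guarantees that from any configuration a legitimate one is reached in $O(\delta_{\tt T})$ rounds. Assembling these four observations yields the theorem.

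I do not expect a serious obstacle here, since the theorem is by design a bookkeeping corollary; the real work has already been done in Lemmas~\ref{lem:size_convergence}--\ref{lem:label_closure}. The only point that needs a little care — and is the part I would spend the most words on — is making the specification predicate $Spec$ precise and verifying the ``$\mathcal{L}_\mathcal{A} \Rightarrow Spec$'' direction, because the legitimate-configuration definition is phrased purely in terms of local relations between a node's label and its parent's label, whereas $Spec$ is a global statement about what the decoder computes. Bridging that gap requires one short induction on the depth of a node to show the local relations unfold into the correct global label, plus an appeal to the correctness of the decoder from~\cite{Peleg00}. Everything else is a direct citation of the preceding lemmas and of Definition~1.

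\begin{proof}
We verify the two conditions of Definition~1 with $\mathcal{A}=\LabA$, with $\mathcal{L}_\mathcal{A}$ the legitimate-configuration predicate for the labeling scheme, and with $Spec$ the predicate asserting that for every pair of nodes $u,v$ the decoder satisfies $\Lca(\lab_u,\lab_v)=\lab_{w}$, where $w$ is the nearest common ancestor of $u$ and $v$ in ${\tt T}$.

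First, $\mathcal{L}_\mathcal{A}$ is non-empty: in the fixed tree ${\tt T}$, the value of $\size_v$ is determined at every node by the bottom-up recurrence $\size_v=\NbrNdS(v)$ (and $(1,\bot)$ at leaves), and then the three clauses of the legitimate-configuration definition determine $\lab_v$ uniquely in a top-down fashion from $\lab_r=(\id_r,0)$; this assignment satisfies all clauses, hence is legitimate. Second, $\mathcal{L}_\mathcal{A}$ implies $Spec$: by induction on the depth of $v$, clauses 2 and 3 show that $\lab_v$ equals the list of pairs (identifier of the root of a heavy path, distance to it) encountered along $\path(r,v)$, which is exactly the label assigned by Peleg's scheme~\cite{Peleg00}; since the decoder $\Lca$ of Figure~\ref{fig:nca} is the decoder of~\cite{Peleg00} written in our notation, it returns the label of the nearest common ancestor on every such pair of labels, so $Spec$ holds in every configuration of $\mathcal{L}_\mathcal{A}$.

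Condition~\textsf{(i)} (closure) is precisely Lemma~\ref{lem:label_closure}: starting from any configuration in $\mathcal{L}_\mathcal{A}$, no rule of $\LabA$ is enabled, so the system remains in $\mathcal{L}_\mathcal{A}$ and, by the previous paragraph, keeps satisfying $Spec$. Condition~\textsf{(ii)} (convergence) is precisely Lemma~\ref{lem:label_convergence}: from an arbitrary configuration in $\Gamma$, the system reaches a configuration in $\mathcal{L}_\mathcal{A}$ in $O(\delta_{\tt T})$ rounds. Both conditions of Definition~1 therefore hold, so $\LabA$ is self-stabilizing for the informative nearest common ancestor labeling scheme.
\end{proof}
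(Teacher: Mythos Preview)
Your proposal is correct and follows exactly the approach the paper intends: the paper does not give a proof at all, stating only that the theorem ``is a direct consequence from Lemmas~\ref{lem:label_convergence} and~\ref{lem:label_closure}.'' Your write-up simply unpacks this by instantiating Definition~1, checking non-emptiness and $\mathcal{L}_\mathcal{A}\Rightarrow Spec$, and then citing the two lemmas for closure and convergence, which is entirely in line with (and somewhat more careful than) the paper's treatment.
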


%==============================================================
\section{Self-Stabilizing Minimum Spanning Tree Algorithm}
\begin{figure}[htbp]
\begin{center}
\includegraphics[scale=0.6]{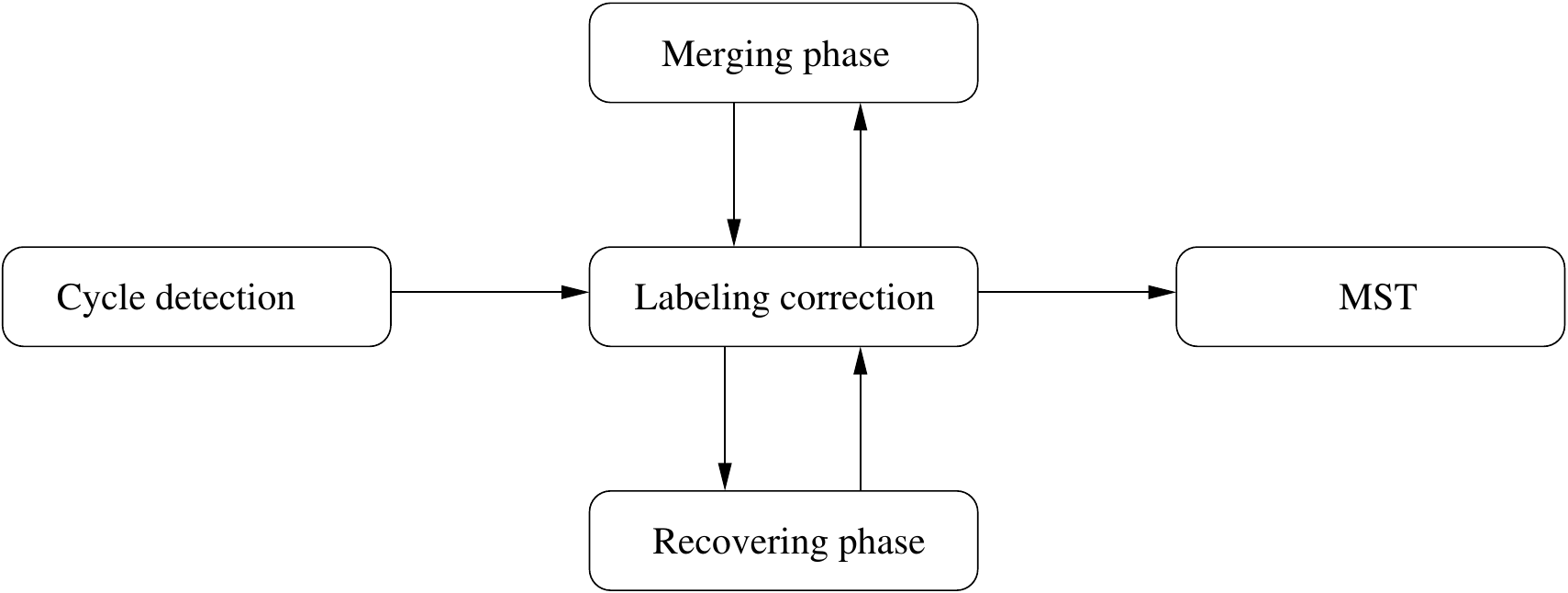}
\caption{Relation between the different phases of Algorithm $\MSTA$.}
\label{fig:schemaRules}
\end{center}
\end{figure}

In this section we describe our self-stabilizing algorithm for constructing the minimum spanning tree, called \MSTA\/ algorithm.
% \MSTA\/ algorithm executes two phases: the first phase corrects the existing structure (Tree, cycle) and
%he second phase merges the fragments. 
Our  \MSTA\/ algorithm uses the ``blue rule" to construct a spanning tree and the ``red rule" to recover from invalid configurations (see section \ref{sec:over}). 
In both cases, it uses   \LabA\/ algorithm to identify fragments and fundamental cycles. We assume in the following that the \textit{merging} phases have a higher priority than the \textit{recovering} phases. That is, the system recovers from an invalid configuration if and only if no merging is possible. 
%\lelia{Revoir: It is important to note that, starting from an arbitrary configuration cycle free $\gamma$, to reach a MST the number of recovering steps is fixe, so the number of merging steps after recovering steps is also fixe. Therefore the number of fragments in $\gamma$ is finite, so the number of fragment merging step is limited by the number of fragments.  As the consequence, the order of merging and recovering steps does not change the total number of this steps.}

Unfortunately, due to arbitrary initial configuration, the structure induced by the parent pointer of all nodes may contain cycles. We use first a well known approach to break cycles before giving a detailed description of merging and recovering phases.

Figure~\ref{fig:schemaRules} illustrates the different phases of Algorithm $\MSTA$. Starting  from an arbitrary configuration, first all the cycles are destroyed then fragments are defined and correctly labeled using the parent pointers. Based on the label of nodes, the minimum \emph{outgoing edge} (i.e., edge whose extremities belong to different fragments) of each fragment is computed in a bottom-up fashion, and allowing to a pair of fragments which have selected the same outgoing edge to be merged together through this edge. A \emph{merging step} gives a new fragment which is the result of the merging of a pair of fragments. When a new fragment is created, the nodes of this fragment have to compute their new label. This process is repeated until there is only one remaining fragment spanning all the nodes of the network. In this case, the recovering phase can begin by detecting that no outgoing edge can be selected. To handle this phase each fragment has to compute its \emph{internal edges} (i.e., edges whose extremities belong to the same fragment) and to identify the \emph{nearest common ancestor} based on the labels of the edge extremities. The weight of the internal edges are broadcasted up in the tree from the leaves to the root. Let $e=\{u,v\}$ an internal edge of Tree ${\tt T}$, due to the ``red rule" if an edge $f$ of the path $\path(u,\Lca(\lab_u,\lab_v))$ in ${\tt T}$ has a weight bigger than $e$, then $e$ is an \emph{valid edge} since $e$ is part of an ${\tt MST}$ (by ``red rule''). More precisely, if during the bottom-up transmission of the weight of $e$, a node $u$ has a parent link edge $f$ such that $w(f)>w(e)$ then $f$ is deleted from the tree ${\tt T}$ and $u$ becomes the root of a new fragment.

We present first the variables used by Algorithm $\MSTA$, then we describe the approach used to delete the cycles, followed by the merging and recovering phases. Finally, we show the correctness and the time and memory complexities of the algorithm.

\subsection{Variables}

We list below the eight variables maintained at each node $v \in V$:
\begin{itemize}
\item The three variables described in Section~\ref{sec:label} are used, i.e., variables $\parent_v, \size_v$ and $\lab_v$.
\item The distance of each node $v$ from the root of the fragment is stored in variable $d_v$.
\item For handling the \emph{blue rule} mentioned in section~\ref{sec:over}, the minimum outgoing edge of each fragment is stored in Variable $\m_v$. This edge is composed of three elements: the edge weight, and the identifiers of the edge extremities. The $i$-th element of $\m_v$ is accessed by $\m_v[i]$ with $i\in\{0,...,2\}$.
\item Finally to broadcast the internal edges in the recovering phase, a last variable $\mIns_v$ stores three elements related to an internal edge: the edge weight, and the labels of the edge extremities. As for Variable $\m_v$, the $i$-th element of $\mIns_v$ is accessed by $\mIns_v[i]$ with $i\in\{0,...,2\}$.
\end{itemize}

%==============================================================
\subsection{Cycles detection and Labels correction}
%\lelia{revoir: surtout les deux racines}
The previous section was dedicated to the labeling procedure for an unique tree, due to the arbitrary starting configuration, the network can contain a forest of subtrees (several fragments) and cycles. Therefore, the labeling procedure described in previous section (using Rules $\RSC$ and $\RLC$) is executed separately in each subtree in Algorithm $\MSTA$. However, to apply this procedure it is crucial to detect the cycles in the fragments induced by the parent pointers. To this end, we use a common approach used to break cycles in a spanning structure~\cite{DolevIM97}. Each node computes its distance (in hops) to the root by using the distance of its parent plus one. By following this procedure, there is at least a node which has a distance higher or equal than the distance of its parent in the fragment. Therefore, this condition is used at each node to detect a cycle. In this case, a node $v$ deletes its parent pointer by selecting no parent and a new fragment rooted at $v$ is created. Unfortunately, due to the arbitrary initial configuration a cycle can be falsely detected because of erroneous distances values at $v$ and its parent. This mechanism based on distances ensures that after $O(n)$ rounds the network is cycle free. The destruction of cycles is managed by rule $\RCo$. 
% Note that, the distance correction is not applied when the node is the new root of the merging step (predicate $\NRoot$), or a child of its new root (predicate $\cNRoot$). 

%The adaptation of the set of children is due to the merging phase, so no complication is made in fragment with only one root. 
% During the merging phase the root of a fragment can be moved, so we will show in the subsection dedicated to the merging how it is managed by the algorithm.
When all the cycles have been deleted, the labeling procedure is applied in Algorithm $\MSTA$. Note that the cycle detection must have a higher priority over the labeling procedure. To this end, Rule $\RCo$ is the first rule to execute and in exclusion with Rules $\RSC$ and $\RLC$ in Algorithm $\MSTA$. Furthermore, the labeling scheme must also have a higher priority over the merging and recovering phases. Indeed, the label of the nodes are used to identify the internal and outgoing edges of a fragment (see Figure~\ref{fig:fusion}). To guarantee the execution priority, the rules of the labeling scheme can only be executed when Predicate $Distance(v)$ is satisfied at node $v$. In the same way, the rules of merging and recovering phases can only be executed at a node $v$ when Predicate $CorrectF(v)$ is satisfied at $v$.
% when $\RSC$ and $\RLC$ are applied at a node $v$ then variables $\m_v$ and $\mIns_v$ are reseted (see Rule $\RCo$ and Predicate $\Correction$). Moreover, the labels given by Algorithm \LabA\/ enable to identify the ingoing and outgoing edges (see Figure~\ref{fig:fusion}).

\begin{figure}[!ht]
\fbox{
\begin{minipage}{15cm}
% \begin{description}
\begin{tabular}{lll}
%\item[-] $\Child(v) = \{\parent_u=\id_v: u\in \N_v \wedge (p_v\neq \id_u)\}$
% \item[-] $\cNRoot(v)\equiv (\parent_v=u) \wedge (\parent_u=v) \wedge (\lab_v=(\m_u[1],\m_u[2]+1))$
% \item[-] $\NRoot(v)\equiv (\parent_v=u) \wedge (\parent_u=v) \wedge (\lab_u=(\m_v[1],\m_v[2]+1))$
% \item[-] $\Bad(v)\equiv ((\parent_v=u) \wedge (\parent_u=v)) \Rightarrow ((\lab_v\neq(\m_u[1],\m_u[2]+1) ) \wedge (\lab_u\neq(\m_v[1],\m_v[2]+1))$
$\Distance(v)$ & $\equiv$ & $(p_v=\emptyset \wedge d_v=0) \vee (p_v\neq \emptyset \wedge d_v=d_{p_v}+1)$\\
% \item[-] $\Correction(v) \equiv \SizeC(v) \wedge \Label(v) \wedge (\neg\Distance(v) \vee \Bad(v))$\\
$\SizeC(v)$ & $\equiv$ & $\Isleaf(v) \vee (\Child(v)\neq\emptyset  \wedge \size_v=\NbrNdS(v))$\\
$\Label(v)$ & $\equiv$ & $\LabelR(v) \vee \LabelNd(v)$\\
$CorrectF(v)$ & $\equiv$ & $\Distance(v) \wedge \SizeC(v) \wedge \Label(v)$\\
\end{tabular}
% \end{description}
\end{minipage}
}
\caption{Predicates used by Rule $\RCo$ and labeling rules.}
\label{fig:predicatesRCo}
\end{figure}

%\subsubsection{Cycle destructions}
% We use a well known solution to break cycles in a spanning structure~\cite{DolevIM97}. Each node compute its distance (in hops) to the root. In a cycle if the distance of each node is updated according with the distance of its parent in the tree. Thus, at least one node will have its distance inferior to the distance of its parent. As a consequence,  if a node has a distance smaller than its parent we can delete the pointer to its parent. Unfortunately, due to the arbitrary initial configuration sometimes we can break links whereas they are not in a cycle. This mechanism based on distances ensures that after $O(n)$ rounds the network is cycle free. The destruction of cycles is managed by rule $\RCo$. Note that, the distance correction is not applied when the node is the new root of the merging step (predicate $\NRoot$), or a child of its new root (predicate $\cNRoot$). 
\bigskip

% \begin{minipage}{16cm}
% \hrulefill
% \begin{description}
% \item[$\RCo$: [\ Correction]]
% \item \textbf{If} $\Correction(v)$ \textbf{Then}\\
% \hspace{1,5cm}$\m_v=\emptyset;\mIns_v=\emptyset$\\
% \hspace{1,5cm}\textbf{If} $(p_v=\emptyset) \wedge d_v\neq 0$ \textbf{Then} $d_v:=0;$\\
% \hspace{1,5cm}\textbf{If} $(p_v\neq\emptyset) \wedge \neg \cNRoot(v) \wedge \neg\NRoot(v) \wedge (d_{p_v}+1<d_v)$ \textbf{Then} $d_v:=d_{p_v}+1;$\\
% \hspace{1,5cm}\textbf{If} $\big ((p_v\neq\emptyset)   \wedge (d_{p_v}>d_v)\big) \vee \Bad(v)$ \textbf{Then} $\parent_v:=\emptyset;\ \lab_v:=(\id_v,0); d_v:=0;$\\
% \end{description}
% 
% \vspace*{-0,5cm}
% \hrulefill
% \end{minipage}

\begin{minipage}{16cm}
\hrulefill
\begin{description}
\item[$\RCo$: [\ Correction]]
\item \textbf{If} $\neg \Distance(v)$ \textbf{Then}\\
\hspace{1,5cm}$\m_v=\emptyset;\mIns_v=\emptyset$\\
\hspace{1,5cm}\textbf{If} $(p_v=\emptyset) \wedge d_v\neq 0$ \textbf{Then} $d_v:=0;$\\
\hspace{1,5cm}\textbf{If} $(p_v\neq\emptyset) \wedge (d_{p_v}+1<d_v)$ \textbf{Then} $d_v:=d_{p_v}+1;$\\
\hspace{1,5cm}\textbf{If} $(p_v\neq\emptyset) \wedge (d_{p_v} \geq d_v)$ \textbf{Then} $\parent_v:=\emptyset;\ \lab_v:=(\id_v,0); d_v:=0;$
\end{description}
\hrulefill
\vspace*{0.5cm}
\end{minipage}

We give below the rules associated with the labeling encoder (given in the previous section). In order to use these two rules for the MST construction, we add Predicate $\Distance(v)$ in the guards. This allow to disable these rules when a cycle is detected with Rule $\RCo$.

\begin{minipage}{16cm}
\vspace*{0.5cm}
\hrulefill
\begin{description}
\item[$\RSC$: [\ Size correction]]
\item\textbf{If} $\Distance(v) \wedge \neg \SizeC(v)$ \textbf{Then}\\
\hspace*{0,1cm}\textbf{If} $\Child(v)=\emptyset$ \textbf{then}  $\size_v:=(1,\bot)$\\
\hspace*{0,1cm}\textbf{Else} $\size_v:= \NbrNdS(v);$
\item[$\RLC$: [\ Label correction]]
\item\textbf{If} $\Distance(v) \wedge \SizeC(v) \wedge \neg \Label(v)$ \textbf{Then}\\
\hspace*{0,1cm}\textbf{If} $\size_{\parent_v}[1]=\id_v$ \mbox{\textbf{then }} $\lab_v:=\lab_{\parent_v};\lab^{-1}_v[1]:=\lab^{-1}_v[1]+1;$\\
\hspace*{0,1cm}\textbf{Else} $\lab_v:=\lab_{\parent_v}.(\id_v,0)$
\end{description}
\hrulefill
\end{minipage}

%==============================================================
\subsection{Merging phase}

When the graph induced by the parent pointers is cycle free and every node $v$ of a fragment $F$ has a correct label (see Predicate $CorrectF(v)$), then every node $v \in F$ is able to determine if $F$ spans all the nodes of the network or not. This knowledge is given by the label of the nodes, more precisely using the decoder given in Subsection~\ref{subsec:decoder}. Indeed, given a non-tree edge $e=\{u,v\}$, if the nodes $u$ and $v$ have no common ancestor then $u$ and $v$ are in two distinct fragments. In this case, the merging phase can be executed at $u$ and $v$. A merging phase is composed of several \emph{merging steps} in which at least two fragments are merged. Each merging step is performed following four steps: 
\begin{enumerate}
\item The root of each fragment $F$ identifies the minimum-weight outgoing edge $e=(u,v)$ of its fragment (see Rule $\RMin$).
\item After the computation of $e$ each node $x$ on the path between the root of $F$ and $v \in F$ computes in variable $newp_x$ its future parent (see rule $\RF$). The nodes in the sub-tree rooted at every node $x$ executes also Rule $\RF$.
\item When the two merging fragments have finished the two first steps, then each node of these two fragments can compute their future distance (see Rule $\RDist$).
\item Finally, every node $v$ belonging to these two fragments copies the content of its variables $newp_v$ (resp. $newd_v$) into variable $\parent_v$ (resp. $d_v$).
\end{enumerate}

Let us proceed with a more detailed description of the these steps. We process the computation of the minimum-weight outgoing edge of each fragment in a bottom-up manner (see Rule $\RMin$). Each node $v$ can identify its adjacent outgoing edges $e=\{u,v\}$ by computing locally that $e$ has no nearest common ancestor using the labels of $u$ and $v$. This is done via the decoder given in subsection~\ref{subsec:decoder} and Macro $\OEd(v)$ at $v$. Each node $v$ computes the minimum-weight outgoing edge of its sub-tree (given by Macro $\Fusion(u)$) by selecting the edge of minimum-weight among its adjacent outgoing edges (given by Macro $\FA(v)$) and the one given by its children (given by Macro $\FC(v)$). The weight and the identifier of the extremities of the minimum-weight outgoing edge are stored in variable $\m_v$ at $v$. All these information will be used for the merging step. Figure~\ref{fig:fusion} depicts the selection of the minimum outgoing edge for two fragments.

%-----------
\begin{figure}[!ht]
\begin{center}
 \includegraphics[scale=0.55]{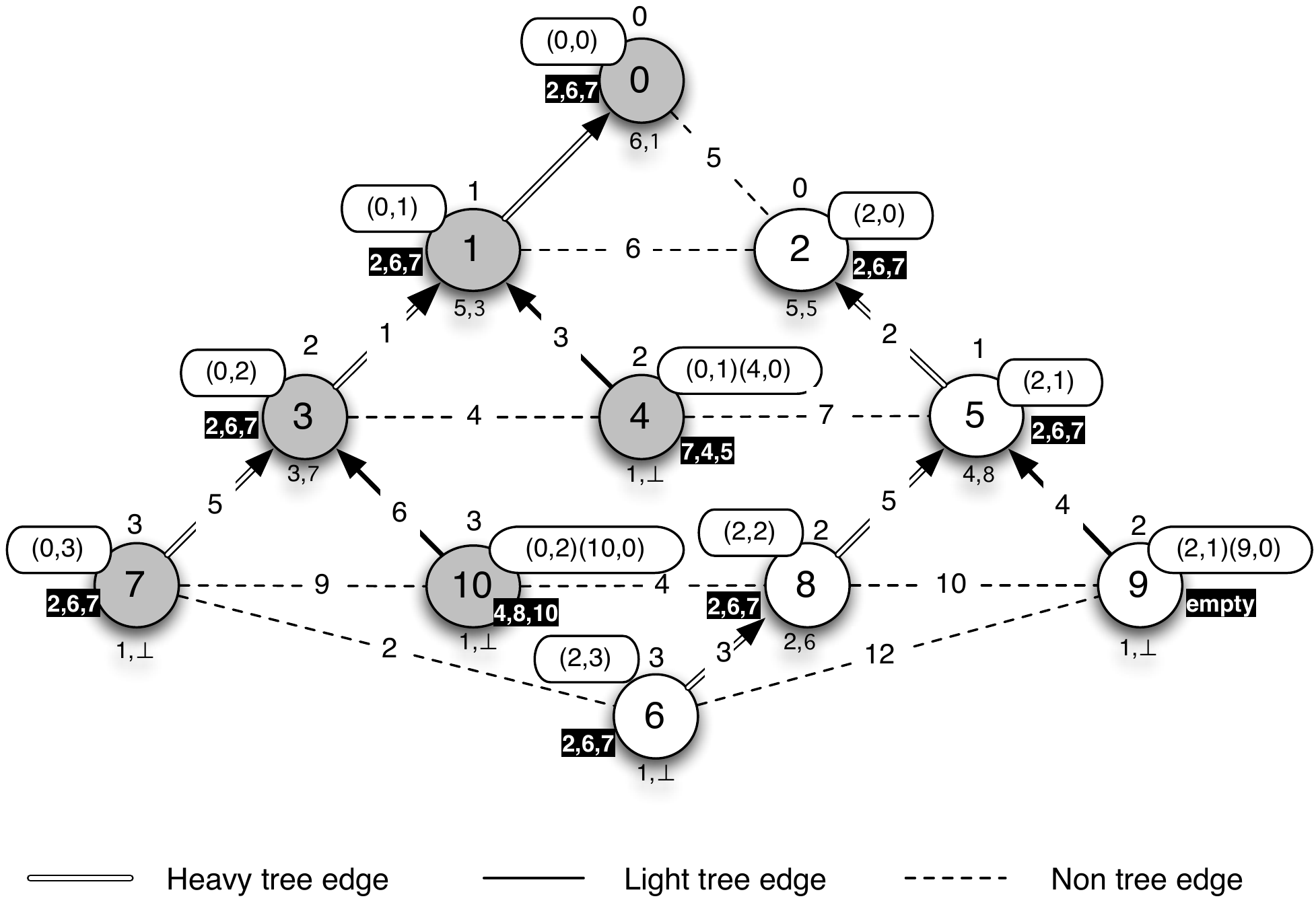}
\caption{Minimum weight outgoing edge computation based on Nearest Common Ancestor Labeling scheme for a forest:  The white bubble at each node $v$ corresponds to the label of the node. The black bubble at each node represent the selection of minimum outgoing edge. The information under the node corresponds to the variable $\size$ and the information on top of the node represent the distance of the node from the root.}
\label{fig:fusion}
\end{center}
\end{figure}
\bigskip

% The merging step is started when the node has a correct distance and a correct label (see Predicate $Correction$).
% The merging and recovering steps do not destabilize the distance or the label.

% \begin{figure}[!tbh]
% \centering
% \subfigure[Reorientation and Merging step]{
% \includegraphics[width=.8\textwidth]{Merg1}
% \label{fig:Merg1}
% }
% \subfigure[Labeling]{
% \includegraphics[width=.8\textwidth]{Merg2}
% \label{fig:Merg2}
% }
% \caption{Merging Steps}
% \end{figure}

% \begin{figure}
% \fbox{
% \begin{minipage}{16cm}
% % \begin{scriptsize}
% \begin{tabular}{lll}
% $\OEd(v)$ & $=$ & $\{\forall u\in N_v \setminus \{ \Child(v) \cup \{\parent_v\}\}| \Lca(u,v)= \emptyset\}$\\
% % $\Nd(v)$ & $=$ & $\Big(1+\mathlarger{\sum}_{u \in \Child(v)} \m_u[3]\Big)$\\
% $\FC(v)$ & $\equiv$ & $\min\{ \m_u[0]:u\in \Child(v)\}$\\
% $\FA(v)$ & $\equiv$ & $\min\{w\{v,u\}:u\in \OEd(v)\}$\\
% $\Fusion(v)$ & $\equiv$ & $\min\{\FC(v),\FA(v)\}$\\
% $\Tnbr(v)$ & $\equiv$ & $\Big(1+\mathlarger{\sum}_{u \in \Child(v)} \m_u[3]\Big)=\size_v[0]$\\
% %$(\m_v 1=\m_u)$ 
% $\Vout(v)$& $\equiv$ & $(\m_v[0]=\m_u[0])\wedge (\m_v[1]=\m_u[1])\wedge(\m_v[2]+1=\m_u[2])$\\
% \end{tabular}\\
% % \end{scriptsize}
% \end{minipage}
% }
% \caption{Predicates and actions used by the \MSTA\/ for  the merging fragments process.}
% \label{fig:predicatesb}
% \end{figure}

\begin{figure}[ht]
\fbox{
\begin{minipage}{16cm}
% \begin{scriptsize}
\begin{tabular}{lll}
$\Child(v)$ & $=$ & $\{u \in \N_v : \parent_u=\id_v\}$\\
\vspace*{0.2cm}$\OEd(v)$ & $=$ & $\min\{(u,v): u \in N_v \setminus (\Child(v) \cup \{\parent_v\}) \wedge \Lca(\lab_u,\lab_v)=\emptyset\}$\\
% $\Nd(v)$ & $=$ & $\Big(1+\mathlarger{\sum}_{u \in \Child(v)} \m_u[3]\Big)$\\
\vspace*{0.2cm}$NCand(val)$ & $=$ & $\left \{ \begin{array}{ll} \min\{(u,v) \in \OEd(v) : w(u,v)=val\} & \mbox{If } val=\FC(v)\\ \min\{\m_u[1]: u \in \Child(v) \wedge \m_u[0]=val\} & \mbox{Otherwise} \end{array} \right .$\\
$\FC(v)$ & $=$ & $\min\{ \m_u[0]:u\in \Child(v)\}$\\
$\FA(v)$ & $=$ & $\min\{w\{v,u\}: (u,v) \in \OEd(v)\}$\\
\vspace*{0.2cm}$\Fusion(v)$ & $=$ & $\min\{\FC(v),\FA(v)\}$\\
% $\Tnbr(v)$ & $\equiv$ & $\Big(1+\mathlarger{\sum}_{u \in \Child(v)} \m_u[3]\Big)=\size_v[0]$\\
%$(\m_v 1=\m_u)$ 
% $\Vout(v)$& $\equiv$ & $(\m_v[0]=\m_u[0])\wedge (\m_v[1]=\m_u[1])\wedge(\m_v[2]+1=\m_u[2])$\\
\vspace*{0.2cm}$NewParent(v)$ & $=$ & $\left \{ \begin{array}{ll} \min\{u \in N_v: (u,v)=\m_v[1]\} & \mbox{If } \m_v[0]=\FA(v) \wedge \m_v[1] \in \OEd(v)\\ \min\{u \in \Child(v): \m_u=\m_v\} & \mbox{Otherwise} \end{array} \right .$\\
\vspace*{0.2cm}$NewDist(v)$ & $=$ & $\left \{ \begin{array}{ll} 0 & \mbox{If } newp_{newp_v}=v \wedge \id_{newp_v}>\id_v\\ 1 & \mbox{If } newp_{newp_v}=v \wedge \id_{newp_v}<\id_v\\ newd_{newp_v}+1 & \mbox{Otherwise} \end{array} \right .$\\
$NewChild(v)$ & $=$ & $\{u: (u \in (C(v) \cup \{\parent_v\}) \wedge newp_u=v)$\\
& & \hspace*{0.8cm}$\vee (u \in N_v \wedge newp_u=v \wedge newp_v=u \wedge \id_{newp_v}>\id_v)\}$\\
\end{tabular}
\begin{tabular}{lll}
$CorrectF(v)$ & $\equiv$ & $\Distance(v) \wedge \SizeC(v) \wedge \Label(v)$\\
$Reorientation(v)$ & $\equiv$ & $\parent_v=\emptyset \vee (\m_{\parent_v}=\m_v \wedge newp_{\parent_v}=v)$\\
$ChangeNewP(v)$ & $\equiv$ & $(Reorientation(v) \wedge newp_v \neq NewParent(v)) \vee (\m_{\parent_v} \neq \m_v \wedge newp_v \neq \parent_v)$\\
$ChangeNewD(v)$ & $\equiv$ & $(newp_{newp_v}=v \wedge newd_v>1) \vee (\parent_{newp_v}=v \wedge newd_v \neq newd_{newp_v}+1)$\\
& & $\vee (newd_{\parent_v} \neq d_{\parent_v} \wedge newd_v \neq newd_{\parent_v}+1)$\\
$CopyVar(v)$ & $\equiv$ & $(\forall u \in NewChild(v), d_u=newd_u \wedge \parent_u=newp_u)$\\
& & $\wedge (\parent_v \neq newp_v \vee d_v \neq newd_v)$\\
\end{tabular}\\
% \end{scriptsize}
\end{minipage}
}
\caption{Macros and predicates used by Algorithm \MSTA\/ for the merging.}
\label{fig:predicatesb}
\end{figure}

When the computation of the minimum-weight outgoing edge $e=(u,v)$ is finished at the root $r$ of a fragment $F$ (i.e., $\m_r=\Fusion(r)$), then $r$ can start the computation of the future parent pointers in $F$ (Predicate $ChangeNewP(r)$ is satisfied), done in a top-down manner (see rule $\RF$). Let $u$ be the extremity of $e$ of minimum identity between $u$ and $v$. If $e$ is selected as the minimum-weight outgoing edge of two fragments $F$ and $F'$, then $u$ will become the new root of Fragment $F''$ resulting from the merging between $F$ and $F'$. Otherwise, $e$ is the minimum-weight outgoing edge selected only by a single fragment, w.l.o.g. let $F$. In this case, $F$ will wait for that $e$ is selected as the minimum-weight outgoing edge of $F'$. In the two cases, every node $v$ of a fragment in a merging step computes its future parent pointer in variable $\m_v$. Each node on the path from the root of the fragment leading to the minimum-weight outgoing edge selects its child on this path as its future parent, while the other nodes select their actual parent.

When $e$ is selected as the minimum-weight outgoing edge by $F$ and $F'$ and the computation of the future parent is done (i.e., $\neg ChangeNewP(v)$ is satisfied), then the future distance is computed in variable $newd_v$ by each node $v$ in $F \cup F'$ (Predicate $ChangeNewD(v)$ is satisfied), in a top-down manner following the parent relation given by variable $newp_v$ (see Rule $\RDist$). Note that the extremity of $e$ with the minimum identifier becomes the root of the new fragment with a zero distance. Finally, when the future parent and distance are computed by every node $v$ in $F \cup F'$ then $v$ can execute Rule $\RFEnd$ (see Predicate $CopyVar(v)$) to copy the content of variable $newp_v$ (resp. $newd_v$) into variable $\parent_v$ (resp. $d_v$). Note that this is done in a bottom-up fashion following the parent relation given by variable $newp_v$ in order to not destabilize Fragment $F$ or $F'$.

\hrulefill
\begin{description}
\item $\RMin$: [ \textbf{Minimum computation} ] \\
\hspace*{0,2cm}\textbf{If} $CorrectF(v) \wedge \big(\m_v[0] \neq \Fusion(v) \neq \emptyset \big)  $ \textbf{Then}\\
\hspace*{0,8cm} $\m_v:=(\Fusion(v), NCand(\Fusion(v)));$\\
\item $\RF$: [ \textbf{Merging} ] \\
\textbf{If} $CorrectF(v) \wedge \big(\m_v[0] = \Fusion(v) \neq \emptyset \big) \wedge ChangeNewP(v)  $ \textbf{Then}\\
\hspace*{0,5cm} $newd_v:=\infty;$\\
\hspace*{0,5cm} \textbf{If} $Reorientation(v)$ \textbf{Then} $newp_v:=NewParent(v);$\\
\hspace*{0,5cm} \textbf{Else} $newp_v:=\parent_v$\\
\item $\RDist$: [ \textbf{New distance} ] \\
\textbf{If} $CorrectF(v) \wedge \neg ChangeNewP(v) \wedge ChangeNewD(v)$ \textbf{Then}\\
\hspace*{0,5cm} $newd_v:=NewDist(v);$\\
\item $\RFEnd$: [ \textbf{End of merging} ] \\
\textbf{If} $\Distance(v) \wedge \neg ChangeNewP(v) \wedge \neg ChangeNewD(v) \wedge CopyVar(v)$ \textbf{Then}\\
\hspace*{0,5cm} $\parent_v:=newp_v; d_v:=newd_v; \m_v:=\emptyset;$\\
\hspace*{0,5cm} \textbf{If} $newd_v=0$ \textbf{Then} $\parent_v:=\emptyset; newp_v:=\emptyset;$
\end{description}
\hrulefill

%-------------
\subsection{Recovering phase}

This subsection is dedicated to the description of the recovering phase. Recall that, since the system can start from an arbitrary configuration $\gamma$, edges which do not belong to any {\tt MST} can be part a fragment in $\gamma$. Given a fragment $F$, the addition of an edge $e=(u,v)$ which do not belong to $F$ creates a unique cycle, called \emph{fundamental cycle} related to $e$ and denoted $C_e$ (i.e., $C_e=\path(u,\Lca(\lab_u,\lab_v))\cup \path(\Lca(\lab_u,\lab_v),v)\cup \{e\}$). Thus, the "red rule" may not be satisfied for every constructed fragment, i.e., for some fundamental cycle defined by an internal edge of a fragment the maximum edge weight belong to the fragment. To identify these edges, we verify that for each internal edge $e$ there is no edge in the fundamental cycle $C_e$ with a higher weight than $w(e)$. To this end, in a fragment $F$ the label of the nodes are used to identify the edges $e=\{u,v\}$ which do not belong to $F$ such that $u$ and $v$ have a common ancestor.

\begin{figure}[!tbh]
\centering
\subfigure[]{
\includegraphics[width=.25\textwidth]{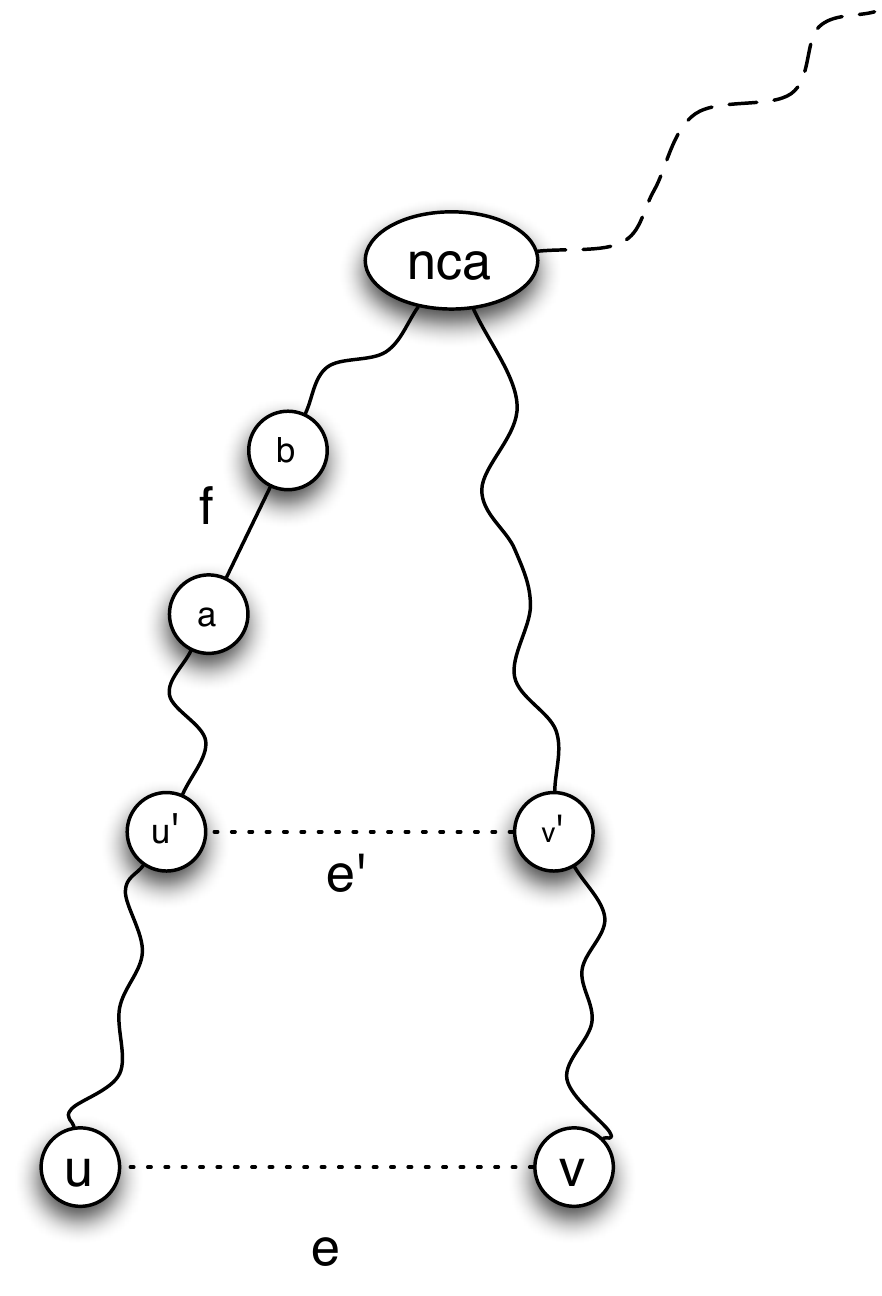}
\label{fig:IdemNCA}
}
\subfigure[]{
\includegraphics[width=.24\textwidth]{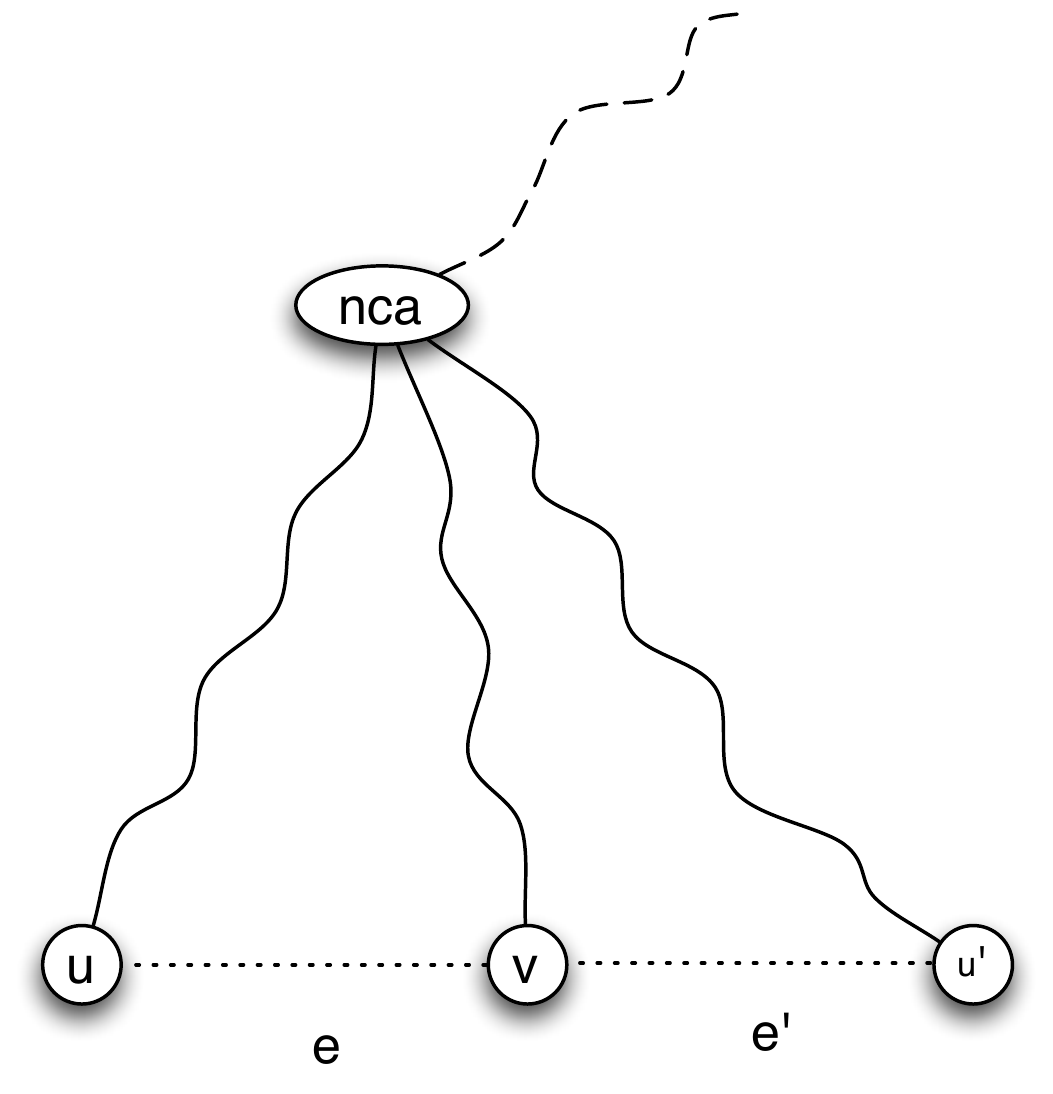}
\label{fig:IdemNCA2}
}
\subfigure[]{
\includegraphics[width=.435\textwidth]{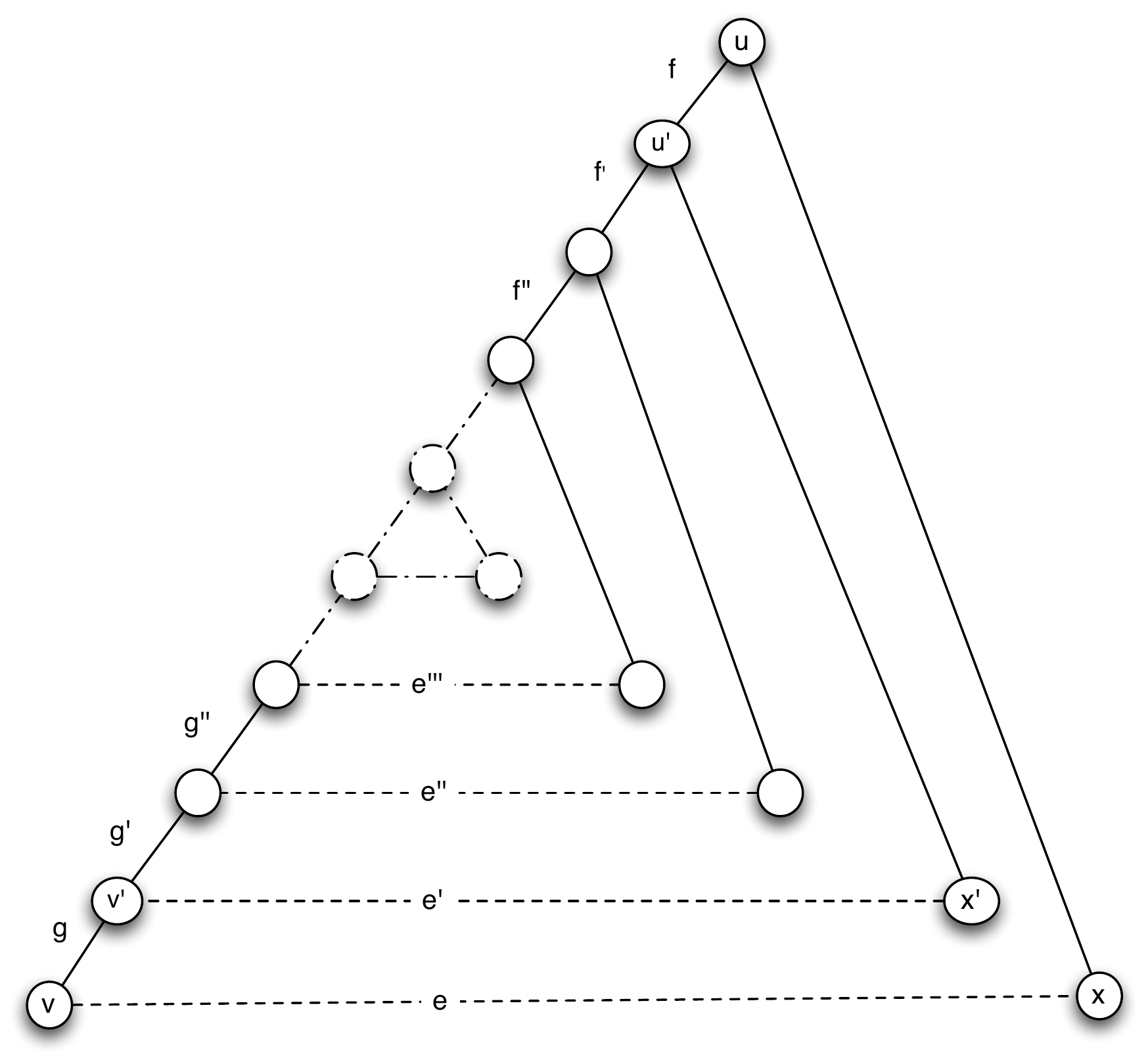}
\label{fig:UpEdges}
}
\caption{Use of internal edges for fundamental cycles verification}
\end{figure}

Let us consider a fragment $F$ and an edge $f=\{x,y\}$ belonging to $F$ such that $f\in C_e\setminus\{e\}$. If $w(f)>w(e)$ then $e$ must become an edge of $F$. Consequently, we need to verify all the edge weights of $C_e\setminus\{e\}$. To achieve this task, the weight of $e=(u,v)$ is sent up in $F$ along the two paths $\path(u,\Lca(\lab_u,\lab_v))$ and $\path(\Lca(\lab_u,\lab_v),v)$. Clearly, to maintain low space complexity, the nodes cannot store the information about all internal edges. Consequently, we decide that each node stores only the information of a single internal edge at a time. Specifically, we need to organize the circulation of the internal edges. A natural question to ask at this point is whether the information of all non-tree edges are needed. To answer to this question, we first make some observations.

First, suppose the following case (see Figure~\ref{fig:IdemNCA}): let $e=\{u,v\}$ and $e'=\{u',v'\}$ be internal edges such that \mbox{$\Lca(\lab_u,\lab_v)=\Lca(\lab_{u'},\lab_{v'})$}, and $u'$ and $v'$ are closer to $\Lca(\lab_u,\lab_v)$ than $u$ and $v$. On $\path(u',\Lca(\lab_u,\lab_v))$ and $\path(v',\Lca(\lab_u,\lab_v))$ only the internal edge with the smallest weight is needed. To justify this assertion, let us consider without loss of generality that $w(e)<w(e')$ and $f=\{a,b\}$ is a tree edge such that $w(f)>w(e')$. Moreover, suppose that all edges in a $\path(u,u')$ and $\path(v,v')$ have a weight smaller than $w(e)$. Consequently, $f$ is not part of the {\tt MST}, and if we delete $f$, the minimum outgoing edge of the fragment composed by the $\path(a,u)$ is edge $e$. Consider now, the case when several adjacent edges of node $v$ have the same common ancestor (see Figure~\ref{fig:IdemNCA2}). In this case only the internal edge with the smallest weight is relevant on the $\path(v,\Lca(\lab_u,\lab_v))$ to avoid the maximum weight of the fundamental cycles. The last case considered is the following (see Figure~\ref{fig:UpEdges}). Consider a path between two nodes $u$ and $v$, and $u',v'\in\path(u,v)$ such that $f=\{u,u'\}$ and $g=\{v,v'\}$. Let $e=\{v,x\}$ be an edge such that $\Lca(\lab_v,\lab_x)=u$ and $e'=\{v',x'\}$ an edge such that $\Lca(\lab_{v'},\lab_{x'})=u'$. If $w(e')<w(e)$, the weight of $e'$ is needed to verify if the weights of the edges on $\path(v',u')$ have a higher weight than $e'$. However, the weight of $e$ is needed to verify the weight of edge $f$. Consequently, we need to collect all the outgoing edges from the leaves to the root, from the farthest to the nearest of the root.

% \begin{figure}[!tbh]
% \fbox{
% \begin{minipage}{16cm}
% \begin{description}
% \item[-] $\dnca(\lab_u,\lab_v)$\footnote{ $\lab^{\mathsmaller \cap}=\lab_u\cap\lab_v;\lab'_v=\lab_v\setminus \lab^{\mathsmaller \cap};\lab'_u=\lab_u\setminus \lab^{\mathsmaller \cap};$}
% $=\left\{\begin{array}{lp{0,5cm}l} 
% \lab'_v[0][1]-\lab'_u[0][1] && \mbox{\textbf{if }} \lab^{\mathsmaller \cap}=\emptyset \\
% |\lab'_v|+\lab^{-1}_v[1]&& \mbox{\textbf{otherwise}} 
% \end{array} \right.$\\
% \item[-] $\IE_l(v)= \{(w(u,v),\Lca(\lab_u,\lab_v)):u\in N_v \setminus\{ \Child(v) \cup \{\parent_v\}\} \wedge \Lca(\lab_u,\lab_v) \neq \emptyset \}$
% \item[-] $\IE_c(v)= \{\mIns_u: u\in \Child_v \wedge \mIns_u[1]\neq\lab_v\}$
% \item[-] $\IE(v) =\IE_l(v)\cup \IE_c(v)$
% \item[-] ${\rm OrdIE}(v,s)$ \footnote{ A list of pairs $(a,b)$ of $\IE(v)$. The pairs are ordered in decreasing order by $\dnca(\lab_a,\lab_v)$ and only the pairs with $\dnca(\lab_a,\lab_v)<s$ are kept. Moreover, when two pairs have the same $\dnca(\lab_a,\lab_v)$ only the pair with the smallest first element is kept (i.e., the edge with the minimum weight).}
% \end{description}
% \end{minipage}
% }
% \caption{Predicates used by the \MSTA\/ for the computing the relevant ingoing edges for the recovering step.}
% \label{fig:predicatesIn}
% \end{figure}

\begin{figure}[!tbh]
\fbox{
\begin{minipage}{16cm}
\begin{tabular}{lll}
$IECA(v,ca)$ & $=$ & $\min\{w(u,v): u \in N_v \setminus (\Child(v) \cup \{\parent_v\}) \wedge \Lca(\lab_u,\lab_v)=ca \wedge \Lca(\lab_u,\lab_v) \succeq \lab_v\}\footnote{ Operator $\succ$ is the lexicographical order used for the node labels, we consider $\bot$ as the smallest element.}$\\
$\IE_l(v)$ & $=$ & $\{(w(u,v),\lab_u,\lab_v): u \in N_v \setminus (\Child(v) \cup \{\parent_v\}) \wedge \Lca(\lab_u,\lab_v) \neq \emptyset$\\
& & $\wedge w(u,v)=IECA(v,\Lca(\lab_u,\lab_v)\}$\\
$\IE_c(v)$ & $=$ & $\{\mIns_u: u \in \Child(v) \wedge \Lca(\mIns_u[1],\mIns_u[2]) \succeq \lab_v\}$\\
\vspace*{0.2cm}$minIE(v,ca)$ & $=$ & $\min\{e: e \in (\IE_l(v) \cup \IE_c(v)) \wedge \Lca(e[1],e[2]) \succ ca\}$\\
$\IE(v)$ & $=$ & $\left\{ \begin{array}{ll} minIE(v, \Lca(\mIns_v[1],\mIns_v[2])) & \mbox{If } minIE(v, \Lca(\mIns_v[1],\mIns_v[2])) \neq \emptyset\\ minIE(v, \bot) & \mbox{Otherwise} \end{array} \right.$
\end{tabular}
\begin{tabular}{lll}
$EndForward(v)$ & $\equiv$ & $(\parent_v=\emptyset \vee \Lca(\mIns_v[1],\mIns_v[2])=\lab_v) \wedge \mIns_v=\IE_l(v)$\\
$Forwarded(v)$ & $\equiv$ & $(\Lca(\mIns_v[1],\mIns_v[2])=\lab_v \vee \mIns_{\parent_v}=\mIns_v) \wedge \mIns_v \not \in \IE_c(v)$\\
$BetterEdgeP(v)$ & $\equiv$ & $\parent_v \neq \emptyset \wedge \Lca(\mIns_v[1],\mIns_v[2]) \neq \lab_v$\\
& & $\wedge \Lca(\mIns_{\parent_v}[1],\mIns_{\parent_v}[2])=\Lca(\mIns_v[1],\mIns_v[2]) \wedge \mIns_{\parent_v}<\mIns_v$\\
$SelectEdge(v)$ & $\equiv$ & $EndForward(v) \vee Forwarded(v) \vee BetterEdgeP(v)$\\
$Recover(v)$ & $\equiv$ & $\parent_v=newp_v \wedge d_v=newd_v \wedge newd_v=newd_{\parent_v}+1 \wedge \Fusion(v)=\emptyset$\\
& & $\wedge SelectEdge(v)$
\end{tabular}
\end{minipage}
}
\caption{Macros used by Algorithm \MSTA\/ for the correction of the MST.}
\label{fig:predicatesIn}
\end{figure}

The rule for collecting the relevant internal edges is based on the above observations (see rule $\RC$). The internal edges are sent up in the fragment from the leaves to the root using variable $\mIns_v$ at every node $v$. The internal edges are collected locally by beginning from the edge with the farthest nearest common ancestor to the edge with the nearest common ancestor, i.e., following the lexicographical order on the nearest common ancestor labels and beginning by the smallest one. In case there exist several edges with the same nearest common ancestor, only the edge with the smallest weight is kept. The list of the ordered internal edges at node $v$ is given by Macro $\IE(v)$. This list is computed by different predicates (see Macros in Figure~\ref{fig:predicatesIn}). Each node $v$ compares the weight stored in variable $\mIns_v[0]$ with the weight of the edge leading to its parent. If $\mIns_v[0]<w(v,\parent_v)$ then $v$ knows that the internal edge indicated by $\mIns_v$ must belong to the {\tt MST}. Consequently $v$ deletes the edge $(v,\parent_v)$ from the fragment (only if $v$ is not the nearest common ancestor of the internal edge given by $\mIns_v$), and $v$ becomes the root of the new fragment (see Rule $\RC$). A node $v$ can select a new internal edge by executing Rule $\RC$ in the following case (i.e., Predicate $Recover(v)$ is satisfied): (i) the internal edge of $v$ is propagated up by its parent and $v$ has no more child propagating the same internal edge (see Predicate $Forwarded(v)$), (ii) $v$ is the nearest common ancestor of the adjacent internal edge actually selected (see Predicate $EndForward(v)$), or (iii) $v$ is neither the root of the fragment nor the nearest common ancestor of the selected internal edge $e'$ and its parent propagates an internal $e''$ related with the same common ancestor but $w(e'')<w(e')$ (see Predicate $BetterEdgeP(v)$). This allows to obtain a piplined propagation of the internal edges. Figure~\ref{fig:Recovering} illustrates the bottom-up spread of the internal edges.

% \bigskip
% \begin{minipage}{17cm}
% \hrulefill
% \begin{description}
% \item[$\RC$: [\ Recovering]] 
% \item \textbf{If} $\neg \Cycle(v) \wedge \m_v=\emptyset $
% \item  \hspace*{0,5cm}  \textbf{If} $w(v,\parent_v)>\mIns_v[0]$  \textbf{Then} $\parent_v:=\emptyset;$
% \item  \hspace*{0,5cm}  \textbf{Else} \textbf{If} $(\mIns_v[1]=\mIns_p[1]) \wedge (\mIns_v\neq {\rm OrdIE}(v,0)[-1])$\footnote{{\rm OrdIE}(v,s)[-1] is the last pair of {\rm OrdIE}(v,s), and {\rm OrdIE}(v,s)[0] is the first pair of {\rm OrdIE}(v,s) } \textbf{Then} $\mIns_v:={\rm OrdIE}(v,\dnca\{\mIns[1],\lab_v\})[0]$
% \item  \hspace*{1,5cm}  \textbf{If} $(\mIns_v[1]=\mIns_p[1]) \wedge (\mIns_v={\rm OrdIE}(v,0)[-1])$ \textbf{Then} $In_v=:{\rm OrdIE}(v,\infty)[0]$
% %\item  \hspace*{1,5cm}$\mIns_v={\rm OrdIE}(v,s)[0]$
% \end{description}
% \hrulefill
% \end{minipage}

\bigskip
\begin{minipage}{17cm}
\hrulefill
\begin{description}
\item[$\RC$: [\ Recovering]] 
\item \textbf{If} $CorrectF(v) \wedge Recover(v)$ \textbf{Then}
\item  \hspace*{0.5cm} $\mIns_v:=\IE(v);$
\item  \hspace*{0.5cm}  \textbf{If} $\Lca(\mIns_v[1],\mIns_v[2]) \neq \lab_v \wedge w(v,\parent_v)>\mIns_v[0]$ \textbf{Then} $\parent_v:=\emptyset; d_v:=0; \lab_v:=(\id_v,0);$
\end{description}
\hrulefill
\end{minipage}

\begin{figure}[!tbh]
\begin{center}
\includegraphics[scale=0.55]{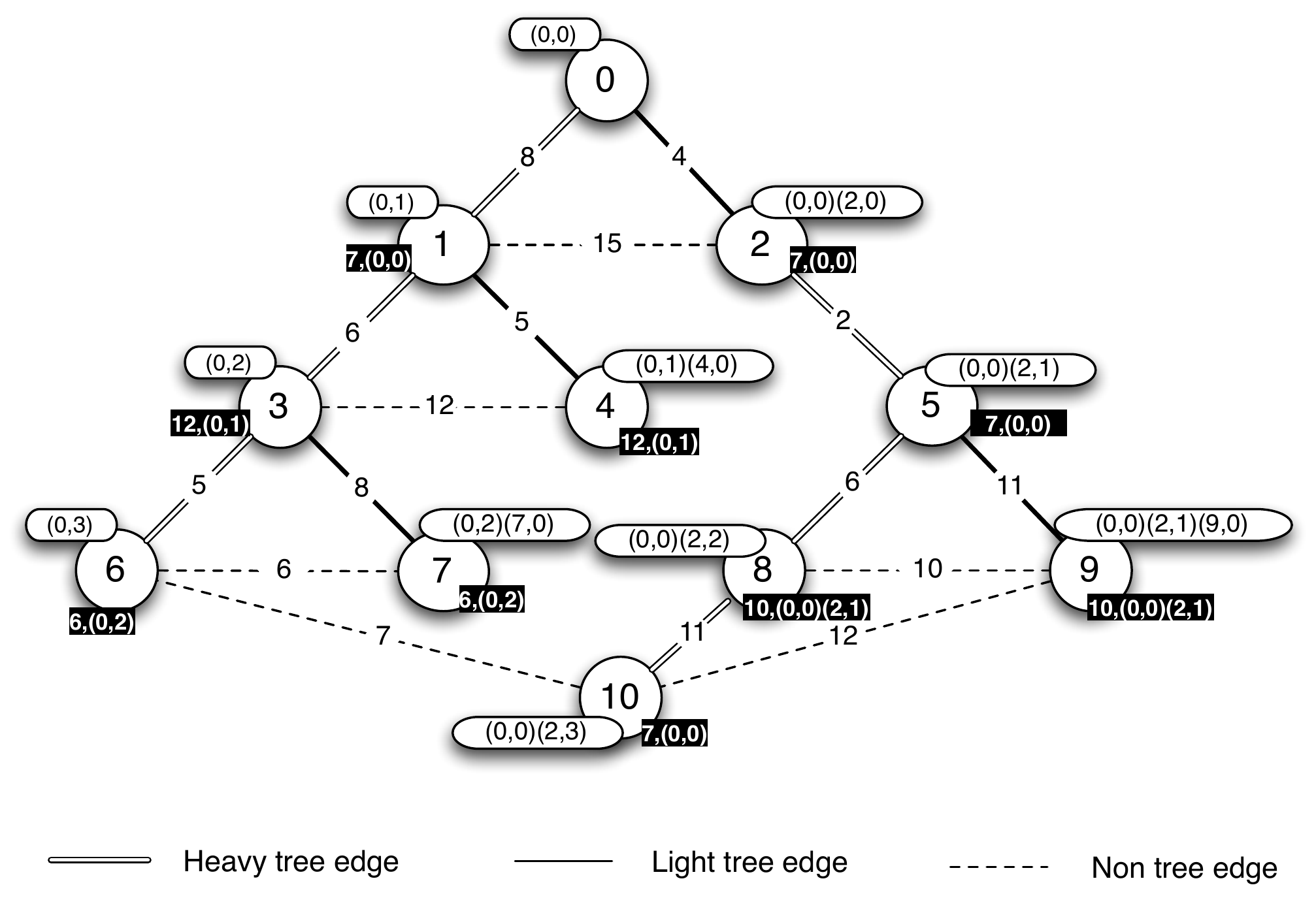}
\caption{The white bubble at each node $v$ corresponds to the label of the node. The black bubble at each node represents the internal edges.}
\label{fig:Recovering}
\end{center}
\end{figure}
\bigskip

%======================
%Proofs
% \newpage
\subsection{Correctness and complexity}

This subsection is dedicated to the correctness of the self-stabilizing Minimum Spanning Tree construction. We can define a Minimum Spanning Tree as in Definition~\ref{def:mst}.

\begin{definition}[MST]
\label{def:mst}
Let $G=(V,E,w)$ be a network with $V$ the set of nodes, $E$ the set of undirected links and the function $w: E \rightarrow \mathbb{N}$. A graph $T=(V_T,E_T)$ of $G$ is called a \emph{Minimum Spanning Tree} if the following conditions are satisfied:
\begin{enumerate}
\item $V_T=V$ and $E_T \subseteq E$, and
\item $T$ is a connected graph (i.e., there exists a path in $T$ between any pair of nodes $x,y \in V_T$) and $|E_T|=|V|-1$, and
\item There exists no spanning tree $T'$ of $G$ whose the weight $w(T')$ is lower than $w(T)$.
\end{enumerate}
\end{definition}

We give a formal specification to the problem of constructing a Minimum Spanning Tree, stated in Specification~\ref{spec:mst}.

% \begin{specification}[MST Construction]
% \label{spec:mst}
% Let $\mathcal{C}$ be the set of all possible configurations of the system. An algorithm $\mathcal{A_{MST}}$ solving the problem of constructing a stabilizing MST tree satisfies the following conditions:
% \begin{itemize}
% \item[\emph{[TC1]}] Algorithm $\mathcal{A_{MST}}$ reaches a set of terminal configurations $\mathcal{T} \subseteq \mathcal{C}$ in finite time, and
% \item[\emph{[TC2]}] Every configuration $\gamma \in \mathcal{T}$ satisfies Definition~\ref{def:mst}.
% \end{itemize}
% \end{specification}
\begin{specification}[MST Construction]
\label{spec:mst}
Let $\Gamma$ be the set of all possible configurations of the system. An algorithm $\mathcal{A_{MST}}$ solving the problem of constructing a stabilizing MST tree satisfies the following conditions:
\begin{itemize}
\item[\emph{[TC1]}] Starting from any configuration in $\Gamma$, Algorithm $\mathcal{A_{MST}}$ reaches in finite time a set of configurations $\mathcal{L} \subseteq \Gamma$ which satisfies Definition~\ref{def:mst}, and
\item[\emph{[TC2]}] From every configuration $\gamma \in \mathcal{L}$, Algorithm $\mathcal{A_{MST}}$ can only reach a configuration in $\mathcal{L}$.
\end{itemize}
\end{specification}

Let $\Gamma$ be the set of all possible configurations of the system. A fragment ${\tt F}$ rooted at node $r_{\tt F}$ is a subtree such that for every node $v \in {\tt F}$ there is a path to $r_{\tt F}$ and Predicate $\Distance(v)$ is true. In the following theorem we start by showing that until a legitimate configuration is reached there is no deadlock in the system.

\begin{theorem}
\label{thm:no_deadlock}
Let the set of configurations $\mathcal{B} \subseteq \Gamma$ such that every configuration $\gamma \in \mathcal{B}$ satisfies Definition~\ref{def:mst}. $\forall \gamma \in (\Gamma -\mathcal{B}, \exists v \in V$ such that $v$ is enabled in $\gamma$.
\end{theorem}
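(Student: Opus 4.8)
The plan is to prove the contrapositive: assume that in some configuration $\gamma$ \emph{no} node is enabled (i.e.\ the system is deadlocked), and show that $\gamma$ must satisfy Definition~\ref{def:mst}, hence $\gamma \in \mathcal{B}$. The proof proceeds by successively ruling out what each disabled rule guard forbids, working through the phases in the order imposed by the priority structure (correction, then labeling, then merging, then recovering).

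First I would argue that the parent-pointer structure contains no cycle. If $\gamma$ contained a cycle, then along that cycle there would be some node $v$ with $d_{\parent_v} \geq d_v$ (a cycle cannot have strictly decreasing distances all the way around), so Rule $\RCo$ would be enabled at $v$ via its third sub-case; similarly if some node $v$ with $\parent_v=\emptyset$ had $d_v\neq 0$, or a node with $d_{\parent_v}+1 < d_v$ existed, $\RCo$ would fire. Hence in $\gamma$ every node satisfies $\Distance(v)$, and the parent pointers induce a spanning forest of fragments each rooted at a node with no parent. Next, since $\RSC$ is disabled everywhere and $\Distance$ holds, $\SizeC(v)$ holds at every $v$, so all $\size$ variables are correct; then since $\RLC$ is disabled and both $\Distance$ and $\SizeC$ hold, $\Label(v)$ holds everywhere, so every fragment carries a correct nearest-common-ancestor labeling (invoking the correctness of $\LabA$, Theorem~\ref{thm:self-stab_lab}, restricted to each fragment). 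Consequently $CorrectF(v)$ holds at every node.

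Now I would show the forest is in fact a single tree. Suppose there are at least two fragments. Since $G$ is connected there is an edge $e=\{u,v\}$ with $u,v$ in different fragments; by the labeling correctness $\Lca(\lab_u,\lab_v)=\emptyset$, so $e$ is an outgoing edge and $\OEd(v)\neq\emptyset$, whence $\Fusion(\cdot)\neq\emptyset$ propagates up the fragment. Then one checks that $\RMin$, or $\RF$, or $\RDist$, or $\RFEnd$ must be enabled somewhere in such a fragment: if the $\m$ values are not yet the fixpoint of the bottom-up minimum computation, $\RMin$ fires; if they are but the $newp$ values are inconsistent, $\RF$ fires; and so on down the chain of merging rules — the merging phase cannot be simultaneously "finished" and "non-trivial", because a finished merging step would have already changed the parent pointers. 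This is the step I expect to be the main obstacle: it requires a careful case analysis showing that the conjunction of negations of all four merging-rule guards, together with $\OEd\neq\emptyset$ somewhere, is contradictory — i.e.\ that the merging machinery has no spurious fixed points other than "no outgoing edge exists". One must be careful about partially-corrupted $newp,newd$ values in the initial configuration and about the interaction of $ChangeNewP$, $ChangeNewD$ and $CopyVar$.

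Finally, with a single spanning tree ${\tt T}$ and all outgoing-edge sets empty, I would show ${\tt T}$ is a minimum spanning tree, for otherwise the red rule exhibits a non-tree (internal) edge $e=\{u,v\}$ and a tree edge $f$ on its fundamental cycle $C_e$ with $w(f)>w(e)$. Then $\IE_l$ is nonempty at the relevant node, and I would argue that the recovering machinery cannot be at a fixed point: either some node does not yet hold the correct $\IE(v)$ value (so $\RC$ is enabled via $Recover(v)$, whose disjuncts $EndForward$, $Forwarded$, $BetterEdgeP$ exactly capture "ready to pick the next pipelined internal edge"), or the offending edge $f$ with $w(f)>\mIns_v[0]$ has already been reached and $\RC$ is enabled to delete it. Either way some node is enabled, contradicting the deadlock assumption. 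Hence ${\tt T}$ satisfies all three conditions of Definition~\ref{def:mst}, so $\gamma\in\mathcal{B}$, which is the contrapositive of the claim. Throughout, I would lean on the monotone potential-function arguments already used in Lemmas~\ref{lem:size_convergence} and \ref{lem:label_convergence} to justify that each phase's computation genuinely has the fixed points I claim, rather than re-deriving them from scratch.
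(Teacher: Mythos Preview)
Your proposal is correct and follows essentially the same route as the paper: assume no rule is enabled, then peel off the rule guards in priority order ($\RCo$, then $\RSC/\RLC$, then the merging rules $\RMin,\RF,\RDist,\RFEnd$, then $\RC$) to force $\gamma$ into $\mathcal{B}$. Your identification of the merging case as the crux matches the paper, which handles it by exactly the kind of case split you sketch (root / on-the-path-to-the-outgoing-edge / off that path, then future root / other, then $CopyVar$).

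One small remark: your closing suggestion to ``lean on the monotone potential-function arguments'' of Lemmas~\ref{lem:size_convergence} and~\ref{lem:label_convergence} is off-target here. Those lemmas are dynamic (they bound the number of rounds to reach a fixed point), whereas the no-deadlock theorem is a purely static statement about a single configuration. What you actually need---and what the paper does---is the direct implication ``guard false $\Rightarrow$ predicate true'' for each rule, e.g.\ $\neg\SizeC(v)$ enables $\RSC$, so disabled $\RSC$ forces $\SizeC(v)$. No potential function is involved; it is just reading the guards. This does not affect the correctness of your outline, but it slightly misdescribes the tool you would use.
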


\begin{proof}
Assume by the contradiction, that $\exists \gamma \in (\Gamma - \mathcal{B})$ such that $\forall v \in V$ no rule is enabled at $v$ in $\gamma$. Since $\gamma \not \in \mathcal{B}$, there is either a cycle, several fragments, or a single fragment which is not a MST in $\gamma$. If there is a cycle or incorrect distances in $\gamma$ then there exists a node $v$ such that $d_v \neq d_{\parent_v}+1$. This implies that Predicate $\Distance(v)$ is not satisfied and Rule $\RCo$ is enabled at $v$, a contradiction. Otherwise, $\forall v \in V$ Predicate $Distance(v)$ is satisfied. If there exists a node $v$ in $\gamma$ with an incorrect label, then either Predicate $\SizeC(v)$ or $\Label(v)$ is satisfied and Rule $\RSC$ or $\RLC$ is enabled at $v$ (see proofs of Section~\ref{subsec:CorLabel} for more details), a contradiction. Otherwise, Predicate $CorrectF(v)$ is satisfied $\forall v$ in $\gamma$. If there are several fragments in $\gamma$ then there is at least one node $v \in V$ such that Macro $C\Fusion(v)\neq \emptyset$. If there is a node $v$ in $\gamma$ which has not computed the correct outgoing edge of its subtree (i.e., $\m_v[0] \neq \Fusion(v)$), then Rule $\RMin$ is enabled at $v$, a contradiction. Otherwise, in each fragment $F$ in $\gamma$ we have $\forall v \in F, \m_v[0]=\Fusion(v)$. Consider first a node $v$ in a fragment $F$ in $\gamma$ which is on the path between the root of $F$ and the minimum outgoing edge of $F$ (i.e., $\m_{\parent_v}=\m_v$). If there exists such a node $v$ with $newp_v \neq NewParent(v)$ and Predicate $Reorientation(v)$ is satisfied, then Predicate $ChangeNewP(v)$ Rule $\RF$ is enabled at $v$, a contradiction. Otherwise, consider the other node $v$ in $F$ which are not on the path between the root and the minimum outgoing edge of $F$ (i.e., $\m_{\parent_v} \neq \m_v$). If there exist such a node $v$ such that $newp_v \neq \parent_v$ then Predicate $ChangeNewP(v)$ is satisfied and Rule $\RF$ is enabled at $v$, a contradiction. Otherwise in each fragment $F$ in $\gamma$, we have $\forall v \in V, (newp_v=NewParent(v) \vee newp_v=\parent_v) \Rightarrow \neg ChangeNewP(v)$. Either for the future root $v$ (i.e., $newp_{newp_v}=v$) of a fragment $F$ in $\gamma$ we have $newd_v>1$ then Predicate $ChangeNewD(v)$ is satisfied and Rule $\RDist$ is enabled at $v$, a contradiction. Or for the other node $v$ in $F$ we have $newd_v \neq newd_{\parent_v}+1$ then Predicate $ChangeNewD(v)$ is satisfied and Rule $\RDist$ is enabled at $v$, a contradiction. Otherwise, we have $\forall v \in V, \neg ChangeNewP(v) \wedge ChangeNewD(v)$ in $\gamma$. If in a fragment $F$ in $\gamma$ there is a node $v$ such that every of its future children $u$ after the merging (given by Macro $NewChild(v)$) in the fragment satisfies $\neg CopyVar(u)$ and $\parent_v \neq newp_v \vee d_v \neq newd_v$, then Predicate $CopyVar(v)$ is satisfied and Rule $\RFEnd$ is enabled at $v$, a contradiction. Finally, otherwise there is only a single fragment $F$ in $\gamma$ and we have $\forall v \in F, CorrectF(v)$. Moreover, for every node $v \in V$ Predicate $Recover(v)$ is satisfied, since $(\neg ChangeNewP(v) \wedge \neg ChangeNewD(v) \wedge \neg CopyVar(v) \wedge CorrectF(v)) \Rightarrow Recover(v)$. Therefore, Rule $\RC$ is enabled at every $v$ in $\gamma$. By contradiction the fragment $F$ in $\gamma$ is not a MST, so there exists a node $v$ in $\gamma$ such that $\parent_v \neq \emptyset$ and $v$ is adjacent of an internal edge with a weight lower than $w(v,\parent_v)$ (i.e.,$w(v,\parent_v)>\mIns_v[0]$). Thus, $v$ becomes the root of a new fragment when $v$ executes Rule $\RC$, a contradiction.
\end{proof}

We denote by $\Gamma_{\tt CF}$ the set of configurations in $\Gamma$ such that there are no cycles in the subgraph induced by parent link relations (i.e., for every $\gamma \in \Gamma_{\tt CF}$ we have $\forall v \in V, \Distance(v)$). %Let $\Gamma_{\lab}$ be the set of configurations in $\Gamma$ such that for every node $v \in V$ variables $\size_v$ and $\lab_v$ are correct (i.e., for every $\gamma \in \Gamma_{\tt CF}$ we have $\forall v \in V, \SizeC(v) \wedge \Label(v)$).

\begin{lemma}
\label{lem:time_cf}
Starting from any arbitrary configuration, the system reaches in $O(n)$ rounds a configuration in $\Gamma_{\tt CF}$.
\label{lem:cycle free}
\end{lemma}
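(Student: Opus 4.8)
The plan is to use a potential-function argument very similar to the one in Lemma~\ref{lem:size_convergence}, but now tracking the maximum distance value that is ``too large'' rather than the depth of a correct tree. The key observation is that Rule~$\RCo$ is the highest-priority rule and is never disabled by the other rules (the guards of $\RSC$, $\RLC$ and the merging/recovering rules all require $\Distance(v)$, hence $\neg\Distance(v)$ only enables $\RCo$); so the analysis of cycle destruction can be carried out essentially in isolation, as if only $\RCo$ were present.

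First I would set up the potential. Let $D$ be an a priori bound on the distance values appearing in a configuration; since we reason about rounds, after the very first round every $d_v$ that a node writes is of the form $d_{\parent_v}+1$, and one checks that after $O(1)$ rounds all $d_v \le n-1$ or the node has already reset (the only way to get a value $>n-1$ is to start with it and never be corrected, but a node with $d_v > d_{\parent_v}$ or $d_v>n-1$ along a maximal parent-chain gets caught). Define, for a configuration $\gamma$, the quantity $M(\gamma)=\max\{ d_v : v \in V,\ \neg\Distance(v)\ \text{or}\ v\ \text{lies on a parent-cycle}\}$, and a lexicographic potential counting, for each value $k$ from largest to smallest, the number of ``bad'' nodes at distance-level $k$. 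The heart of the matter is the classical fact (from \cite{DolevIM97}, invoked in the text) that in any parent-pointer structure there is at least one node $v$ with $d_v \le d_{\parent_v}$ — either because a genuine cycle exists, or because distances are locally inconsistent — and such a node is enabled for the third branch of $\RCo$, which deletes its parent pointer and makes it a root. Once a node becomes a root with $d_v=0$, it stays consistent unless a neighbour later points to it, which only happens through the merging rules, not through corruption of distances.

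The key steps, in order: (1) show that after one round every enabled node has executed $\RCo$ (or been disabled), so that from round $1$ on, a node only has $\neg\Distance(v)$ if its parent's distance changed or it lies on a cycle; (2) show that every parent-cycle contains a node $v$ with $d_{\parent_v}\ge d_v$, which is enabled for the branch $\parent_v:=\emptyset$, so every existing cycle loses at least one edge per round and thus is destroyed within $O(n)$ rounds (a cycle has length at most $n$); (3) show that the ``falsely detected'' cycles — nodes with inconsistent distances on an acyclic structure — are cleaned up in $O(n)$ rounds as well, by the same level-by-level argument as in Lemma~\ref{lem:size_convergence}: the maximal inconsistent level strictly decreases each round because a node just below a freshly-corrected parent recomputes $d_v:=d_{\parent_v}+1$ and becomes consistent; (4) conclude that after $O(n)$ rounds $\forall v,\ \Distance(v)$ holds, i.e.\ $\gamma\in\Gamma_{\tt CF}$, and note that this set is reached and (by a short closure remark, or deferred to a companion lemma) not left as long as no merging rule reorients a pointer inconsistently, which it does not by construction of $\RF,\RFEnd$.

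The main obstacle I expect is step (3) combined with the interaction between the ``$d_{\parent_v}+1<d_v$'' branch (which merely \emph{decreases} $d_v$ without breaking the pointer) and the ``$d_{\parent_v}\ge d_v$'' branch (which breaks it). One must argue that the decreasing branch cannot be taken forever: each time it fires, $d_v$ drops by at least one and $d_v\ge 0$, and $d_{\parent_v}$ is itself eventually stabilized from above, so a well-chosen potential — e.g.\ $\sum_v d_v$ restricted to nodes not yet on a stabilized chain, ordered by depth from the roots — decreases. Care is needed because $d_{\parent_v}$ can temporarily be wrong too; the clean way is the level-indexed potential $\Phi$-style sum with base $(n+1)$, exactly mirroring Lemma~\ref{lem:size_convergence} but indexed by the current (possibly bogus) distance value rather than true tree depth, and to observe that the largest bogus level is emptied each round. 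Getting the bookkeeping of that potential right, and handling the corner case where a cycle and a false detection coexist on overlapping node sets, is the only genuinely delicate part; everything else is the standard self-stabilizing BFS/cycle-breaking argument.
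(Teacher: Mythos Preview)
The paper does not actually prove this lemma; its entire proof is the single sentence ``This lemma can be proved using the same arguments given in~\cite{DolevIM97}.'' Your proposal is an expansion of precisely that standard cycle-breaking argument, so you are aligned with the paper's (deferred) approach.

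Two small corrections to your sketch. In step~(2), removing one edge already destroys a cycle, and a node of \emph{minimum} $d$-value on the cycle remains committed to branch~3 throughout the round (its parent's distance can only drop below that minimum via branch~3, which itself breaks the cycle); hence every cycle disappears in a single round, not in $O(\text{cycle length})$ rounds, and the $O(n)$ cost comes entirely from step~(3). Second, the lexicographic potential you propose, indexed by the stored (possibly arbitrary) $d$-value, does not directly yield an $O(n)$ bound when initial distances are unbounded; the clean argument instead indexes by true depth from the current roots in the forest and shows that correctness of $d$ propagates one level per round, with any false branch-3 detections only creating additional roots and therefore only accelerating convergence.
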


This lemma can be proved using the same arguments given in~\cite{DolevIM97}.\\%The proof of this lemma can be found in~\cite{DolevIM97}.\\

\subsubsection{Correctness and complexity of the merging phase}

We now define some notations and predicates which will be used in the following proofs. Given a configuration $\gamma \in \Gamma$, we note the set of all fragments in $\gamma$ by $\mathcal{F}(\gamma)$. %Let $R_1=\{\gamma \in \Gamma : (\forall v \in V : \Distance(v))\}$ be the set of all configurations in $\Gamma$ such that there is no cycle in the subgraphs induced by parent link relations.
Moreover, we define below several sets of fragments with different properties and the notion of \emph{attractor}, introduced by Gouda and Multari~\cite{GoudaM91}, will be used to show that during the convergence of Algorithm $\MSTA$ each fragment gains additional properties. We define five sets of fragments in a configuration $\gamma \in \Gamma$:

\begin{itemize}
\item Let $\mathcal{F}_1(\gamma)=\{F \in \mathcal{F}(\gamma): (\forall v \in F: CorrectF(v))\}$ be the set of fragments in $\gamma$ in which all the nodes are correctly labeled.
\item Let $\mathcal{F}_2(\gamma)=\{F \in \mathcal{F}_1(\gamma): (\forall v \in F: \m_v[0]=\Fusion(v) \neq \emptyset)\}$ be the set of fragments correctly labeled in $\gamma$ in which every node has computed its minimum-weight outgoing edge of its subtree for the merging phase.
\item Let $\mathcal{F}_3(\gamma)=\{F \in \mathcal{F}_2(\gamma): (\forall v \in F: \neg ChangeNewP(v))\}$ be the set of fragments correctly labeled in $\gamma$ in which every node has computed its future parent used when the merging phase is done.
\item Let $\mathcal{F}_4(\gamma)=\{F \in \mathcal{F}_3(\gamma): (\forall v \in F: \neg ChangeNewD(v))\}$ be the set of fragments correctly labeled in $\gamma$ in which every node has computed its future distance used when the merging phase is done.
\item Let $\mathcal{F}_5(\gamma)=\{F \in \mathcal{F}_4(\gamma): (\forall v \in F: \neg CopyVar(v))\}$ be the set of fragments in $\gamma$ for which the merging phase is done.
\end{itemize}

We obtain the following lemma by applying Rules $\RSC$ and $\RLC$ according to Lemmas~\ref{lem:size_convergence} to~\ref{lem:label_closure} and Theorem~\ref{thm:self-stab_lab}.

% \begin{lemma}
% \label{lem:labelMST}
%  Starting from any configuration $\gamma\in \Gamma_{\tt CF}$, in $O(n)$ rounds the system reaches a configuration $\gamma' \in \Gamma_{\lab}$.
%  \end{lemma}
\begin{lemma}
\label{lem:labelMST}
Starting from any configuration $\gamma\in \Gamma_{\tt CF}$, after $O(n)$ rounds the system reaches a configuration $\gamma'$ such that for each fragment $F \in \mathcal{F}_{\gamma}$ we have $F \in F_1(\gamma')$.
\end{lemma}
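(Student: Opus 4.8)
The plan is to prove Lemma~\ref{lem:labelMST} by reducing it to the single-tree analysis already developed in Section~\ref{subsec:CorLabel}, applied in parallel to every fragment of the (now cycle-free) configuration. Since $\gamma \in \Gamma_{\tt CF}$, the subgraph induced by the parent pointers is a forest, so the node set $V$ partitions into fragments, each of which is a genuine rooted tree. Within Algorithm $\MSTA$, Rules $\RSC$ and $\RLC$ are guarded by $\Distance(v)$, which holds for every node once we are in $\Gamma_{\tt CF}$; moreover $\Gamma_{\tt CF}$ is closed under the action of $\MSTA$ restricted to these rules (a node only drops a parent via $\RCo$, which is disabled here, and the merging rules are forbidden until $CorrectF$ holds). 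So from $\gamma$ onward the dynamics of the $\size$ and $\lab$ variables inside a fixed fragment $F$ are exactly those analyzed by Lemmas~\ref{lem:size_convergence}, \ref{lem:size_closure}, \ref{lem:label_convergence}, \ref{lem:label_closure} and Theorem~\ref{thm:self-stab_lab}, with the role of ${\tt T}$ played by $F$ and the role of $\delta_{\tt T}$ played by the depth $\delta_F$ of $F$.

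First I would argue the closure direction: once a fragment reaches a state where every node satisfies $CorrectF$, it stays there, because Lemma~\ref{lem:size_closure} and Lemma~\ref{lem:label_closure} give closure of the $\size$- and $\lab$-correct configurations, and $\Distance(v)$ is preserved in $\Gamma_{\tt CF}$; the merging rules $\RMin,\RF,\RDist,\RFEnd$ modify neither $\size$ nor $\lab$ nor destroy $\Distance$ (reorientations update $d_v$ consistently and the fragment's tree structure is only rewritten wholesale at the end of a merge step), so they cannot break $F \in \mathcal{F}_1$. Then for convergence: apply Lemma~\ref{lem:size_convergence} to conclude that after $O(\delta_F)$ rounds all $\size$ variables in $F$ are correct, i.e. $F$ reaches a configuration in $\Gamma_{\size}$ restricted to $F$; then apply Lemma~\ref{lem:label_convergence} to conclude that after a further $O(\delta_F)$ rounds all $\lab$ variables in $F$ are correct as well, so $CorrectF(v)$ holds for every $v \in F$, i.e. $F \in \mathcal{F}_1$. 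Since all fragments run these corrections concurrently and independently — the guards of $\RSC,\RLC$ at a node depend only on that node's fragment — after $\max_F O(\delta_F)$ rounds every fragment is in $\mathcal{F}_1$. Finally, bound $\delta_F \leq n-1$ uniformly over all fragments (each fragment has at most $n$ nodes), which turns $O(\delta_F)$ into $O(n)$ and gives the claimed bound.

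The main obstacle I anticipate is justifying that the presence of the extra $\MSTA$ variables ($d_v$, $\m_v$, $newp_v$, $newd_v$, $\mIns_v$) and the extra rules ($\RMin$ and the merging/recovering rules) do not interfere with the labeling convergence during these $O(n)$ rounds. One must check that (i) the guards $CorrectF(v)$ on $\RMin,\RF,\RC$ are false precisely at the nodes whose fragment has not yet finished labeling, so those rules do not fire prematurely inside a not-yet-correct fragment in a way that would, say, change a parent pointer and perturb the fragment structure mid-correction; and (ii) where $\RCo$ added the conjunct $\Distance(v)$ to the guards of $\RSC$ and $\RLC$, this conjunct is simply true throughout $\Gamma_{\tt CF}$, so the potential-function arguments of Lemmas~\ref{lem:size_convergence} and~\ref{lem:label_convergence} go through verbatim. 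Point (i) is where one should be slightly careful: strictly, a fragment could be $\size$- and $\lab$-correct while a neighboring fragment still is not, and an outgoing edge between them could enable $\RMin$ at a node of the correct fragment; but $\RMin$ only writes $\m_v$, never $\parent_v$ or $\size_v$ or $\lab_v$, so it is harmless for the statement at hand, and similarly the reorientation inside $\RF$ only happens when $CorrectF$ holds throughout that fragment and preserves $CorrectF$. Once these bookkeeping points are dispatched, the lemma follows directly from the Section~\ref{subsec:CorLabel} results applied fragment by fragment.
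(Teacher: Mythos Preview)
Your proposal is correct and takes essentially the same approach as the paper: the paper's proof is a single sentence stating that the lemma follows by applying Rules $\RSC$ and $\RLC$ according to Lemmas~\ref{lem:size_convergence}--\ref{lem:label_closure} and Theorem~\ref{thm:self-stab_lab}, i.e., exactly the fragment-by-fragment reduction to the single-tree analysis that you spell out. Your discussion of non-interference from the merging and recovering rules is more careful than anything the paper provides, though you should note that $\RFEnd$ is guarded only by $\Distance(v)$ (not $CorrectF(v)$), so your point (i) needs to cover that rule separately; the paper simply does not address this.
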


% \begin{lemma}
% \label{lem:Min}
% Let $h_{\tt f}$ denotes the height of fragment {\tt f}. In $O(h_{\tt f})$ rounds, starting from any configuration in $\Gamma_{\lab}$ which contains more than one fragment, in $O(h_{\tt f})$ rounds every node $v \in f$ has computed in Variable $\m_v$ the minimum-weight outgoing edge of its subtree.
% \label{lem:min_convergence}
% \end{lemma}
\begin{lemma}
\label{lem:Min}
Let any fragment $F \in \mathcal{F}_1(\gamma)$ in a configuration $\gamma \in \Gamma_{\tt CF}$. In $O(h_{\mathcal{F}})$ rounds, we have $F \in \mathcal{F}_2(\gamma')$, with $\gamma' \in \Gamma_{\tt CF}$ and $h_F$ the height of $F$.
\label{lem:min_convergence}
\end{lemma}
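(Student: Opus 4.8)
\textbf{Proof plan for Lemma~\ref{lem:Min}.}

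The plan is to proceed by a potential-function argument on the depth within the fragment $F$, exactly analogous to the argument used for Lemma~\ref{lem:size_convergence}, but now tracking the variable $\m_v$ instead of $\size_v$. First I would observe that since $\gamma \in \Gamma_{\tt CF}$ and $F \in \mathcal{F}_1(\gamma)$, every node $v \in F$ satisfies $CorrectF(v)$; by the closure results already established (Lemmas~\ref{lem:size_closure} and~\ref{lem:label_closure}, together with the fact that $\RCo$ is disabled when $\Distance(v)$ holds and $\RSC,\RLC$ are disabled when $\SizeC(v)\wedge\Label(v)$ holds), this property is preserved along every execution. Hence throughout the convergence the macros $\OEd(v)$, $\FA(v)$, $\FC(v)$ and $\Fusion(v)$ are evaluated on stable labels, so the only variable changing at nodes of $F$ is $\m_v$ (via Rule $\RMin$; note $\RF$, $\RDist$, $\RFEnd$ and $\RC$ either require $\m_v[0]=\Fusion(v)$ or are irrelevant for reaching $\mathcal{F}_2$). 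Also I should note that $\Fusion(v)\neq\emptyset$ for all $v\in F$ is guaranteed by the hypothesis that $F$ is not spanning (there is an outgoing edge), and this value is determined purely by the fixed labels and the fixed weights, hence constant over time.

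Next I would define, for a node $v$ at depth $d$ in $F$, the quantity $m(v)$ which is $0$ if $\m_v[0]=\Fusion(v)$ and $1$ otherwise, and set $\Psi(\gamma) = \sum_{d=0}^{h_F} \mu_d(\gamma)(n+1)^d$ where $\mu_d(\gamma)$ is the number of nodes of $F$ at depth $d$ with $m(v)\neq 0$. The key local fact is: $\Fusion(v)$ depends only on the labels of $v$'s neighborhood and on the values $\m_u$ for $u\in\Child(v)$; hence if all children $u$ of $v$ already satisfy $\m_u[0]=\Fusion(u)$, then after one more round $v$ either already has $\m_v[0]=\Fusion(v)$ or executes $\RMin$ and sets it correctly. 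Let $d_{\max}$ be the largest depth with $\mu_{d_{\max}}(\gamma(t))\neq 0$. Leaves (which are at the bottom) have $\Child(v)=\emptyset$, so $\FC(v)=\emptyset$ and $\Fusion(v)=\FA(v)$ is computed from local data alone, giving the base case. For the inductive step, nodes at depth $d>d_{\max}$ are already correct and, their children being even deeper and correct, they stay correct (Rule $\RMin$ is disabled); nodes at depth $d_{\max}$ have all their children at depth $>d_{\max}$, hence correct, so after round $t+1$ they become correct too. Therefore $\mu_{d}(\gamma(t+1))=0$ for all $d\geq d_{\max}$, giving $\Psi(\gamma(t+1))<\Psi(\gamma(t))$, and more precisely $h_F$ decreases by at least one per round, so a configuration with $\Psi=0$, i.e. $F\in\mathcal{F}_2(\gamma')$, is reached in $O(h_F)$ rounds. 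Finally, $\Gamma_{\tt CF}$ is closed (the distance correction mechanism never creates cycles once removed), so $\gamma'\in\Gamma_{\tt CF}$.

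The main obstacle I anticipate is verifying carefully that $NCand$ and the tie-breaking in Macros $\FC, \FA, \Fusion$ make $\m_v$ (not just $\m_v[0]$) a well-defined function of the neighborhood and children's $\m$-values, and in particular that once $v$ writes $\m_v:=(\Fusion(v),NCand(\Fusion(v)))$ it does not need to rewrite it in a later round because some child changed — this is exactly why the bottom-up order matters and why the potential must be keyed on depth with the $(n+1)^d$ weights. A secondary subtlety is confirming that none of the other rules ($\RF$ in particular, whose guard involves $ChangeNewP$) can fire at a node of $F$ before $F\in\mathcal{F}_2$ in a way that would perturb the labels or the $\size$ variables; this follows because those rules modify only $newp$, $newd$ (and, in $\RFEnd$/$\RC$, $\parent$/$d$/$\lab$ — but $\RFEnd$ requires $\neg ChangeNewP$ which cannot hold before the minimum-outgoing-edge computation stabilizes, and $\RC$ requires $\Fusion(v)=\emptyset$, contradicting our hypothesis). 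I would state these as brief remarks rather than belaboring them.
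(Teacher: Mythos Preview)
Your proposal is correct and follows essentially the same approach as the paper: a potential function $\sum_d \mu_d(\gamma)(n+1)^d$ keyed on the level in $F$, with the bottom-up observation that once all children of $v$ have stabilized their $\m$-variables, one further round of $\RMin$ stabilizes $\m_v$. The only cosmetic differences are that the paper uses the absolute difference $w_m(v)=|\m_v[0]-\Fusion(v)|$ in place of your $0/1$ indicator $m(v)$, and that you add (helpful) explicit remarks on closure of $CorrectF$ and non-interference of $\RF,\RDist,\RFEnd,\RC$ which the paper leaves implicit.
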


\begin{proof}
In the following we define the potential function ${\cal M}$. First, let $w_m$:  $V\rightarrow \mathbb{N}$ be the function defined by: 
$$w_m(v)=|\m_v[0]-\min\Big (\min\{\m_u[0]:u\in\Child(v)\}, \min\{w\{v,u\}: (u,v) \in \OEd(v)\}\Big )|.$$
Note that, we have $w_m \geq 0$. Variable $\m_v$ has a correct value at node $v$ if and only if $w_m=0$. Let a fragment $F \in \mathcal{F}_1(\gamma)$ in a configuration $\gamma \in \Gamma_{\tt CF}$, and ${\cal M}: \Gamma\rightarrow\mathbb{N}$ be the function defined by:
$${\cal M}(\gamma)=\mathlarger{\sum}_{\mathsmaller {d=0}}^{\mathsmaller{h_{\tt f}}} m_d(\gamma) (n+1)^d$$
where $m_d(\gamma)$ is the number of nodes $v$ at height $d$ in $F$ with  $w_m(v)\neq0$. We denote $n_F$ the number of nodes in fragment $F$. Note that $0\leq m_d(\gamma) \leq n_F$, and $0\leq {\cal M}(\gamma)\leq (n_F+1)^{h_F+1}$. Moreover, the variable $\m$ has a correct value at every node in $F$ if and only if  ${\cal M}(\gamma)=0$. We note $\gamma(t)$ the configuration of the system after round $t$. Let $d_0$ be the largest index such that $m_{d_0}(\gamma(t))\neq 0$. Since we use a weakly fair scheduler, all the nodes of Fragment $F$ are scheduled during the execution of round $t+1$. Every node $v$ at height $d>d_0$ does not change the value of its variable $\m$ (see Rule $\RMin$), and therefore $w_m(v)$ remains equal to zero, so $m_d(\gamma(t+1))$ is equal to zero as well. The nodes $v$ at height $d_0$ change their variable $\m_v$ according to the variable $\m_u$ of their children $u \in F$ (see Rule $\RMin$). Let $v$ be a node at height $d_0$.  The children of $v \in F$ (if any) are at height $d>d_0$. Thus, their variable $\m$ has not changed, and therefore $w_m(v)$ becomes zero after round $t+1$. As a consequence, $m_{d_0}(\gamma(t+1))=0$. Therefore, we get 
$${\cal M}(\gamma(t+1))<{\cal M}(\gamma(t))$$
and  thus the system will eventually reach a configuration where all the variables $\m$ contains the minimum outgoing edge of the sub-fragment rooted at $v \in F$ (see Predicate $\Fusion(v)$).

To measure the number of rounds it takes to converge, observe that $d$ decreases by at least one at each round. Since $d \leq h_F$, we get that starting from any  configuration $\gamma \in \Gamma_{\tt CF}$ with $F \in \mathcal{F}_1(\gamma)$ the system reaches a configuration where for every node $v$ in Fragment $F$ the variable $\m_v$ is correct after $O(h_F)$ rounds.
\end{proof}

\begin{lemma}
\label{lem:merging_two_frag}
In every configuration $\gamma \in \Gamma_{\tt CF}$, there are at least two fragments $F_1$ and $F_2$, $F_1,F_2 \in \mathcal{F}_2(\gamma)$ which select the same minimum-weight outgoing edge for merging.
\end{lemma}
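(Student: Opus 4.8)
The plan is to argue by a standard minimum-weight argument on outgoing edges, combined with the fact that in a configuration of $\Gamma_{\tt CF}$ with more than one fragment, each fragment's root has already committed (via $\RMin$ and Macro $\Fusion$) to \emph{its own} globally minimum outgoing edge. First I would observe that if $\gamma \in \Gamma_{\tt CF}$ and $\gamma \notin \mathcal{B}$ because it contains at least two fragments, then for every fragment $F \in \mathcal{F}_2(\gamma)$ the root $r_F$ stores in $\m_{r_F}$ the minimum-weight outgoing edge of the whole fragment $F$ (this is exactly the content of $F \in \mathcal{F}_2(\gamma)$, where $\m_v[0]=\Fusion(v)$ at every node, in particular at the root where the subtree is all of $F$). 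Denote this edge $\MF_F=\{u_F,v_F\}$ with $w(\MF_F)$ its weight. Note that since there is at least one outgoing edge from each fragment (the graph $G$ is connected and there are at least two fragments), $\Fusion(r_F)\neq\emptyset$, so $\mathcal{F}_2(\gamma)$ is populated by every fragment once the minimum-computation has converged; by Lemma~\ref{lem:Min} this happens, so the hypothesis ``$F_1,F_2\in\mathcal{F}_2(\gamma)$'' can indeed be met.

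Next I would pick, among all fragments in $\mathcal{F}(\gamma)$, a fragment $F^\star$ whose minimum outgoing edge $e^\star=\MF_{F^\star}$ has the smallest weight (breaking ties by the lexicographic order on $(w,\id,\id)$ that Macro $\Fusion$/$NCand$ uses, so that $e^\star$ is unambiguously defined). Let $F'$ be the fragment on the other endpoint of $e^\star$; since $e^\star$ is outgoing for $F^\star$, $F'\neq F^\star$, and $e^\star$ is an outgoing edge of $F'$ as well. The key step is to show $\MF_{F'}=e^\star$: indeed $w(\MF_{F'})\le w(e^\star)$ because $e^\star$ is one of $F'$'s outgoing edges and $\MF_{F'}$ is the minimum over them; and $w(\MF_{F'})\ge w(e^\star)$ by the minimality of $e^\star$ over all fragments (with the same consistent tie-break ensuring no strictly-lexicographically-smaller outgoing edge exists for $F'$). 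Hence $\MF_{F'}=e^\star=\MF_{F^\star}$, so $F^\star$ and $F'$ have selected the very same edge; taking $F_1=F^\star$ and $F_2=F'$ proves the claim.

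The main obstacle I anticipate is handling ties cleanly: if edge weights are not distinct, two distinct fragments could a priori have minimum outgoing edges of equal weight but the ``same-edge'' conclusion could fail without a consistent tie-break. The resolution is exactly the lexicographic tie-break already baked into Macros $\Fusion$, $NCand$ and $NewParent$ (minimum over $(u,v)$ with $w(u,v)=val$): since every node applies the \emph{same} deterministic rule to the \emph{same} edge $e^\star$ (which has a unique pair of endpoint identifiers), both $F^\star$ and $F'$ select precisely $e^\star$ and not merely some weight-$w(e^\star)$ edge. A second, minor point to check is that $\gamma\in\Gamma_{\tt CF}$ and the presence of $\ge 2$ fragments guarantees each fragment has at least one outgoing edge — this follows from connectivity of $G$ — so that $\Fusion(r_F)\neq\emptyset$ and the fragments genuinely lie in $\mathcal{F}_2(\gamma)$ after the minimum-edge computation converges.
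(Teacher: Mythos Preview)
Your argument is correct and follows a different route from the paper's own proof. The paper argues by contradiction and counting: it asserts that, thanks to the total order on edges given by the tuple (weight, endpoint identifiers) used in Macros $\Fusion$ and $NCand$, $k=|\mathcal{F}_2(\gamma)|$ fragments can select at most $k-1$ distinct minimum-weight outgoing edges, and then applies pigeonhole to conclude that two selections coincide. Your approach is constructive instead: you exhibit the pair directly by taking the globally smallest selected edge $e^\star$ (under that same total order) and observing that both of its endpoint fragments must select precisely $e^\star$. Your proof is shorter and names the merging pair explicitly; the paper's argument is the classical Bor\r{u}vka-style pigeonhole observation (implicitly relying on the fact that the selected edges form an acyclic subgraph of the fragment graph). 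Both proofs lean on the same deterministic lexicographic tie-break to handle equal weights, which you correctly identify as the only delicate point.
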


\begin{proof}
Assume, by the contradiction, that there exists a configuration $\gamma \in \Gamma_{\tt CF}$ with less than two fragments in $\mathcal{F}_2(\gamma)$ which select the same minimum-weight outgoing edge. This implies in $\gamma$ that either at least one fragment $F \in \mathcal{F}(\gamma)$ which has not computed its minimum-weight outgoing edge, or every fragment $F \in \mathcal{F}_2(\gamma)$ has selected a different minimum-weight outgoing edge. In the former case, there is a contradiction since according to Lemma~\ref{lem:Min} in $O(h_F)$ additional rounds the system reaches a configuration $\gamma'$ in which at least two fragments $F_1$ and $F_2$, $F_1,F_2 \in \mathcal{F}_2(\gamma)$ which select the same minimum-weight outgoing edge for merging. Otherwise, let $|\mathcal{F}_2(\gamma)|$ denotes the number of fragments in the set $\mathcal{F}_2(\gamma)$ in $\gamma$. In the latter case, exactly $|\mathcal{F}_2(\gamma)|$ minimum-weight outgoing edges have been selected in $\gamma$. However, we can observe that we can define a total order on the outgoing edges in each configuration in $\Gamma_{\tt CF}$ based on the tuple defined by the edges weight and the identifiers of the extremities of the edges. By using this total order, $n$ fragments could select at most the $n-1$ minimum outgoing edges. Thus, since Algorithm $\MSTA$ uses these method to select the minimum-weight outgoing edge of each fragment (see Macros $\Fusion(v)$ and $NCand(\Fusion(v))$) then at most $|\mathcal{F}_2(\gamma)|-1$ different outgoing edges are selected in $\gamma$. So, there are at least two fragments $F_1,F_2 \in \mathcal{F}_2(\gamma)$ which select the same minimum outgoing edge, a contradiction.
\end{proof}

% \begin{lemma}
% $R_2 = \{\gamma \in \Gamma : \forall \mbox{ Fragment } {\tt F} : (\forall v \in {\tt F}, CorrectF(v) \wedge \m_v[0]=\Fusion(v) \neq \emptyset) \Rightarrow \neg ChangeNewP(v)))\}$ is an attractor from $R_1$.
% \end{lemma}
\begin{lemma}
\label{lem:compute_newp}
Let any fragment $F, F \not \in \mathcal{F}_3(\gamma),$ of a configuration $\gamma \in \Gamma_{\tt CF}$. If $F \in \mathcal{F}_2(\gamma)$ then every computation suffix starting from $\gamma$ contains a configuration $\gamma' \in \Gamma_{\tt CF}$ such that $F \in \mathcal{F}_3(\gamma')$.
\end{lemma}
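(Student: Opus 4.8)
The plan is to reuse the top-down potential argument of Lemma~\ref{lem:label_convergence}, now applied to the variable $newp$ and Rule $\RF$ in place of $\lab$ and Rule $\RLC$.

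First I would show that, along any computation suffix starting from $\gamma$, as long as $F\notin\mathcal{F}_3$ the data driving the future-parent computation inside $F$ is frozen. For every $v\in F$ we have $\Distance(v)$ (we are in $\Gamma_{\tt CF}$), $\SizeC(v)$ and $\Label(v)$ (since $F\in\mathcal{F}_1$), and $\m_v[0]=\Fusion(v)\neq\emptyset$ (since $F\in\mathcal{F}_2$); hence $\RCo$, $\RSC$, $\RLC$ and $\RMin$ are disabled at $v$, and $\RC$ is disabled because $Recover(v)$ requires $\Fusion(v)=\emptyset$. I would then argue that $\RFEnd$ is disabled at every $v\in F$ too while $F\notin\mathcal{F}_3$: if an ancestor of $v$ still carries $ChangeNewP$, then either $ChangeNewP(v)$ itself holds, or $v$'s future distance is still inconsistent so that $ChangeNewD(v)$ holds (recall $\RF$ resets $newd_v$ to $\infty$); in either case $\RFEnd$'s guard is false. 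Consequently the only rule a node of $F$ can fire is $\RF$, which writes only $newp_v$ and $newd_v$; in particular $\parent$, $\size$, $\lab$, $d$ and $\m$ are untouched inside $F$, so $F$ stays in $\mathcal{F}_2$, the quantities $\Child(v)$, $\OEd(v)$, $\m_v$, $NewParent(v)$ keep their values, and since no rule creates a cycle the configuration stays in $\Gamma_{\tt CF}$.

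Next I would set up the potential. For a configuration $\gamma$ let $q_d(\gamma)$ be the number of nodes $v\in F$ at depth $d$ (distance to the root of $F$) with $ChangeNewP(v)$ true, let $h_F$ be the height of $F$, and put
$$\Psi_F(\gamma)=\mathlarger{\sum}_{d=0}^{h_F} q_d(\gamma)\,(n+1)^{n+1-d}.$$
Then $\Psi_F(\gamma)$ is a non-negative integer and $\Psi_F(\gamma)=0$ iff $F\in\mathcal{F}_3(\gamma)$; as in Lemma~\ref{lem:label_convergence}, the weight $(n+1)^{n+1-d}$ is chosen so that one bad node at depth $d_0$ outweighs all potential bad nodes at depths $>d_0$. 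Let $\gamma(t)$ be the configuration at the start of a round with $\Psi_F(\gamma(t))>0$ and let $d_0$ be the smallest depth with $q_{d_0}(\gamma(t))\neq0$. A downward induction on depth (base case: the root of $F$, for which $Reorientation$ holds trivially and whose $newp$ stays put once correct) shows every node at depth $<d_0$ keeps $\neg ChangeNewP$ during the round. A node $v$ at depth $d_0$ with $ChangeNewP(v)$ true is enabled for $\RF$, so by weak fairness it executes $\RF$ during the round (its guard can only be falsified by $ChangeNewP(v)$ becoming false, which is the desired outcome); afterwards $newp_v$ equals $NewParent(v)$ if $Reorientation(v)$ held and $\parent_v$ otherwise, and since $v$'s parent sits at depth $d_0-1$ with an already-stabilized $newp$, the value of $Reorientation(v)$ used is final, so $ChangeNewP(v)$ is false afterwards. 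Hence $q_d(\gamma(t+1))=0$ for all $d\le d_0$, i.e., the smallest bad depth strictly increases, and $\Psi_F$ strictly decreases even if every deeper node becomes bad. Since $0\le d_0\le h_F$, after at most $h_F+1$ rounds $\Psi_F=0$, so the suffix contains $\gamma'\in\Gamma_{\tt CF}$ with $F\in\mathcal{F}_3(\gamma')$.

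The main obstacle is the coupling introduced by $Reorientation(v)$, whose truth depends on $newp_{\parent_v}$: making the ``freezing'' of the already-correct prefix $\{v:d(v)<d_0\}$ rigorous hinges entirely on the claim of the first step that no node of $F$ fires $\RFEnd$ (and, in particular, never resets some $newp_u$ to $\emptyset$ through its ``$newd_v=0$'' branch) before $F$ reaches $\mathcal{F}_3$. Turning the intuition ``an ancestor still carrying $ChangeNewP$ forces $ChangeNewP(v)\vee ChangeNewD(v)$'' into a proof from the definitions of $ChangeNewD$, $CopyVar$ and the effect of $\RF$ on $newd_v$ is the spot where the argument genuinely departs from a verbatim copy of Lemma~\ref{lem:label_convergence}, and I expect it to be the hardest part.
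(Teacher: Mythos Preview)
Your proposal is correct, and it takes a genuinely different route from the paper's own proof. The paper argues by contradiction: it asserts that for any $v\in F$ with $ChangeNewP(v)$ only $\RF$ can be enabled, then shows that $\RF$ cannot become disabled at $v$ by a neighbour's step (because only $\RF$ writes $newp_v$), and concludes directly by weak fairness that every such $v$ eventually executes $\RF$ and thereafter satisfies $\neg ChangeNewP(v)$. There is no potential function and no layer-by-layer argument; the round bound is deferred to the separate Lemma~\ref{lem:time_mwoe}, which does a top-down induction on depth. Your approach instead transplants the potential of Lemma~\ref{lem:label_convergence} to $newp$, proving progress and the $O(h_F)$ bound in one shot; in effect you merge Lemmas~\ref{lem:compute_newp} and~\ref{lem:time_mwoe}. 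What the paper buys is brevity; what you buy is an explicit termination measure and the round complexity for free.

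On the obstacle you flag: you are right that ruling out $\RFEnd$ while $F\notin\mathcal{F}_3$ is the one genuinely non-mechanical step, and the paper's proof of Lemma~\ref{lem:compute_newp} does not discharge it either---it simply asserts ``only Rule $\RF$ could be enabled'' and moves on. The argument you sketch (an ancestor with $ChangeNewP$ forces $ChangeNewP(v)\vee ChangeNewD(v)$ via the $newd_v:=\infty$ reset in $\RF$) is essentially the one the paper finally spells out later, inside the proof of Lemma~\ref{lem:merging_two_frag_dist}. So your instinct about where the difficulty lies is accurate, and your plan for handling it matches what the paper ultimately does.
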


\begin{proof}
Assume, by the contradiction, that there exists a suffix $e''$ starting from $\gamma$ with no configuration $\gamma' \in \Gamma_{\tt CF}$ such that $F \in \mathcal{F}_3(\gamma')$ in computation $e=e'e''$. Consider the configuration $\gamma$. Since $F \in \mathcal{F}_2(\gamma)$ and $F \not \in \mathcal{F}_3(\gamma)$, only Rule $\RF$ could be enabled at a node $v \in F$ (by definition of $\mathcal{F}_2(\gamma)$ and according to the guards of rules given in the formal description of Algorithm $\MSTA$). Moreover, as $F \not \in \mathcal{F}_3(\gamma)$ there exists at least one node $v \in F$ such that Predicate $ChangeNewP(v)$ is satisfied at $v$. Consider a computation step $\gamma \mapsto \gamma''$ of $e$. Assume that Rule $\RF$ is enabled at $v$ in $\gamma$ and not in $\gamma''$ but $v$ did not execute Rule $\RF$. If $v$ is the root of $F$ (i.e., $\parent_v=\emptyset$) or $v$ is on the path between the root of $F$ and the selected minimum-weight outgoing edge then $\neg ChangeNewP(v)$ implies that $newp_v=NewParent(v)$, a contradiction since Rule $\RF$ is the only rule in $\gamma$ which can change variable $newp_v$. Otherwise, for every other node $v \in F$, $\neg ChangeNewP(v)$ implies that $newp_{\parent_v}=\parent_v$, a contradiction since Rule $\RF$ is the only rule in $\gamma$ which can change variable $newp_v$. By weakly-fairness assumption on the daemon, every node $v \in F$ executes Rule $\RF$ and satisfies $\neg ChangeNewP(v)$.

Finally, we can observe that the set of fragments $\mathcal{F}_3(\gamma)$ is included in the set $\mathcal{F}_2(\gamma)$ by definition in a configuration $\gamma$.
\end{proof}

% \label{lem:time_mwoe}
% Let ${\tt F}$ be a fragment such that every node $v \in {\tt F}$ has computed the minimum-weight outgoing edge of its subtree. In $O(h_{\tt F})$ rounds every node $v \in {\tt F}$ has computed the correct value of variable $newp_v$ (i.e., $\forall v \in {\tt F}$ we have $\neg NewChangeP(v)$).
% \end{lemma}
% \begin{lemma}
% \label{lem:time_mwoe}
% Let ${\tt F}$ be a fragment such that every node $v \in {\tt F}$ has computed the minimum-weight outgoing edge of its subtree. In $O(h_{\tt F})$ rounds the system reaches a configuration in $R_2$.
% \end{lemma}
\begin{lemma}
\label{lem:time_mwoe}
Let any fragment $F \in \mathcal{F}_2(\gamma)$ in a configuration $\gamma \in \Gamma_{\tt CF}$. In $O(h_F)$ rounds, we have $F \in \mathcal{F}_3(\gamma')$, with $\gamma' \in \Gamma_{\tt CF}$ and $h_F$ the height of $F$.
\end{lemma}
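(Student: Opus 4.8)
The plan is to prove Lemma~\ref{lem:time_mwoe} by the same potential-function technique used in Lemma~\ref{lem:min_convergence}, but now applied to the variable $newp$ instead of the variable $\m$. Fix a fragment $F \in \mathcal{F}_2(\gamma)$ with $\gamma \in \Gamma_{\tt CF}$. By Lemma~\ref{lem:compute_newp} we already know that the system eventually reaches a configuration $\gamma'$ with $F \in \mathcal{F}_3(\gamma')$; what remains is to bound the number of rounds needed by $O(h_F)$. Since $F \in \mathcal{F}_2(\gamma)$, every node of $F$ has already correctly computed its minimum-weight outgoing edge, so the guards show that the only rule enabled at a node $v \in F$ is $\RF$, and executing it sets $newp_v$ to either $NewParent(v)$ (if $v$ lies on the root-to-MWOE path, i.e.\ $Reorientation(v)$ holds) or to $\parent_v$ (otherwise). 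Crucially, the correct target value of $newp_v$ depends only on $\parent_v$, on $\m_v$ and $\m_{\parent_v}$, and on the parent's state through $Reorientation(v)$ — all of which are stable once $F \in \mathcal{F}_2(\gamma)$ — and on whether $\parent_v$ has already set its own $newp$ correctly (through the clause $newp_{\parent_v}=v$ inside $Reorientation$).

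The key observation is that the computation of $newp$ propagates top-down along the tree $F$: a node $v$'s predicate $ChangeNewP(v)$ can only toggle to false after its parent has a correct $newp$ value. So I would define, for each node $v \in F$, a quantity $q(v) \in \{0,1\}$ equal to $1$ iff $newp_v$ differs from its correct target value (equivalently iff $ChangeNewP(v)$ would still fire after the parent is fixed), and then set
$$\mathcal{P}(\gamma)=\mathlarger{\sum}_{\mathsmaller{d=0}}^{\mathsmaller{h_F}} p_d(\gamma)\,(n+1)^{h_F-d},$$
where $p_d(\gamma)$ is the number of nodes $v$ at depth $d$ in $F$ with $q(v)\neq 0$. (Note the exponent is $h_F-d$, giving the highest weight to nodes nearest the root — the opposite orientation from the bottom-up $\m$ computation in Lemma~\ref{lem:min_convergence}, matching the top-down direction here.) Let $d_0$ be the smallest depth with $p_{d_0}(\gamma(t))\neq 0$. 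By weak fairness every node of $F$ is scheduled in round $t+1$; nodes at depth $d<d_0$ already have correct $newp$ and their guards are disabled, so they do not change $newp$; a node $v$ at depth $d_0$ has a parent at depth $<d_0$ whose $newp$ is already correct, so when $v$ executes $\RF$ it reads the now-stable values of $\parent_v$, $\m_v$, $\m_{\parent_v}$ and $newp_{\parent_v}$ and sets $newp_v$ to the correct target, making $q(v)=0$. Hence $p_{d_0}(\gamma(t+1))=0$ and $\mathcal{P}(\gamma(t+1))<\mathcal{P}(\gamma(t))$, and moreover the smallest ``bad'' depth strictly increases each round, so after $O(h_F)$ rounds $\mathcal{P}=0$, which is exactly $F \in \mathcal{F}_3(\gamma')$. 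Closure under $\Gamma_{\tt CF}$ holds because Rule $\RF$ modifies only $newp$ and $newd$, never $\parent$ or $d$, so the no-cycle invariant is untouched; also $\mathcal{F}_3 \subseteq \mathcal{F}_2$ is stable since no rule enabled in an $\mathcal{F}_2$ fragment other than $\RF$ can act, and $\RF$ only decreases $\mathcal{P}$.

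The main obstacle I anticipate is handling the subtlety that the ``correct target'' of $newp_v$ is itself defined by a predicate ($Reorientation(v)$) that references the parent's $newp$ value, so one has to be careful that the quantity $q(v)$ is well-defined and that a node at the current frontier depth $d_0$ genuinely stabilizes in one step rather than oscillating — in particular, I must check that once $\parent_v$ has its final $newp$, the value $NewParent(v)$ that $v$ computes is stable and does not depend on anything still in flux at depth $\geq d_0$. A second, minor, point is the boundary case of the root $r$ of $F$ (where $\parent_r=\emptyset$, so $Reorientation(r)$ holds vacuously and $newp_r = NewParent(r)$ is determined immediately from $\m_r$), and the symmetric case of the future root of the merged fragment; these should be absorbed cleanly into the depth-$0$ base case of the induction. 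Once these are dispatched, the potential argument is a routine transcription of the proof of Lemma~\ref{lem:min_convergence} with the depth ordering reversed.
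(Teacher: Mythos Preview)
Your proposal is correct and follows essentially the same approach as the paper: a top-down, level-by-level argument showing that after $j$ rounds every node at depth at most $j$ in $F$ has its correct $newp$ value, yielding the $O(h_F)$ bound. The paper phrases this as a direct induction on the depth $j$ (with the root as base case), whereas you recast the same induction as a potential function $\mathcal{P}$ weighted by $(n+1)^{h_F-d}$; the two presentations are equivalent, and your anticipated subtleties (the root's base case and the dependence of $Reorientation(v)$ on $newp_{\parent_v}$) are exactly the points the paper's induction step handles.
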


\begin{proof}
Let $d_F(v)$ denotes the height of $v$ in $F$. We show by induction the following proposition: In at most $O(j+1)$ rounds, we have $\forall v \in F, d_F(v) \leq j \Rightarrow ([(\parent_v=\emptyset \vee \m_{\parent_v}=\m_v) \Rightarrow newp_v=NewParent(v)] \vee newp_v=\parent_v)$.

In base case $j=0$. Consider the root $v$ of Fragment $F$ (i.e., $\parent_v=\emptyset$). If $newp_v \neq NewParent(v)$ then Rule $\RF$ is enabled at $v$ in round 0, since $(\parent_v=\emptyset \wedge newp_v \neq NewParent(v)) \Rightarrow ChangeNewP(v)$. Therefore, since the daemon is weakly fair then in the first configuration of round 1, $v$ executes Rule $\RF$ and we have $newp_v=NewParent(v)$ at $v$ which verifies the proposition.

Induction case: We assume that in round $j=h_F-1$ we have $\forall u \in F, d_F(v) \leq j \Rightarrow ([(\parent_v=\emptyset \vee \m_{\parent_v}=\m_v) \Rightarrow newp_v=NewParent(v)] \vee newp_v=\parent_v)$. We have to show that in round $j+1$ we have $\forall v \in F, d_F(v) \leq j+1 \Rightarrow ([(\parent_v=\emptyset \vee \m_{\parent_v}=\m_v) \Rightarrow newp_v=NewParent(v)] \vee newp_v=\parent_v)$. Consider any node $v \in F$ of height $j+1$ in $F$. By induction hypothesis, we have either $newp_{\parent_v}=NewParent(\parent_v)=v$ or $\m_{\parent_v} \neq \m_v$. In the former case, if $newp_v \neq NewParent(v)$ and $\m_v=\m_{\parent_v}$ then Rule $\RF$ is enabled at $v$ in round $j$ (because $(newp_{\parent_v}=NewParent(\parent_v)=v \wedge \m_v=\m_{\parent_v} \wedge newp_v \neq NewParent(v)) \Rightarrow ChangeNewP(v)$). In the latter case, if $newp_v \neq \parent_v$ and $\m_v \neq \m_{\parent_v}$ then Rule $\RF$ is enabled at $v$ in round $j$ (because $(\m_v \neq \m_{\parent_v} \wedge newp_v \neq \parent_v) \Rightarrow ChangeNewP(v)$). Thus, since the daemon is weakly fair then in the first configuration of round $j+1$ $v$ executes Rule $\RF$. So, we have $([(\parent_v=\emptyset \vee \m_{\parent_v}=\m_v) \Rightarrow newp_v=NewParent(v)] \vee newp_v=\parent_v)$ at $v$. Therefore, in at most $O(h_F)$ rounds we have $\forall v \in F, d_F(v) \leq h_F \Rightarrow ([(\parent_v=\emptyset \vee \m_{\parent_v}=\m_v) \Rightarrow newp_v=NewParent(v)] \vee [\m_{\parent_v} \neq \m_v \wedge newp_v=\parent_v])$ and this implies that $\forall v \in F, \neg ChangeNewP(v)$.
\end{proof}

% \begin{lemma}
% $R_3 = \{\gamma \in \Gamma : \forall \mbox{ Fragment } {\tt F} : (\forall v \in V, CorrectF(v) \wedge \m_v[0]=\Fusion(v) \neq \emptyset \wedge \neg ChangeNewP(v)) \Rightarrow \neg ChangeNewD(v)\}$ is an attractor from $R_2$.
% \end{lemma}
\begin{lemma}
\label{lem:merging_two_frag_dist}
Let any two fragments $F_1$ and $F_2$, $F_1,F_2 \not \in \mathcal{F}_4(\gamma),$ of a configuration $\gamma \in \Gamma_{\tt CF}$. If $F_1,F_2 \in \mathcal{F}_3(\gamma)$ and the same minimum-weight outgoing edge is selected by the two fragments then every computation suffix starting from $\gamma$ contains a configuration $\gamma' \in \Gamma_{\tt CF}$ such that $F_1,F_2 \in \mathcal{F}_4(\gamma')$.
\end{lemma}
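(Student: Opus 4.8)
The plan is to mimic the structure of the proof of Lemma~\ref{lem:time_mwoe}, replacing the predicate $ChangeNewP$ by $ChangeNewD$ and the role of $NewParent$ by $NewDist$. First I would argue that once $F_1,F_2 \in \mathcal{F}_3(\gamma)$ and the same minimum-weight outgoing edge $e=\{a,b\}$ is selected by both, the future-parent pointers $newp$ are frozen: by definition of $\mathcal{F}_3$ every node has $\neg ChangeNewP$, and inspecting the guards of Algorithm $\MSTA$, the only rule that writes $newp_v$ is $\RF$, whose guard requires $ChangeNewP(v)$; hence $newp$ does not change on $F_1 \cup F_2$ along the suffix (as long as these fragments remain in $\mathcal{F}_3$, which holds because $\RCo$, $\RSC$, $\RLC$ are disabled on correctly-labelled cycle-free fragments, and $\RFEnd$ cannot fire before $\neg ChangeNewD$ holds everywhere). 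Consequently the structure $T''$ induced by $newp$ on $F_1\cup F_2$ is a well-defined tree rooted at $u=\min(a,b)$ with $newp_{newp_u}=u$, and $NewDist(v)$ depends only on the (now fixed) $newp$ relation and on $newd$ values of $newp$-ancestors.

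Next I would run the top-down induction along this fixed tree $T''$, on the $newp$-depth $j$ of a node. Base case: the new root $u$ satisfies $newp_{newp_u}=u$ and $\id_{newp_u}$ is either larger or smaller than $\id_u$, so $NewDist(u)\in\{0,1\}$; if $newd_u$ disagrees then $ChangeNewD(u)$ holds (first disjunct), $\RDist$ is enabled, and by weak fairness $u$ executes it within one round, after which $newd_u=NewDist(u)$ and—since $newp$ and the parity-of-id comparison are frozen—$\neg ChangeNewD(u)$ holds permanently. Inductive step: assume after $O(j+1)$ rounds every node at $newp$-depth $\le j$ has its correct $newd$ and never changes it again; a node $v$ at depth $j+1$ has its $newp$-parent stable with the correct $newd$, so if $newd_v\neq newd_{newp_v}+1$ then the second or third disjunct of $ChangeNewD(v)$ is satisfied (one must check the three cases of $NewDist$: the ``Otherwise'' branch for a genuine descendant, and the reorientation branches along $\path$ to $e$—here I would note that a node $v$ with $\parent_{newp_v}=v$ is exactly one whose $newp$-child used to be its tree-parent, matching the second disjunct of $ChangeNewD$), so $\RDist$ is enabled and fires within one round, giving $newd_v=NewDist(v)$ and $\neg ChangeNewD(v)$ forever after. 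Hence after $O(h_{F_1}+h_{F_2})$ rounds all of $F_1\cup F_2$ satisfies $\neg ChangeNewD$, i.e. $F_1,F_2\in\mathcal{F}_4(\gamma')$; and $\mathcal{F}_4(\gamma)\subseteq\mathcal{F}_3(\gamma)\subseteq\mathcal{F}_2(\gamma)$ by definition, so the membership chain is preserved.

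The main obstacle I expect is the careful bookkeeping that $ChangeNewD$ genuinely \emph{detects} every disagreement $newd_v\neq newd_{newp_v}+1$ across the three branches of $NewDist$, including the subtle reorientation region along the path from each old root to the shared outgoing edge, where $newp$ reverses the old parent relation: one must verify that in that region the disjunct $(\parent_{newp_v}=v \wedge newd_v\neq newd_{newp_v}+1)$ (resp. $newp_{newp_v}=v\wedge newd_v>1$ at the new root) exactly captures the mismatch, and also that the third disjunct $(newd_{\parent_v}\neq d_{\parent_v}\wedge newd_v\neq newd_{\parent_v}+1)$ does not spuriously keep $ChangeNewD$ enabled once the merge-tree distances are correct (this requires that outside the reorientation path $newp_v=\parent_v$, so $d_{\parent_v}=newd_{\parent_v}$ is forced in $\mathcal{F}_4$). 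A secondary, more routine point is confirming that no other rule perturbs $newd$ or $newp$ on $F_1\cup F_2$ during the suffix, which follows from the priority structure (higher-priority rules $\RCo$, $\RSC$, $\RLC$, $\RMin$ are disabled, and $\RFEnd$ is gated by $\neg ChangeNewD$); I would state this explicitly as the closure half of the argument, exactly as Lemma~\ref{lem:compute_newp} did for $\mathcal{F}_3$.
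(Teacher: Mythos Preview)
Your proposal is correct and in fact more detailed than the paper's own argument, but it takes a different route. The paper proves Lemma~\ref{lem:merging_two_frag_dist} by contradiction: it assumes some node $v\in F_1\cup F_2$ never executes $\RDist$, observes (as you do) that all other rules are disabled on $F_1\cup F_2$ once $F_1,F_2\in\mathcal{F}_3(\gamma)$, and then derives the contradiction directly from the fact that $\RF$ set $newd_v:=\infty$, so $ChangeNewD(v)$ must hold at the future root (first disjunct) and propagates down. In particular the paper leans on the $\infty$-initialisation rather than on your careful case split over the three disjuncts of $ChangeNewD$ along the reorientation path.

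Your direct top-down induction along the frozen $newp$-tree is the approach the paper reserves for the \emph{next} lemma (Lemma~\ref{lem:time_F4}), whose proof simply says ``same method as Lemma~\ref{lem:time_mwoe}''. So what you have written effectively merges the convergence statement of Lemma~\ref{lem:merging_two_frag_dist} with the $O(h_F)$ round bound of Lemma~\ref{lem:time_F4} into a single argument. That buys you the time complexity for free and forces you to confront the bookkeeping about the reorientation region that the paper's contradiction proof sidesteps; conversely, the paper's proof is shorter because the $newd_v=\infty$ trick makes the existence of an enabled $\RDist$ immediate without tracking which disjunct of $ChangeNewD$ fires where.
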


\begin{proof}
Assume, by the contradiction, that there exists a suffix $e''$ starting from $\gamma$ with no configuration $\gamma' \in \Gamma_{\tt CF}$ such that $F_1,F_2 \in \mathcal{F}_4(\gamma')$ in computation $e=e'e''$.  As Rule $\RDist$ is the only rule to modify variable $newd_v$ such that $\neg ChangeNewD(v)$ is satisfied when executed, this implies that there exists a node $v \in (F_1 \cup F_2)$ which never executes Rule $\RDist$ in the computation suffix $e''$. Consider the configuration $\gamma$. According to the formal description of Algorithm \MSTA, Rules $\RCo, \RSC, \RLC, \RMin, \RF$, and $\RC$ are disabled for any node $v \in (F_1 \cup F_2)$ since $F_1,F_2 \in \mathcal{F}_3(\gamma)$. Moreover, Rule $\RFEnd$ is disabled at any node $v \in (F_1 \cup F_2)$ as we have $d_u \neq newd_u$ for every node $u \in NewChild(v)$, because $ F_1,F_2 \not \in \mathcal{F}_4(\gamma)$ and $newd_v=\infty$ by the execution of Rule $\RF$. So, only Rule $\RDist$ could be enabled at every node $v \in (F_1 \cup F_2)$. This implies that Rule $\RDist$ is disabled for every node $v \in (F_1 \cup F_2)$ (i.e., we have $\neg ChangeNewD(v)$). Consider without loss of generality Fragment $F=F_1$. Note that $F \in \mathcal{F}_3(\gamma), F \not \in \mathcal{F}_4(\gamma),$ and $\forall v \in F$ we have $newd_v=\infty$ due to the execution of Rule $\RF$. If $v \in F$ is the future root of $F$ (after the merging phase) then either $newp_{newp_v} \neq v$, a contradiction because $F_1$ and $F_2$ have selected the same minimum-weight outgoing edge, or $newd_v \leq 1$, a contradiction since $newd_v \neq \infty$. If $v \in F$ is any other node in $F$ then we have $(newd_v=newd_{newp_v}+1) \Rightarrow \neg ChangeNewD(v)$, a contradiction since $newd_v \neq \infty$. Therefore, the system reaches a configuration $\gamma' \in \Gamma_{\tt CF}$ in which for every node $v \in F$ we have $\neg ChangeNewD(v)$, so $F \in \mathcal{F}_4(\gamma')$.

Finally, we can observe that the set of fragments $\mathcal{F}_4(\gamma)$ is included in the set $\mathcal{F}_3(\gamma)$ by definition in a configuration $\gamma$.
\end{proof}

% \begin{lemma}
% Let ${\tt F}$ be a fragment such that for every node $v \in {\tt F}$ we have $\neg NewChangeP(v)$ and two Fragments ${\tt F}$ and ${\tt F'}$ have selected the same minimum-weight outgoing edge during their merging phase. In $O(h_{\tt F})$ rounds the system reaches a configuration in $R_3$.
% \end{lemma}
\begin{lemma}
\label{lem:time_F4}
Let any fragment $F \in \mathcal{F}_3(\gamma)$ in a configuration $\gamma \in \Gamma_{\tt CF}$. In $O(h_F)$ rounds, we have $F \in \mathcal{F}_4(\gamma')$, with $\gamma' \in \Gamma_{\tt CF}$ and $h_F$ the height of $F$.
\end{lemma}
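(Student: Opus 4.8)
The plan is to mimic the proof of Lemma~\ref{lem:time_mwoe}, replacing the variable $newp$ and Rule $\RF$ by the variable $newd$ and Rule $\RDist$, i.e. a depth-by-depth argument along the \emph{future} tree. First I would record that, since $F\in\mathcal{F}_3(\gamma)\subseteq\mathcal{F}_2(\gamma)\subseteq\mathcal{F}_1(\gamma)$ and $\gamma\in\Gamma_{\tt CF}$, every node $v\in F$ satisfies $CorrectF(v)$, $\m_v[0]=\Fusion(v)\neq\emptyset$ and $\neg ChangeNewP(v)$; hence Rules $\RCo$, $\RSC$, $\RLC$, $\RMin$ and $\RF$ are disabled throughout $F$, and $\RC$ is disabled because $\Fusion(v)\neq\emptyset$ forbids $Recover(v)$. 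The only rules that can fire inside $F$ before the claim holds are $\RDist$ and, at nodes already satisfying $\neg ChangeNewD$, $\RFEnd$; and $\RFEnd$ only copies $newp_v$ and $newd_v$ into $\parent_v$ and $d_v$ (clearing $newp$ at a future root), which a short case check shows cannot retroactively re-enable $ChangeNewD$ at a strictly shallower node whose $newp$-pointer and $newd$-value are already frozen. Thus it suffices to show that $\RDist$ drives every node of $F$ to $\neg ChangeNewD$ within $O(h_F)$ rounds.

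Next I would fix the future tree $F''$ obtained by merging $F$ with the fragment $F'$ that has selected the same minimum-weight outgoing edge $e$, rooted at the endpoint $r$ of $e$ of smaller identifier; by Lemma~\ref{lem:merging_two_frag} such an $F'$ exists, and by Lemma~\ref{lem:compute_newp} together with Lemma~\ref{lem:time_mwoe} we may start from a configuration in which $F'\in\mathcal{F}_3$ as well, so that the pointers $newp$ of all nodes of $F\cup F'$ are stable and determine $F''$. I would then prove by induction on $j$ the statement: \emph{within $O(j+1)$ rounds, every node $v\in F$ at depth at most $j$ in $F''$ satisfies $newd_v=NewDist(v)$ and $\neg ChangeNewD(v)$.} For the base case the future root $r$ (if $r\in F$) has $NewDist(r)=0$ as soon as the two endpoints of $e$ point to one another, and if $newd_r$ disagrees then one disjunct of $ChangeNewD(r)$ holds, so $\RDist$ is enabled at $r$ and by weak fairness executes within one round, after which the value is frozen; the endpoint of $e$ lying in $F$ reaches $newd=1$ by the same argument. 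For the inductive step, a node $v$ at depth $j+1$ has its future parent $w=newp_v$ at depth $j$; by the induction hypothesis $newd_w$ is correct and stable after $O(j+1)$ rounds, whence $NewDist(v)=newd_w+1$ is well defined, $ChangeNewD(v)$ holds as long as $newd_v$ differs, and $\RDist$ fires at $v$ within one further round. Since reorienting a single root-to-endpoint path inside $F$ increases the depth of a node of $F$ by at most $O(h_F)$ and the attachment to $F'$ contributes only the (already stabilized) depth of the attachment endpoint, the induction terminates in $O(h_F)$ rounds and yields $F\in\mathcal{F}_4(\gamma')$.

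The delicate point, and the one I expect to cost the most work, is the coupling with the partner fragment $F'$: the endpoint of $e$ in $F$ has its future parent in $F'$, so whether the disjunct $newp_{newp_v}=v$ of $ChangeNewD(v)$ reflects the true future structure — and hence whether $NewDist$ at that endpoint equals $0$ or $1$ — depends on $F'$ having already executed $\RF$ at the matching endpoint. This is precisely the situation handled by Lemma~\ref{lem:merging_two_frag_dist}, which guarantees eventual joint progress of $F$ and $F'$ to $\mathcal{F}_4$; I would invoke it to launch the depth induction from a configuration in which the relevant portion of $F'$ is stabilized, and then charge only the $O(h_F)$ rounds of intra-$F$ propagation to obtain the stated bound. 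A secondary check is that along the reoriented path the three disjuncts of $ChangeNewD$ characterize exactly those nodes whose $newd$ is not yet of the form dictated by $NewDist$, so that a node is disabled for $\RDist$ as soon as, and only when, it has copied $NewDist(v)$ computed from its already-stabilized future parent; this is a routine but careful case analysis according to whether the edge $\{v,newp_v\}$ was reoriented.
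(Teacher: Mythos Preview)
Your proposal is correct and follows the same approach the paper indicates: the paper's proof of this lemma is a single sentence stating that one shows by induction on the height of $F$, using the same method as in the proof of Lemma~\ref{lem:time_mwoe}, that every node eventually satisfies $\neg ChangeNewD$. Your depth-by-depth induction along the future tree defined by the $newp$ pointers is the natural way to carry this out, and you are in fact more careful than the paper in two respects: you make explicit that the propagation of $newd$ runs along the \emph{future} orientation (so that the reoriented root-to-endpoint path and the subtrees hanging off it must be treated via the different disjuncts of $ChangeNewD$), and you identify and handle the coupling with the partner fragment $F'$ at the endpoint of the selected outgoing edge, invoking Lemma~\ref{lem:merging_two_frag_dist} to justify starting the induction once that endpoint is stable. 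The paper's one-line proof omits both points entirely.
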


\begin{proof}
We can show by induction on the height of $F$ that in $O(h_F)$ rounds every node $v \in F$ satisfies $\neg NewChangeD(v)$) using the same method as in proof of Lemma~\ref{lem:time_mwoe}.
\end{proof}
\begin{lemma}
\label{lem:merging_two_frag_end}
Let any fragment $F, F \not \in \mathcal{F}_5(\gamma),$ of a configuration $\gamma \in \Gamma_{\tt CF}$. If $F \in \mathcal{F}_4(\gamma)$ then every computation suffix starting from $\gamma$ contains a configuration $\gamma' \in \Gamma_{\tt CF}$ such that $F \in \mathcal{F}_5(\gamma')$.
\end{lemma}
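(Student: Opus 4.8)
The plan is to mirror the structure of Lemmas~\ref{lem:compute_newp} and~\ref{lem:merging_two_frag_dist}: assume by contradiction that there is a computation suffix $e''$ starting from $\gamma$ in which no configuration $\gamma'$ with $F \in \mathcal{F}_5(\gamma')$ is ever reached, and derive that some node must remain permanently enabled for Rule $\RFEnd$ without ever executing it, contradicting weak fairness. Since Rule $\RFEnd$ is the only rule that modifies $\parent_v$ and $d_v$ in a way that can make $\neg CopyVar(v)$ hold (by copying $newp_v,newd_v$ into $\parent_v,d_v$ and resetting $\m_v$), the hypothesis $F \notin \mathcal{F}_5(\gamma')$ throughout $e''$ forces the existence of at least one node $v \in F$ with $CopyVar(v)$ true that never executes $\RFEnd$ in $e''$.

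First I would argue that, given $F \in \mathcal{F}_4(\gamma)$, all the other rules are disabled on $F$: Rules $\RCo,\RSC,\RLC$ are disabled because $CorrectF(v)$ holds for every $v \in F$ (as $F \in \mathcal{F}_1(\gamma) \supseteq \mathcal{F}_4(\gamma)$); Rule $\RMin$ is disabled because $\m_v[0]=\Fusion(v)\neq\emptyset$ for every $v\in F$ (definition of $\mathcal{F}_2$); Rules $\RF$ and $\RDist$ are disabled because $\neg ChangeNewP(v)$ and $\neg ChangeNewD(v)$ hold for every $v\in F$ (definitions of $\mathcal{F}_3$ and $\mathcal{F}_4$); and Rule $\RC$ is disabled because $Recover(v)$ requires $\parent_v=newp_v \wedge d_v=newd_v$, which fails exactly at the nodes with $CopyVar(v)$ still active. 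Hence on $F$ only $\RFEnd$ can be enabled, and as long as $F\notin\mathcal{F}_5$ at least one node of $F$ has it enabled. I would then need to check closure: once a node executes $\RFEnd$, it does not re-enable $\RF$, $\RDist$, or $\RMin$ on itself or its neighbours in a way that takes $F$ out of $\mathcal{F}_4$ — here the guard $\neg ChangeNewP(v)\wedge\neg ChangeNewD(v)$ in $\RFEnd$ and the fact that $\RFEnd$ sets $\parent_v:=newp_v$ consistently with the already-fixed $newp$ structure are what keep $\mathcal{F}_4$ an attractor.

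The core of the argument is then a bottom-up induction along the $newp$-relation inside $F$ (which is a well-defined in-tree rooted at the future root, since $\neg ChangeNewP$ holds everywhere): I would show that after finitely many steps every leaf of this in-tree has $\parent=newp$ and $d=newd$, hence $\neg CopyVar$; then, by the structure of $CopyVar(v)\equiv(\forall u\in NewChild(v): d_u=newd_u \wedge \parent_u=newp_u)\wedge(\parent_v\neq newp_v\vee d_v\neq newd_v)$, once all future children of $v$ have copied, $v$ becomes enabled for $\RFEnd$ and by weak fairness eventually executes it. Propagating this up to the future root gives $\neg CopyVar(v)$ for all $v\in F$, i.e. $F\in\mathcal{F}_5(\gamma')$ — the contradiction. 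I would close by remarking, as in the previous lemmas, that $\mathcal{F}_5(\gamma)\subseteq\mathcal{F}_4(\gamma)$ by definition.

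The main obstacle I anticipate is the closure/attractor part rather than the induction: one must verify that the bottom-up sweep of $\RFEnd$ does not transiently violate $CorrectF$ or the $\mathcal{F}_2$/$\mathcal{F}_3$ conditions for nodes higher up in $F$ — in particular that changing $\parent_v$ and resetting $\m_v:=\emptyset$ at a node $v$ does not make $\SizeC$, $\Label$, or the $\m$-computation of its (old) parent momentarily incorrect before that parent also finishes. The special case $newd_v=0$ in $\RFEnd$, where the future root additionally sets $\parent_v:=\emptyset$ and $newp_v:=\emptyset$, needs separate care to confirm it still yields $\neg CopyVar$ and does not reopen $\RCo$ via a bad $\Distance$ check; handling this detachment of the two old roots cleanly is the delicate point, and I would isolate it as a small sub-claim about the order of execution (children before parents) enforced by $CopyVar$.
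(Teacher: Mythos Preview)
Your proposal is correct and follows essentially the same line as the paper's proof: assume a suffix never reaches $\mathcal{F}_5$, observe that on $F\in\mathcal{F}_4(\gamma)$ only Rule~$\RFEnd$ can be enabled, and use weak fairness together with the bottom-up structure of $CopyVar$ (leaves first, then parents once all future children have copied) to force every node of $F$ to eventually execute $\RFEnd$, concluding with $\mathcal{F}_5\subseteq\mathcal{F}_4$. Your closure discussion and the treatment of the $newd_v=0$ case are actually more careful than the paper's own argument; one small sharpening: $\RC$ is disabled on all of $F$ simply because $Recover(v)$ requires $\Fusion(v)=\emptyset$, which fails for every $v\in F\in\mathcal{F}_2(\gamma)$, so you need not rely on the $\parent_v=newp_v\wedge d_v=newd_v$ clause there.
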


\begin{proof}
Assume, by the contradiction, that there exists a suffix $e''$ starting from $\gamma$ with no configuration $\gamma' \in \Gamma_{\tt CF}$ such that $F \in \mathcal{F}_3(\gamma')$ in computation $e=e'e''$. Consider the configuration $\gamma$. Since $F \in \mathcal{F}_4(\gamma)$, only Rule $\RFEnd$ could be enabled at a node $v \in F$ (by definition of $\mathcal{F}_4(\gamma)$ and according to the guards of rules given in the formal description of Algorithm $\MSTA$). Moreover, as $F \not \in \mathcal{F}_5(\gamma)$ there exists at least one node $v \in F$ such that Predicate $CopyVar(v)$ is satisfied at $v$. Consider a computation step $\gamma \mapsto \gamma''$ of $e$. Assume that Rule $\RFEnd$ is enabled at $v$ in $\gamma$ and not in $\gamma''$ but $v$ did not execute Rule $\RFEnd$. If $v \in F$ is a leaf and $v$ is not adjacent to the minimum-weight outgoing edge of $F$ then $\neg CopyVar(v)$ implies that $\parent_v=newp_v \wedge d_v=newd_v$ (because $v$ has no neighbor $u \in Child(v)$), a contradiction since Rule $\RFEnd$ is the only rule which could copy the value of $newp_v$ (resp. $newd_v$) in $\parent_v$ (resp. $d_v$). Otherwise for every other node $v \in F$, $\neg CopyVar(v)$ implies that either $\parent_v=newp_v \wedge d_v=newd_v$ or $\exists u \in NewChild(v)$ such that  $d_u \neq newd_u \vee \parent_u \neq newp_u)$. In the former case, there is a contradiction since Rule $\RFEnd$ is the only rule which could copy the value of $newp_v$ (resp. $newd_v$) in $\parent_v$ (resp. $d_v$). In the latter case, there is a neighbor $u \in NewChild(v)$ which modified its variable $\parent_u$ or $d_u$ by executing Rule $\RCo$ or $\RC$, a contradiction since only Rule $\RFEnd$ could be enabled at a node $v \in F$ in $\gamma$. By weakly-fairness assumption on the daemon, every node $v \in F$ executes Rule $\RFEnd$ and satisfies $\neg CopyVar(v)$.

Finally, we can observe that the set of fragments $\mathcal{F}_5(\gamma)$ is included in the set $\mathcal{F}_4(\gamma)$ by definition in a configuration $\gamma$.
\end{proof}

\begin{lemma}
\label{lem:time_F5}
Let any fragment $F \in \mathcal{F}_4(\gamma)$ in a configuration $\gamma \in \Gamma_{\tt CF}$. In $O(h_F)$ rounds, we have $F \in \mathcal{F}_5(\gamma')$, with $\gamma' \in \Gamma_{\tt CF}$ and $h_F$ the height of $F$.
\end{lemma}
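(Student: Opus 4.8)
The plan is to follow the same strategy used in the proof of Lemma~\ref{lem:time_mwoe} and reused in Lemma~\ref{lem:time_F4}: Lemma~\ref{lem:merging_two_frag_end} already establishes that the target state is reached and that, as long as $F\in\mathcal{F}_4$, the only rule a node of $F$ can ever execute is $\RFEnd$; what is left is to turn that reachability statement into a round count. The key structural observation is that $\RFEnd$ is guarded by $CopyVar(v)$, which holds only once \emph{all} future children $NewChild(v)$ of $v$ have already copied their $newp/newd$ values into $\parent/d$ while $v$ itself has not; hence the executions of $\RFEnd$ form a single bottom-up wave along the tree induced by the $newp$ pointers, i.e.\ along the structure of the fragment $F''$ obtained from the merging step involving $F$.

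First I would record the stability ingredient. Once a node $v$ executes $\RFEnd$ it has $\parent_v=newp_v$ and $d_v=newd_v$ (and, if it is the new root, $\parent_v=newp_v=\emptyset$ and $d_v=newd_v=0$), so the second conjunct of $CopyVar(v)$ fails and $\RFEnd$ becomes disabled at $v$ for the remainder of this merging step; since by Lemma~\ref{lem:merging_two_frag_end} no other rule is enabled at a node of $F$ while $F\in\mathcal{F}_4$, the property $\neg CopyVar(v)$ then persists. In particular a child that has already copied stays copied, so a parent whose guard has just become true stays enabled until it actually fires, which is what makes the wave monotone.

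Then I would run the induction on the height $d_{F''}(v)$ of $v$ in the $newp$-tree, with the claim that after $O(j+1)$ rounds every node with $d_{F''}(v)\le j$ satisfies $\parent_v=newp_v\wedge d_v=newd_v$. Base case $j=0$: a node with no future children has $CopyVar(v)$ true as soon as $F\in\mathcal{F}_4$, only $\RFEnd$ is enabled at it, so by weak fairness it copies within the next round. Inductive step: for $v$ with $d_{F''}(v)=j+1$, all nodes of $NewChild(v)$ have height $\le j$, hence by the inductive hypothesis they have copied after $O(j)$ rounds; then $CopyVar(v)$ holds, $\RFEnd$ is the unique enabled rule at $v$, and by weak fairness $v$ copies within one further round. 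Since the future children of an internal node of $F$ all lie inside $F$ and the $newp$-tree restricted to $F$ merely re-roots $F$ at an endpoint of the chosen minimum outgoing edge — so its height is at most the diameter of $F$, i.e.\ $O(h_F)$ — while the only node of $F$ whose future subtree reaches outside $F$ is the (at most one) new root, whose height is bounded by the height of $F''$, linear in the heights of the two merging fragments and hence $O(n)=O(h_F)$ in the worst case, the whole wave completes in $O(h_F)$ rounds, giving $F\in\mathcal{F}_5(\gamma')$ with $\gamma'\in\Gamma_{\tt CF}$.

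The main obstacle I expect is not the counting but the bookkeeping needed to certify that the wave is not disturbed: one must argue that while the $\RFEnd$ executions propagate, $\Distance(v)$ (hence $CorrectF(v)$) is not violated in a way that re-enables $\RCo$ at an already-copied node whose new parent has not yet copied and would thereby break the half-finished merge. This is precisely the reason the copy is performed bottom-up along $newp$, and making it rigorous requires showing that any transient distance inconsistency created by a copy at $v$ is confined and repaired before the daemon can act on it, or else folding the argument into the invariant $\gamma\in\Gamma_{\tt CF}$ maintained throughout (as in Lemma~\ref{lem:merging_two_frag_end}). A secondary point is pinning down the exact height parameter governing the wave — the height of $F''$ versus $h_F$ — and checking, as in the other time lemmas, that it stays $O(h_F)$ up to the universal $O(n)$ bound.
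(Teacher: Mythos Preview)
Your proposal is correct and follows essentially the same approach as the paper, which simply states that one shows by induction on the height of $F$ that in $O(h_F)$ rounds every node $v\in F$ satisfies $\neg CopyVar(v)$, using the same method as in the proof of Lemma~\ref{lem:time_mwoe}. Your write-up is in fact more careful than the paper's one-line proof, in particular in distinguishing the $newp$-tree height from $h_F$ and in justifying that the bottom-up wave is not disturbed.
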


\begin{proof}
We can show by induction on the height of $F$ that in $O(h_F)$ rounds every node $v \in F$ satisfies $\neg CopyVar(v)$) using the same method as in proof of Lemma~\ref{lem:time_mwoe}.
\end{proof}

\begin{lemma}
\label{lem:nb_frag_diminution}
Let a configuration $\gamma \in \Gamma_{\tt CF}$ such that $|\mathcal{F}(\gamma)|>1$. We have $|\mathcal{F}(\gamma')|<|\mathcal{F}(\gamma)|$ for any configuration $\gamma' \in \Gamma_{\tt CF}$ obtained after a merging step in every computation suffix starting from $\gamma$.
\end{lemma}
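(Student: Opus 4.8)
The plan is to show that a merging step, by definition, strictly decreases the number of fragments, and that such a step is guaranteed to occur in every computation suffix starting from a configuration $\gamma \in \Gamma_{\tt CF}$ with $|\mathcal{F}(\gamma)| > 1$. First I would invoke Lemma~\ref{lem:labelMST} to reach, in $O(n)$ rounds, a configuration in which every fragment of $\gamma$ lies in $\mathcal{F}_1$; then, chaining Lemmas~\ref{lem:Min}, \ref{lem:compute_newp} (with the timing bound of Lemma~\ref{lem:time_mwoe}), \ref{lem:merging_two_frag_dist} (with Lemma~\ref{lem:time_F4}), and \ref{lem:merging_two_frag_end} (with Lemma~\ref{lem:time_F5}), I would argue that the system progresses through $\mathcal{F}_2, \mathcal{F}_3, \mathcal{F}_4, \mathcal{F}_5$, where $\mathcal{F}_5$ is precisely the set of fragments whose merging phase is completed. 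The crucial structural input is Lemma~\ref{lem:merging_two_frag}: in any configuration of $\Gamma_{\tt CF}$ there are always at least two fragments $F_1, F_2 \in \mathcal{F}_2$ that select the same minimum-weight outgoing edge $e = \{u,v\}$, so they participate together in one merging step.

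Next I would analyze what happens when $F_1$ and $F_2$ complete their merging step, i.e., when all their nodes reach $\mathcal{F}_5$ and execute Rule $\RFEnd$. By the description of the merging phase, the node of smaller identifier among the endpoints of $e$ becomes the root of a single new fragment $F''$ spanning $V(F_1) \cup V(F_2)$: its $\parent$ pointer is set to $\emptyset$ (via the clause \textbf{If} $newd_v=0$ \textbf{Then} $\parent_v:=\emptyset$ in $\RFEnd$), while the former root of the other fragment now points into $F''$ along $e$. The new distance values computed by Rule $\RDist$ via Macro $NewDist$ are consistent with Predicate $\Distance$ for the reoriented structure, so $F''$ is a well-formed fragment and no cycle is created (the reorientation only flips the parent pointers along $\path(\text{root of one fragment}, u)$ and adds the single edge $e$, keeping the structure acyclic and the configuration in $\Gamma_{\tt CF}$). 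Hence the two fragments $F_1, F_2$ are replaced by the single fragment $F''$, while every other fragment of $\gamma$ is untouched during this step (its nodes satisfy $\neg CopyVar$ or are simply not involved, since $\Fusion$ and the merging variables only couple fragments sharing the selected edge). Therefore $|\mathcal{F}(\gamma')| = |\mathcal{F}(\gamma)| - 1$ after this single merging step, and in general $|\mathcal{F}(\gamma')| < |\mathcal{F}(\gamma)|$ when several merging steps occur concurrently (more than two fragments may share edges in a merging chain, collapsing to even fewer fragments).

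The main obstacle I anticipate is the bookkeeping around \emph{concurrent} merges and the precise definition of ``a merging step'': I must be careful that when several fragments merge simultaneously along a chain of selected edges, the reorientation and distance recomputation are globally consistent (exactly one node ends up with $\parent_v = \emptyset$ in the merged component) and that no spurious cycle or disconnection is introduced. This requires checking that Macro $NewParent$ and Predicate $Reorientation$ correctly propagate a single orientation through the union of merging fragments — essentially that the ``future parent'' graph $newp$ restricted to the merging component is a tree rooted at the globally-minimum-identifier endpoint. Once that invariant is established (it follows from the top-down propagation in Rule $\RF$ seeded only at roots via $Reorientation(v) \equiv \parent_v=\emptyset \vee (\m_{\parent_v}=\m_v \wedge newp_{\parent_v}=v)$), the count argument is immediate: $k$ fragments joined in one merging component become exactly one fragment, so the total strictly drops.
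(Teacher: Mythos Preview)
Your proposal is correct and follows essentially the same route as the paper: both arguments chain Lemmas~\ref{lem:labelMST}, \ref{lem:Min}, \ref{lem:merging_two_frag}, \ref{lem:compute_newp}, \ref{lem:merging_two_frag_dist}, and \ref{lem:merging_two_frag_end} to push a pair of fragments sharing the same minimum-weight outgoing edge through $\mathcal{F}_1 \to \mathcal{F}_5$, after which they form a single fragment. The paper's proof is a terse contradiction argument that simply invokes this chain, whereas you additionally spell out why the reoriented structure is a well-formed fragment and flag the concurrent-merge bookkeeping; that extra care is welcome but not a different method.
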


\begin{proof}
Assume, by the contradiction, that there exists a suffix $e''$ starting from $\gamma$ with a configuration $\gamma' \in \Gamma_{\tt CF}$ obtained after a merging step for which $|\mathcal{F}(\gamma')| \geq |\mathcal{F}(\gamma)|$ in computation $e=e'e''$. This implies that in $e''$ either there are no two fragments $F_1,F_2 \in \mathcal{F}(\gamma)$ which can merge together using the same minimum-weight outgoing edge, or $F_1$ and $F_2$ does not belong to the same fragment after a merging step. First of all, by Lemmas~\ref{lem:labelMST} and~\ref{lem:Min} every fragment $F \in \mathcal{F}(\gamma)$ which does not satisfies $CorrectF(v)$ and $\m_v[0]=\Fusion(v)$ executes Rules $\RSC, \RLC$ and $\RMin$ to belong to $\mathcal{F}_2(\gamma)$. So, we consider that every fragment in $\mathcal{F}(\gamma)$ belongs to $\mathcal{F}_2(\gamma)$. In the first case, this is a contradiction with Lemma~\ref{lem:merging_two_frag} which shows that in $\gamma$ there are at least two fragments $F_1$ and $F_2$, $F_1,F_2 \in \mathcal{F}_2(\gamma)$ which select the same minimum-weight outgoing edge for merging. In the latter case, by Lemma~\ref{lem:compute_newp} every fragment $F \in \mathcal{F}_2(\gamma)$ computes its future parent after the merging step, so $F_1,F_2 \in \mathcal{F}_3(\gamma)$. Moreover, by Lemma~\ref{lem:merging_two_frag_dist} we have $F_1,F_2 \in \mathcal{F}_4(\gamma')$ in $e''$. So, by Lemma~\ref{lem:merging_two_frag_end} every node $v \in (F_1 \cup F_2)$ can execute Rule $\RFEnd$ and then Rules $\RSC$ and $\RLC$ to form a new fragment in $\mathcal{F}(\gamma')$ composed of $F_1$ and $F_2$ in $e''$, a contradiction.
\end{proof}

Note that for each fragment $F \in \mathcal{F}_5(\gamma)$ in $\gamma \in \Gamma_{\tt CF}$, we have that each node $v \in F$ satisfies Predicate $\Distance(v)$ but does not satisfies Predicate $CorrectF(v)$ (because of the merging step the labels of each node $v$ is no more correct). So, every node $v \in F$ can execute again Rules $\RSC$ and $\RLC$.

\begin{lemma}
Let any fragment $F \in \mathcal{F}_5(\gamma)$ of a configuration $\gamma \in \Gamma_{\tt CF}$ such that $\forall v \in F, CorrectF(v)$. If $F$ does not span all the nodes of the system, then Rule $\RMin$ is enabled in at least one node $v \in F$ in $\gamma$.
\end{lemma}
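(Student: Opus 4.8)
The plan is to produce one node of $F$ at which the guard of Rule $\RMin$ is satisfied, namely an endpoint, inside $F$, of an outgoing edge. Since $G$ is connected and $V(F)\subsetneq V$, some edge $e=\{u,v\}$ has $u\in F$ and $v\notin F$; I fix such an $e$ and work with its endpoint $u\in F$. The guard of $\RMin$ at $u$ is $CorrectF(u)\wedge\big(\m_u[0]\neq\Fusion(u)\neq\emptyset\big)$; the first conjunct holds by hypothesis, so the two things to establish are $\Fusion(u)\neq\emptyset$ and $\m_u[0]\neq\Fusion(u)$.

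For $\Fusion(u)\neq\emptyset$, I would show $e\in\OEd(u)$: then $\FA(u)$ is an actual edge weight, hence $\Fusion(u)=\min\{\FC(u),\FA(u)\}\neq\emptyset$. Membership $e\in\OEd(u)$ needs two checks. First, $v\in\N_u\setminus(\Child(u)\cup\{\parent_u\})$: if $\parent_v=\id_u$ then, since $\gamma\in\Gamma_{\tt CF}$, the predicate $\Distance(v)$ holds and the acyclic parent chain out of $v$ goes through $u$ to $r_F$, so $v\in F$, a contradiction; and if $\parent_u=\id_v$ then the parent chain out of $u$ passes through $v$ before reaching $r_F$, again forcing $v\in F$, a contradiction. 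Second, $\Lca(\lab_u,\lab_v)=\emptyset$: by $CorrectF(u)$ the label $\lab_u$ is the legitimate label of $u$ in $F$, so it begins with $(\id_{r_F},0)$, whereas $v$ belongs to a different fragment, whose root has a different identifier; using that all fragments carry correct labels in the configurations under consideration (Lemma~\ref{lem:labelMST}), $\lab_v$ begins with a different pair, so $\lab^{\mathsmaller{\cap}}_{u,v}=\emptyset$ and the decoder returns $\emptyset$ through its ``otherwise'' branch. This gives $e\in\OEd(u)$, hence $\Fusion(u)\neq\emptyset$.

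For $\m_u[0]\neq\Fusion(u)$, I would use that $F$ results from a just-completed merge step: the last action of every node of $F$ in that step is Rule $\RFEnd$, which among other things sets $\m_v:=\emptyset$, and the only rule that can later overwrite $\m_v$ is $\RMin$, which is disabled throughout the subsequent label-recomputation phase because its guard needs $CorrectF$ and the labels are stale during that phase. Hence in $\gamma$ one still has $\m_u=\emptyset$, so $\m_u[0]=\emptyset\neq\Fusion(u)$. Combining the three facts, the guard $CorrectF(u)\wedge\big(\m_u[0]\neq\Fusion(u)\neq\emptyset\big)$ holds and $\RMin$ is enabled at $u$.

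The step I expect to be the main obstacle is this last one: making precise, from the hypotheses ``$F\in\mathcal{F}_5(\gamma)$ and $\forall v\in F:CorrectF(v)$'', that $\m_u$ is really still $\emptyset$ in $\gamma$ --- i.e. ruling out that, after the relabeling restored $CorrectF$, some node of $F$ already executed $\RMin$ and the configuration has since drifted. I would pin this down by tracking the computation from the configuration where the merge completed (where $\RFEnd$ reset every $\m_v$) and noting that only $\RSC$, $\RLC$ (neither of which touches $\m_v$) can fire on $F$ before $\RMin$ first becomes enabled somewhere in $F$, at which point the conclusion is already in force. A secondary, more routine point is the justification of $\Lca(\lab_u,\lab_v)=\emptyset$ for the outside endpoint $v$, which needs $v$'s fragment to be correctly labeled; I would either fold this into the standing assumptions of the subsection or invoke the convergence of $\LabA$ applied fragment by fragment.
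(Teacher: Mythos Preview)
Your argument is correct and follows essentially the same line as the paper's proof: pick a node of $F$ adjacent to an outgoing edge, observe that $\Fusion$ is nonempty there, and use that membership in $\mathcal{F}_5$ forces $\m=\emptyset$ (via the reset in $\RFEnd$), so the guard of $\RMin$ holds. The paper's proof is considerably terser: it simply asserts $\FA(v)\neq\emptyset\Rightarrow\Fusion(v)\neq\emptyset$ without spelling out why the cross-fragment edge lies in $\OEd(v)$, and it dispatches your ``main obstacle'' in one clause (``since $F\in\mathcal{F}_5$, $v$ has executed Rule $\RFEnd$ and we have $\m_v=\emptyset$''), treating $F\in\mathcal{F}_5(\gamma)$ as implicitly encoding the history you unroll explicitly. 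Your more careful handling of the $\OEd$ membership and of the $\Lca$ computation for the outside endpoint is additional rigor the paper omits, not a different route.
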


\begin{proof}
First of all, Predicate $CorrectF(v)$ is satisfied at every node $v \in F$ because $F \in \mathcal{F}_5$. Assume, by the contradiction, that $F$ does not spans all the nodes of the system in $\gamma$ and Rule $\RMin$ is disabled at every node $v \in F$. This implies that there is a node $v \in F$ which adjacent to an edge $(u,v)$ such that $u \not \in F$. So, we have $\FA(v) \neq \emptyset \Rightarrow \Fusion(v) \neq \emptyset$. Moreover, since $F \in \mathcal{F}_5$, $v$ has executed Rule $\RFEnd$ and we have $\m_v=\emptyset$. Therefore, Rule $\RMin$ is enabled at $v$, a contradiction.
\end{proof}

\begin{lemma}
\label{lem:time_merging}
Starting from any configuration $\gamma \in \Gamma$ which contains several fragments, the system reaches a configuration $\gamma' \in \Gamma_{\tt CF}$ which contains a single fragment spanning all the nodes of the system in $O(n^2)$ rounds, with $n$ the network size.
\end{lemma}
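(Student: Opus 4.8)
The goal is to bound the total number of rounds needed to go from an arbitrary configuration in $\Gamma$ to a configuration in $\Gamma_{\tt CF}$ containing a single spanning fragment. The natural strategy is to compose the intermediate results already established: first, by Lemma~\ref{lem:cycle free}, after $O(n)$ rounds we are in $\Gamma_{\tt CF}$, so from now on we may assume $\gamma \in \Gamma_{\tt CF}$ and that no cycle will ever reappear (since $\Gamma_{\tt CF}$ is closed, as can be seen from Rule $\RCo$). Then the plan is to bound the cost of a single \emph{merging step} and count how many merging steps can occur.

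\textbf{Cost of one merging step.} Starting from a configuration in $\Gamma_{\tt CF}$ with $|\mathcal{F}(\gamma)| > 1$, I would chain Lemmas~\ref{lem:labelMST}, \ref{lem:Min}, \ref{lem:time_mwoe}, \ref{lem:time_F4}, \ref{lem:time_F5} (together with Lemma~\ref{lem:merging_two_frag} guaranteeing at least one mergeable pair, and Lemmas~\ref{lem:compute_newp}, \ref{lem:merging_two_frag_dist}, \ref{lem:merging_two_frag_end} guaranteeing progress cannot stall): every fragment reaches $\mathcal{F}_1$ in $O(n)$ rounds, then $\mathcal{F}_2$ in $O(h_F)$ rounds, then $\mathcal{F}_3$, $\mathcal{F}_4$, $\mathcal{F}_5$ each in $O(h_F)$ rounds. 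Since $h_F \leq n$, a single merging step (i.e.\ the passage from $\gamma$ to the next $\gamma' \in \Gamma_{\tt CF}$ with strictly fewer fragments, whose existence is Lemma~\ref{lem:nb_frag_diminution}) costs $O(n)$ rounds. Care must be taken that after a merging step the new fragment's labels are stale, so one more $O(n)$-round pass of $\RSC,\RLC$ is needed to bring it back into $\mathcal{F}_1$ before the next step; this is already folded into the $O(n)$ bound.

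\textbf{Counting merging steps.} By Lemma~\ref{lem:nb_frag_diminution}, each merging step strictly decreases the number of fragments. Initially there are at most $n$ fragments (one per node in the worst case) and the process stops once a single fragment spans $V$; hence at most $n-1$ merging steps occur. Multiplying, the total convergence time to a single spanning fragment is $O(n) + (n-1)\cdot O(n) = O(n^2)$ rounds.

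\textbf{Main obstacle.} The delicate point is \emph{concurrency}: different fragments progress through the $\mathcal{F}_1,\dots,\mathcal{F}_5$ stages asynchronously and at different heights, and two fragments merging on the same edge must synchronize their $\RF$/$\RDist$/$\RFEnd$ phases. I expect the hard part of the argument to be showing that this asynchrony cannot slow convergence beyond $O(n)$ rounds per merging step — i.e.\ that a fragment waiting for its partner, or a fragment that selected an edge not reciprocated, does not cause unbounded delay. This is precisely what Lemmas~\ref{lem:compute_newp}, \ref{lem:merging_two_frag_dist} and \ref{lem:merging_two_frag_end} are designed to handle (progress toward $\mathcal{F}_3,\mathcal{F}_4,\mathcal{F}_5$ along every computation suffix), so the proof essentially amounts to invoking them in the right order and observing that the $O(h_F) \le O(n)$ round bounds compose additively over the five stages and that the number of steps is bounded by $n-1$.
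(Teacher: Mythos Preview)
Your proposal is correct and follows essentially the same approach as the paper: reach $\Gamma_{\tt CF}$ in $O(n)$ rounds via Lemma~\ref{lem:cycle free}, bound each merging step by $O(n)$ rounds by summing the $O(h_F)$ bounds of Lemmas~\ref{lem:labelMST}, \ref{lem:Min}, \ref{lem:time_mwoe}, \ref{lem:time_F4}, \ref{lem:time_F5}, and multiply by the at most $n-1$ merging steps guaranteed by Lemma~\ref{lem:nb_frag_diminution}. Your explicit mention of the progress lemmas (\ref{lem:merging_two_frag}, \ref{lem:compute_newp}, \ref{lem:merging_two_frag_dist}, \ref{lem:merging_two_frag_end}) as handling the concurrency concern is in fact more careful than the paper's own proof, which leaves that implicit in the invocation of Lemma~\ref{lem:nb_frag_diminution}.
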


\begin{proof}
First of all, according to Lemmas~\ref{lem:time_cf} and~\ref{lem:labelMST} in $O(n)$ rounds the system reaches a configuration $\gamma_1 \in \Gamma_{\tt CF}$ in which each fragment $F \in \mathcal{F}_1(\gamma_1)$. Moreover, according to Lemmas~\ref{lem:min_convergence},~\ref{lem:time_mwoe},~\ref{lem:time_F4} and~\ref{lem:time_F5} by summing up the complexities each merging step is performed using at most $O(n)$ rounds (since $n$ is an upper bound for the height of any fragment). Finally, in a configuration we can not have more than $n$ fragments so to obtain a spanning tree we can perform at most $n-1$ merging phases (attained when only two fragments can be merged at each step). Moreover, by Lemma~\ref{lem:nb_frag_diminution} after each merging step the number of fragments is decreased by at least one. Therefore, by the above elements we obtain that a spanning tree is constructed in $O(n^2)$ rounds starting from an arbitrary configuration.
\end{proof}

\begin{lemma}
\label{lem:disable_merge_rules}
In every configuration $\gamma \in \Gamma_{\tt CF}$ which contains a single fragment $T \in \mathcal{F}_5(\gamma)$ spanning all the nodes of the system and correctly labeled, then for every node $v \in T$ no rule of Algorithm $\MSTA$, except Rule $\RC$, is enabled at $v$.
\end{lemma}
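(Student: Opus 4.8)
The plan is an exhaustive guard check: I go through the eight rules $\RCo,\RSC,\RLC,\RMin,\RF,\RDist,\RFEnd,\RC$ of Algorithm $\MSTA$ and show that, at every node $v\in T$, the guard of each rule except $\RC$ evaluates to false. First I collect what the hypotheses grant at each $v$. From $\gamma\in\Gamma_{\tt CF}$, Predicate $\Distance(v)$ holds everywhere. From ``$T$ correctly labeled'', $CorrectF(v)\equiv\Distance(v)\wedge\SizeC(v)\wedge\Label(v)$ holds everywhere, hence in particular $\SizeC(v)$ and $\Label(v)$. From $T\in\mathcal{F}_5(\gamma)$, together with the inclusions $\mathcal{F}_5(\gamma)\subseteq\mathcal{F}_4(\gamma)\subseteq\mathcal{F}_3(\gamma)$, we get $\neg CopyVar(v)$, $\neg ChangeNewD(v)$ and $\neg ChangeNewP(v)$ at every $v$.

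The one non-trivial preliminary fact I would establish is that $\Fusion(v)=\emptyset$ for every $v\in T$. Since $T$ spans $V$, every neighbor of $v$ belongs to $T$; and since $T$ is a correctly labeled tree, every neighbor $u$ of $v$ satisfies $\Lca(\lab_u,\lab_v)\neq\emptyset$, so $\OEd(v)=\emptyset$ and therefore $\FA(v)=\emptyset$. To exclude a stale outgoing edge recorded in the variables $\m$, I use that $T\in\mathcal{F}_5(\gamma)$ is a fragment whose merging step has just completed: the last rule executed at each node was $\RFEnd$, which sets $\m_v:=\emptyset$, and the only rules that may have fired since ($\RSC,\RLC$, used to re-correct the labels) do not touch $\m$. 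Hence $\m_v=\emptyset$ at every $v$, so $\FC(v)=\min\{\m_u[0]:u\in\Child(v)\}=\emptyset$ and therefore $\Fusion(v)=\min\{\FC(v),\FA(v)\}=\emptyset$.

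The case analysis is then immediate. Rule $\RCo$ needs $\neg\Distance(v)$ --- false. Rules $\RSC$ and $\RLC$ need $\neg\SizeC(v)$, resp.\ $\neg\Label(v)$ --- both false. Rule $\RMin$ needs $\m_v[0]\neq\Fusion(v)\neq\emptyset$ and Rule $\RF$ needs $\m_v[0]=\Fusion(v)\neq\emptyset$; both fail on the conjunct $\Fusion(v)\neq\emptyset$. In addition $\RF$ needs $ChangeNewP(v)$, $\RDist$ needs $ChangeNewD(v)$, and $\RFEnd$ needs $CopyVar(v)$, each of which is false by $\mathcal{F}_5$-membership (via $\mathcal{F}_5\subseteq\mathcal{F}_4\subseteq\mathcal{F}_3$). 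Hence only $\RC$, whose guard is $CorrectF(v)\wedge Recover(v)$, can be enabled at a node of $T$, which proves the lemma.

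I expect the delicate point to be the claim $\Fusion(v)=\emptyset$ everywhere, specifically reconciling it with the nominal inclusion $\mathcal{F}_5(\gamma)\subseteq\mathcal{F}_2(\gamma)$, whose defining predicate asks for $\m_v[0]=\Fusion(v)\neq\emptyset$: a completed merging step that produces a single spanning fragment is precisely the configuration in which $T$ ceases to satisfy that predicate (no outgoing edge remains and $\RFEnd$ has reset $\m$), so the hypothesis ``$T\in\mathcal{F}_5(\gamma)$ and $T$ spans $V$'' must be read as naming this transitional configuration; from it only $\Distance(v)$, $CorrectF(v)$, $\m_v=\emptyset$ and the three ``$\neg$'' predicates are actually invoked, and everything else is a mechanical guard evaluation.
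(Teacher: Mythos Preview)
Your proof is correct and follows essentially the same approach as the paper: a rule-by-rule guard check using $\Distance(v)$ from $\Gamma_{\tt CF}$, $\SizeC(v)\wedge\Label(v)$ from correct labeling, $\Fusion(v)=\emptyset$ from $T$ spanning $V$, and the three negated predicates from $\mathcal{F}_5$-membership. You are in fact more careful than the paper, which simply asserts $\Fusion(v)=\emptyset$ without justifying $\FC(v)=\emptyset$, and you rightly flag the tension with the defining predicate of $\mathcal{F}_2$ (requiring $\Fusion(v)\neq\emptyset$) once the single spanning fragment has no outgoing edges.
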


\begin{proof}
Assume, by the contradiction, that there is an enabled rule, except Rule $\RC$, of Algorithm $\MSTA$ in a node $v \in T$ in $\gamma$. For every node $v \in T$ Predicate $\Distance(v)$ is satisfied in $\gamma \in \Gamma_{\tt CF}$ (by definition of $\Gamma_{\tt CF}$), so Rule $\RCo$ is disabled at $v$, a contradiction. Since $T$ is correctly labeled, we have $\neg CorrectF(v) \Rightarrow (\SizeC(v) \wedge \Label(v))$, so Rules $\RSC$ and $\RLC$ are disabled at $v$, a contradiction. Since $T$ spans all the nodes of the system, for every node $v \in T$ we have $\Fusion(v)=\emptyset$ and Rule $\RMin$ is disabled at $v$, a contradiction. Finally, we have that $T \in \mathcal{F}_5(\gamma)$ and by definition of $\mathcal{F}_5(\gamma)$ Rules $\RF,\RDist,\RFEnd$, and $\RC$ are disabled at $v$, a contradiction.
\end{proof}

\subsubsection{Correctness and complexity of the recovering phase}

\begin{lemma}
Let a configuration $\gamma \in \Gamma_{\tt CF}$ such that $|\mathcal{F}(\gamma)|=1$ and $F$ be the fragment of $\mathcal{F}(\gamma)$. At least one node $v \in F$ can execute Rule $\RC$.
\end{lemma}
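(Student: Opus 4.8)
The plan is to show that, in the given configuration $\gamma$ itself, the guard $CorrectF(v) \wedge Recover(v)$ of Rule $\RC$ is already satisfied at some node of $F$. I would not argue about future rounds; instead I would verify each conjunct of $Recover(v)$ directly in $\gamma$ and then reuse the implication already isolated in the proof of Theorem~\ref{thm:no_deadlock}.

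The first step exploits the hypothesis $|\mathcal{F}(\gamma)|=1$: since $F$ spans all of $V$, no node has an edge leaving $F$, so $\OEd(v)=\emptyset$ and consequently $\FA(v)=\FC(v)=\Fusion(v)=\emptyset$ for every $v\in F$. This immediately gives the conjunct $\Fusion(v)=\emptyset$ of $Recover(v)$ at every node, and it disables Rule $\RMin$ throughout $F$. Together with $\gamma\in\Gamma_{\tt CF}$, which makes $\Distance(v)$ hold everywhere and hence disables $\RCo$, this places $\gamma$ in the ``recovering regime'' already described by Lemma~\ref{lem:disable_merge_rules}: a single, correctly labelled, fully merged spanning fragment in which every rule except $\RC$ is disabled.

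The second step supplies the remaining conjuncts of $Recover(v)$. Because the merging bookkeeping of $F$ is quiescent, the copy step of Rule $\RFEnd$ having been performed so that $\parent_v=newp_v$ and $d_v=newd_v$, with the future distance satisfying $newd_v=newd_{\parent_v}+1$, the predicates $ChangeNewP(v)$, $ChangeNewD(v)$ and $CopyVar(v)$ are all false at every $v\in F$, while $CorrectF(v)$ holds. I would then invoke the implication established inside the proof of Theorem~\ref{thm:no_deadlock}, namely $(\neg ChangeNewP(v)\wedge\neg ChangeNewD(v)\wedge\neg CopyVar(v)\wedge CorrectF(v))\Rightarrow Recover(v)$. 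Applying it to any $v\in F$ yields $Recover(v)$; since $CorrectF(v)$ also holds, the guard of $\RC$ is satisfied at $v$, so at least one node of $F$ can execute $\RC$.

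The step I expect to be the main obstacle is justifying the quiescence of the merging variables purely from the stated hypotheses, together with the hidden conjunct $SelectEdge(v)$ buried inside $Recover(v)$. For the quiescence I would lean on the fact that a spanning fragment carries no selectable outgoing edge, so once $\RFEnd$ has emptied $\m_v$ the merging rules can no longer act; the genuinely delicate point is $SelectEdge(v)=EndForward(v)\vee Forwarded(v)\vee BetterEdgeP(v)$, whose truth depends on the internal-edge variable $\mIns$. Here I would exhibit a concrete witness, for instance a leaf of $F$, for which $\IE_c(v)=\emptyset$ forces the relevant side condition, or more robustly the node carrying the lexicographically smallest collected internal edge, which always meets $EndForward$ or $Forwarded$; this confirms that $SelectEdge$, and hence the whole guard of $\RC$, holds at that node in $\gamma$.
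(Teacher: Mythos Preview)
Your plan is essentially the paper's own argument. The paper, too, \emph{assumes} the merging bookkeeping is quiescent rather than deriving it from $\gamma\in\Gamma_{\tt CF}$: its proof opens with ``we consider that every node $v\in F$ the distance and the label are correct (i.e., $CorrectF(v)$ and $\neg CopyVar(v)$ are satisfied, which implies $\parent_v=newp_v\wedge d_v=newd_v\wedge newd_v=newd_{\parent_v}+1$)'', so your worry about justifying this is shared and simply treated as a standing hypothesis for the recovering phase. The deduction $\Fusion(v)=\emptyset$ from $|\mathcal{F}(\gamma)|=1$ is identical.

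The only real divergence is how $SelectEdge(v)$ is secured. The paper does not name a single witness; it proceeds by contradiction (suppose $\RC$ disabled everywhere, hence $SelectEdge$ false everywhere) and runs a short case analysis on how $\mIns_v$ relates to $\IE_l(v)$, $\IE_c(v)$ and $\mIns_{\parent_v}$, obtaining $EndForward(v)$ or $Forwarded(v)$ at some node in every case. Your proposed leaf witness is not sufficient on its own: $\IE_c(v)=\emptyset$ at a leaf gives the conjunct $\mIns_v\not\in\IE_c(v)$ of $Forwarded(v)$, but nothing in the hypotheses forces $\mIns_{\parent_v}=\mIns_v$ or $\Lca(\mIns_v[1],\mIns_v[2])=\lab_v$, so the first conjunct can fail. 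Your extremal fallback (the node holding the lexicographically smallest collected internal edge) is the right instinct and, once unpacked, becomes exactly the paper's case analysis.
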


\begin{proof}
First of all, we consider that every node $v \in F$ the distance and the label are correct (i.e., $CorrectF(v)$ and $\neg CopyVar(v)$ are satisfied which implies we have $\parent_v=newp_v \wedge d_v=newd_v \wedge newd_v=new_{\parent_v}+1$). Assume, by the contradiction, that Rule $\RC$ is disabled for every node $v \in F$. This implies that for every node $v \in F$ we have either $\Fusion(v) \neq \emptyset$, or another internal edge can not be selected. In the first case, by hypothesis of the lemma there is only one fragment in $\mathcal{F}(\gamma)$, so for every node $v \in F$ we have $\Fusion(v) = \emptyset$, a contradiction. In the second case, this implies that an internal edge of minimum-weight associated to the common ancestor of the edge is not propagated up in $F$. If $v$ is the common ancestor of a locally internal edge (given by Macro $\IE_l(v)$), then we have $EndForward(v) \Rightarrow Recover(v)$ and Rule $\RC$ is enabled at $v$, a contradiction. If $v$ has selected a local internal edge (i.e., $\mIns_v \in \IE_l(v)$ and $\mIns_v \not \in \IE_c(v)$) and the internal edge has been propagated by $\parent_v$ ($\mIns_v=\mIns_{\parent_v}$), then this implies we have $Forwarded(v) \Rightarrow Recover(v)$ and Rule $\RC$ is enabled at $v$, a contradiction. Otherwise, for an internal edge propagated by a child $u$ in $F$ either $v$ is the common ancestor, or the internal edge has been propagated by $\parent_v$ ($\mIns_v=\mIns_{\parent_v}$) and $u$ has selected another internal edge (i.e., $\mIns_v \not \in \IE_c(v)$). This implies we have $Forwarded(v) \Rightarrow Recover(v)$ and Rule $\RC$ is enabled at $v$, a contradiction.
\end{proof}

\begin{corollary}
\label{cor:propag_internal_edge}
In any configuration $\gamma \in \Gamma_{\tt CF}$ such that $|\mathcal{F}(\gamma)|=1$, by executing Rule $\RC$ every node $v \in F$ sends up in $F$ the internal edges selected by $v$ and its descendants ordered locally on the nearest common ancestors (given by Macro $\IE(v)$), with $F$ the fragment in $\mathcal{F}(\gamma)$.
\end{corollary}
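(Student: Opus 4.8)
The plan is to prove this by structural induction on $F$, working upward from the leaves, unfolding the macros $\IE_l$, $\IE_c$, $minIE$ and $\IE$. For a node $v$ call an internal edge $e=\{x,y\}$ of $F$ \emph{relevant at $v$} if at least one endpoint lies in the subtree of $F$ rooted at $v$ and $\Lca(\lab_x,\lab_y)$ is $v$ or a proper ancestor of $v$ in $F$; these are exactly the edges that must traverse $v$ on the way up to their nearest common ancestor. The statement to establish is that each execution of Rule $\RC$ at $v$ resets $\mIns_v$ to the value returned by $\IE(v)$, and that the successive values so produced cycle, in increasing order of the nearest common ancestor label, through one minimum-weight representative per distinct nearest common ancestor of the edges relevant at $v$.

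First I would treat the base case, $v$ a leaf: then $\Child(v)=\emptyset$, so $\IE_c(v)=\emptyset$ and $\IE(v)$ is computed purely from $\IE_l(v)$, which by the definitions of $\IE_l$ and $IECA$ is precisely the set of minimum-weight local internal edges of $v$, one per nearest common ancestor, restricted by the guard on $\Lca(\lab_u,\lab_v)$ to the edges relevant at $v$. The macro $minIE(v,ca)$ returns the smallest such edge whose nearest common ancestor is strictly above $ca$, and $\IE(v)$ feeds it the currently stored nearest common ancestor, wrapping to $\bot$ (the globally smallest) when the current one was the last. By the preceding lemma Rule $\RC$ stays enabled somewhere, and by weak fairness $v$ is eventually re-activated, so $\mIns_v$ cycles perpetually through the claimed list.

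For the inductive step I would invoke the induction hypothesis at every child $u$ of $v$: the values of $\mIns_u$ cycle, in increasing nearest common ancestor order, through one minimum-weight representative per distinct nearest common ancestor of the edges relevant at $u$. Since $v=\parent_u$, the subtree of $u$ is contained in that of $v$ and every ancestor of $v$ is an ancestor of $u$; hence an edge relevant at $u$ whose nearest common ancestor is $v$ or above is relevant at $v$ as well, and the set of such nearest common ancestors is an arc of $u$'s cyclic order (exactly the nearest common ancestors of $u$'s relevant edges lying on the root-path of $v$). As $\IE_c(v)$ retains $\mIns_u$ precisely when $\Lca(\mIns_u[1],\mIns_u[2])$ is $v$ or above, $v$ observes each of these edges in turn as $u$ cycles. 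Combining all children's contributions with $\IE_l(v)$ — which, via $IECA$, already holds the minimum-weight local internal edge per nearest common ancestor relevant at $v$ — and re-applying the selection of $\IE(v)$, namely the $\min$ over $\IE_l(v)\cup\IE_c(v)$ restricted to nearest common ancestors strictly above the stored one with the $\bot$ wraparound, the $\min$ collapses duplicate nearest common ancestors coming from distinct children or from a local edge down to their minimum-weight witness, while edges with nearest common ancestor strictly below $v$ are discarded for failing the guard. Thus one full round of executions of Rule $\RC$ makes $\mIns_v$ run once through the edges relevant at $v$ in increasing nearest common ancestor order, as claimed.

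The step I expect to be the main obstacle is the asynchrony between $v$ and its children: children advance their $\mIns_u$ independently, so I must rule out that $v$ advances its stored nearest common ancestor past some value $ca$ before having seen, for that $ca$, the globally minimum-weight relevant edge — in particular an edge that a slow child has not yet forwarded. This is exactly what the auxiliary triggers in $SelectEdge(v)$ are for, and the delicate part is to show, using weak fairness and the monotonicity of the stored nearest common ancestor, that $BetterEdgeP$ (together with $EndForward$ and $Forwarded$) forces a re-selection whenever a strictly lighter edge with the currently processed nearest common ancestor becomes available, so that the pipeline stabilises on the correct minimum-weight representative for every nearest common ancestor before moving on; formalising this ``no skip / eventual correction'' invariant is where the real work lies.
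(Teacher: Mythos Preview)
The paper gives no proof at all for this corollary: it is stated immediately after the lemma asserting that at least one node can execute Rule $\RC$, and is meant to follow directly from that lemma together with the action of Rule $\RC$ (which sets $\mIns_v:=\IE(v)$) and the definition of the macro $\IE(v)$ (which returns the next internal edge in the local lexicographic order on nearest common ancestor labels, wrapping around via $\bot$). In other words, the paper reads the statement as a description of what the rule does locally, not as a global correctness claim.

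Your proposal is aiming at something considerably stronger than what the corollary asserts. The phrase ``ordered \emph{locally} on the nearest common ancestors (given by Macro $\IE(v)$)'' is the whole content: each node, upon each activation, advances $\mIns_v$ to the next value prescribed by $\IE(v)$. That is immediate from the code of $\RC$ and the wraparound built into $\IE$; the preceding lemma guarantees that activations keep happening. Your structural induction, and in particular the ``no skip / eventual correction'' invariant you set out to prove, is really an argument that the pipeline eventually delivers, at every ancestor, the globally minimum-weight representative for each nearest common ancestor. The paper does not pack that into this corollary; the global aspects are instead handled piecemeal by the subsequent Lemmas~\ref{lem:end_propag_ie}, \ref{lem:create_new_frag}, and~\ref{lem:time_recover} (and even there the treatment is light). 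So your approach is not wrong, but it overshoots the target: for the corollary as stated, the base case and the unfolding of $\IE$ already suffice, and the asynchrony analysis you flag as the main obstacle belongs to the later lemmas rather than here.
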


\begin{lemma}
\label{lem:end_propag_ie}
Let a configuration $\gamma \in \Gamma_{\tt CF}$ such that $|\mathcal{F}(\gamma)|=1$ and $F$ be the fragment of $\mathcal{F}(\gamma)$. Every internal edge related to the nearest common ancestor $x$ is not propagated up in $F$.
\end{lemma}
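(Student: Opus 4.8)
The statement I read as: no internal edge whose nearest common ancestor is the node $x$ is carried by the recovery rule $\RC$ into the variable $\mIns$ of a node lying strictly above $x$ in $F$. The plan is to trace how an internal edge can travel along the variables $\mIns_v$ and to show that it can never be installed, by a firing of Rule $\RC$, into the variable of such a node. Fix an internal edge $e=\{a,b\}$ of $F$, and let $x\in F$ be the node with $\lab_x=\Lca(\lab_a,\lab_b)$; its fundamental cycle is $C_e=\path(a,x)\cup\path(x,b)\cup\{e\}$. Let $v$ be an arbitrary proper ancestor of $x$ in $F$. I want to prove that a firing of $\RC$ at $v$ never sets $\mIns_v$ to the triple $(w(e),\lab_a,\lab_b)$, which is exactly the assertion that $e$ is not propagated above $x$.

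First I would note that $\RC$ is the only rule that writes a non-empty value into $\mIns_v$ (Rule $\RCo$ merely resets it to $\emptyset$, and no other rule touches $\mIns$), and a firing of $\RC$ performs $\mIns_v:=\IE(v)$. By the definitions of $\IE$ and of $minIE$, $\IE(v)\in\IE_l(v)\cup\IE_c(v)$ whenever $\IE(v)\neq\emptyset$, so it suffices to show that the triple encoding $e$ lies neither in $\IE_l(v)$ nor in $\IE_c(v)$. For $\IE_l(v)$: its elements are triples $(w(u,v),\lab_u,\lab_v)$ with $u\in N_v\setminus(\Child(v)\cup\{\parent_v\})$, hence edges incident to $v$; but $a$ and $b$ are (weak) descendants of $x$ and $x$ is a proper descendant of $v$, so $e$ is not incident to $v$, and $e\notin\IE_l(v)$. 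For $\IE_c(v)$: a triple $\mIns_u$ of a child $u$ of $v$ belongs to $\IE_c(v)$ only when $\Lca(\mIns_u[1],\mIns_u[2])\succeq\lab_v$; if $\mIns_u$ encoded $e$, that NCA label would be $\lab_x$, and since $x$ is a proper descendant of $v$ in the correctly labelled tree $F$ one has $\lab_x\prec\lab_v$ for the label order $\succ$ (this is the monotonicity of the encoder of Section~\ref{sec:label}: moving towards the root makes labels strictly larger). Hence $\Lca(\mIns_u[1],\mIns_u[2])\not\succeq\lab_v$, so $e\notin\IE_c(v)$. Combining, $e\notin\IE(v)$, and a firing of $\RC$ at $v$ cannot install $e$ in $\mIns_v$.

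The delicate point — the one I would expand most carefully — is the label-order statement ``$x$ a proper descendant of $v$ implies $\lab_x\prec\lab_v$'', together with its legitimacy here: Rule $\RC$ carries the guard $CorrectF(v)$, so whenever $\RC$ fires at $v$ the labels throughout $F$ are the correct ones produced by $\LabA$ (Section~\ref{subsec:CorLabel}), hence the decoder $\Lca$ really returns the label of the genuine common ancestor and the encoder's monotonicity applies; in the single-fragment setting this is moreover compatible with Lemma~\ref{lem:disable_merge_rules}. Once this is settled, the rest is the short case analysis above on whether the value of $\mIns_v$ was supplied locally (via $\IE_l$) or by a child (via $\IE_c$), and the lemma drops out. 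It is also worth remarking, to rule out a hidden case, that the filters ``$\succ ca$'' inside $minIE$ and ``$\succeq\lab_v$'' inside $\IE_c$ implement respectively the pipelining across distinct common ancestors and the present stopping condition, so nothing in $\IE(v)$ escapes the two possibilities already treated.
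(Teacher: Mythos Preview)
Your argument is correct and follows essentially the same route as the paper: both proofs observe that a firing of $\RC$ draws $\mIns_v$ from $\IE_l(v)\cup\IE_c(v)$, rule out $\IE_l(v)$ because the edge is not incident above its NCA, and rule out $\IE_c(v)$ via the filter $\Lca(\cdot,\cdot)\succeq\lab_v$. The only cosmetic difference is that the paper argues only for the immediate parent of the NCA (which suffices since propagation proceeds one hop at a time), whereas you treat an arbitrary proper ancestor directly; your explicit flagging of the label-order monotonicity as the delicate point is a helpful addition that the paper leaves implicit.
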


\begin{proof}
According to Corollary~\ref{cor:propag_internal_edge}, $x$ propagates up in $F$ the internal edges selected by its descendants and itself. Assume, by the contradiction, that the parent $x$ of the nearest common ancestor $y$ related to an internal edge $e \in E$ propagates up in $F$ the internal edge $e=(u,v)$, described by its weight $w(u,v)$ and the labels of its extremities $\lab_u$ and $\lab_v$ stored in variable $\mIns_y$ at $y$. This implies that when $x$ executes Rule $\RC$ then Macro $\IE(x)$ returns edge $e$. Macro $\IE(x)$ returns an edge from the union set of edges given by Macros $\IE_l(x)$ and $\IE_c(x)$. We must consider only Macro $\IE_c(x)$ since $y$ is the common ancestor of $e$. However, according to the formal description of Algorithm $\MSTA$ Macro $\IE_c(x)$ contains only internal edges $f=(a,b)$ whose the nearest common ancestor related to $f$ has a label higher or equal to $x$'s label following the lexicographical order (i.e., $\Lca(\lab_a,\lab_b) \succeq \lab_x$). Thus, we have $e \not \in \IE(x)$, a contradiction.
\end{proof}

\begin{lemma}
\label{lem:create_new_frag}
Let a configuration $\gamma \in \Gamma_{\tt CF}$ such that $|\mathcal{F}(\gamma)|=1$ and $F$ be the fragment of $\mathcal{F}(\gamma)$. If a node $v \in F$ selects by executing Rule $\RC$ an internal edge $e=(x,y)$ such that $w(x,y)<w(v,\parent_v)$ and $v$ is not the common ancestor related to $(x,y)$ then the edge $(v,\parent_v)$ is deleted by $v$ from $F$.
\end{lemma}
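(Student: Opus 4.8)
The plan is to obtain the lemma by a direct reading of the formal description of Rule $\RC$, so the argument is short. First I would recall the effect of an execution of Rule $\RC$ at $v$: it performs the assignment $\mIns_v := \IE(v)$ and then, with $\mIns_v$ already holding its new value, tests the inner guard $\Lca(\mIns_v[1],\mIns_v[2]) \neq \lab_v \wedge w(v,\parent_v) > \mIns_v[0]$; if this guard holds it sets $\parent_v := \emptyset$ (together with $d_v:=0$ and $\lab_v:=(\id_v,0)$). By the hypothesis, the edge selected by $v$ in this execution is $e=(x,y)$, i.e.\ $\IE(v)$ evaluates to the triple $(w(x,y),\lab_x,\lab_y)$, so immediately after the assignment we have $\mIns_v[0]=w(x,y)$, $\mIns_v[1]=\lab_x$ and $\mIns_v[2]=\lab_y$.

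Next I would check that the inner guard is satisfied. For its first conjunct, after the assignment $\Lca(\mIns_v[1],\mIns_v[2]) = \Lca(\lab_x,\lab_y)$, which by the labeling decoder (Subsection~\ref{subsec:decoder}) is the label of the nearest common ancestor of $x$ and $y$ in $F$; the hypothesis that $v$ is not the common ancestor related to $(x,y)$, phrased at the level of labels as everywhere in Algorithm $\MSTA$, is exactly $\lab_v \neq \Lca(\lab_x,\lab_y)$, i.e.\ $\Lca(\mIns_v[1],\mIns_v[2]) \neq \lab_v$. For its second conjunct, the hypothesis $w(x,y) < w(v,\parent_v)$ presupposes $\parent_v \neq \emptyset$ (so that the edge $(v,\parent_v)$ exists, i.e.\ $v$ is not the root of $F$), and after substituting $\mIns_v[0]=w(x,y)$ it reads precisely $w(v,\parent_v) > \mIns_v[0]$.

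Since both conjuncts hold, the body of the inner conditional of Rule $\RC$ is executed, and in particular $\parent_v$ is set to $\emptyset$; hence the tree edge $(v,\parent_v)$ is removed from the parent-pointer structure of $F$, and $v$ becomes the root of a new fragment made of its former subtree in $F$. I would close by observing that this is exactly the action prescribed by the red rule: $(v,\parent_v)$ lies on the fundamental cycle $C_e$ (because $v$ is on one of the two tree paths from an endpoint of $e$ to their common ancestor and $v$ is not that ancestor, so $\parent_v$ is on the same path) and $w(v,\parent_v) > w(e)$.

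As the whole argument is a mechanical unfolding of Rule $\RC$, there is no real obstacle; the only point needing a little care is that the inner test of Rule $\RC$ must be evaluated against the value of $\mIns_v$ produced by the assignment $\mIns_v := \IE(v)$ performed in the very same execution of the rule, and that the phrase ``$v$ is not the common ancestor of $(x,y)$'' has to be read as the label-level inequality $\lab_v \neq \Lca(\lab_x,\lab_y)$.
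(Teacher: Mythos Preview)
Your proof is correct and follows essentially the same approach as the paper's: both arguments amount to reading off Rule $\RC$ and checking that the inner guard $\Lca(\mIns_v[1],\mIns_v[2]) \neq \lab_v \wedge w(v,\parent_v) > \mIns_v[0]$ is satisfied under the hypotheses. The only cosmetic difference is that the paper phrases the same check as a proof by contradiction, whereas you unfold the rule directly (and add the harmless closing remark about the red rule).
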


\begin{proof}
Assume, by the contradiction, that $v \in F$ selects by executing Rule $\RC$ an internal edge $e=(x,y)$ such that $w(x,y)<w(v,\parent_v)$ but the edge $(v,\parent_v)$ is not deleted from $F$. According to the formal description of Algorithm $\MSTA$, a node can delete an edge of a fragment by executing Rule $\RC$. So, this implies that by executing Rule $\RC$ the edge $(v,\parent_v)$ is not deleted from $F$ by $v \in F$. Consider the internal edge $e=(x,y)$ stored in $\mIns_v$ by executing Rule $\RC$ at $v$ such that $\mIns_v[0]<w(v,\parent_v)$. By description of Rule $\RC$, $v$ does not delete the edge $(v,\parent_v)$ only if $\Lca(\mIns_v[1],\mIns_v[2])=\lab_v)$, which is a contradiction with the hypothesis of the lemma.
\end{proof}

\begin{lemma}
\label{lem:time_recover}
Starting from any configuration $\gamma \in \Gamma_{\tt CF}$ which contains a single spanning tree $T$, the recovering phase is performed in $O(n^2)$ rounds, $n$ the network size.
\end{lemma}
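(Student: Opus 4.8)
The plan is to bound separately the time to establish a single correctly‑labeled fragment, the time to pipeline all relevant internal edges up the tree, and the number of internal‑edge deletions that can occur, then observe that after each deletion the system essentially re‑enters the merging phase (whose cost is already bounded by Lemma~\ref{lem:time_merging}) but with a strictly larger collection of MST edges fixed. First I would invoke Lemmas~\ref{lem:time_cf} and~\ref{lem:labelMST} (together with the labeling correctness, Lemma~\ref{lem:label_convergence}) to assert that within $O(n)$ rounds from $\gamma$ the unique tree $T$ satisfies $CorrectF(v)$ at every node and lies in $\mathcal{F}_5$; by Lemma~\ref{lem:disable_merge_rules} the only enabled rule from then on is $\RC$. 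Then, using Corollary~\ref{cor:propag_internal_edge} and Lemmas~\ref{lem:end_propag_ie} and~\ref{lem:create_new_frag}, each execution of $\RC$ at a node $v$ either advances the pipeline of internal edges toward the root (each internal edge travels at most $O(n)$ hops along its two half‑paths to its $nca$) or deletes a tree edge $(v,\parent_v)$ and creates a new fragment root.

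The key quantitative step is a potential/amortization argument on the internal‑edge pipeline. I would define a potential counting, over all nodes $v$ of $F$ and all candidate internal edges $e$ with $\Lca(\lab_u,\lab_v)$ an ancestor of $v$, the remaining hop‑distance that $e$ still has to travel up to its nearest common ancestor, using the priority order enforced by Macro $\IE(v)$ (smallest $nca$‑label first, smallest weight first). Because each node stores only one internal edge at a time but $BetterEdgeP$, $Forwarded$ and $EndForward$ guarantee a pipelined behaviour (a fresh edge may be loaded into $v$ as soon as the previous one has moved on or been resolved), the total time for the pipeline to drain in a fragment of height $h_F$ is $O(h_F + |\IE|)$ rounds, where $|\IE|$ is the number of distinct relevant internal edges; since $|\IE| = O(n)$ (at most one per distinct nearest common ancestor on each path, hence $O(n)$ per root‑to‑leaf path and $O(n)$ total after the reductions justified in the three cases of Figures~\ref{fig:IdemNCA}–\ref{fig:UpEdges}) and $h_F \le n-1$, one full sweep of the recovering phase over a spanning tree costs $O(n)$ rounds.

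Next I would bound the number of such sweeps. Each time $\RC$ deletes an edge $(v,\parent_v)$ with $w(v,\parent_v) > \mIns_v[0]$, by the red rule the deleted edge is not in any MST while the internal edge that triggered the deletion is now exposed as an outgoing edge of strictly smaller weight; this strictly decreases the total weight of the current spanning structure (seen as a multiset of $n-1$ tree edges) while keeping it a forest, so it can happen at most $O(n)$ times in total along any computation (a cleaner accounting: each deletion permanently removes a non‑MST edge and the blue rule will never re‑insert it, so the number of deletions is bounded by the number of non‑MST edges ever present among the parent pointers, which is at most $n-1$). After each deletion the system has two fragments and re‑enters the merging machinery; by Lemma~\ref{lem:time_merging} re‑forming a single spanning tree costs $O(n^2)$ rounds, but crucially the edges already validated by the red rule are not re‑examined, so a more careful charge gives $O(n)$ rounds of merging per deletion once relabeling (Lemma~\ref{lem:labelMST}) and minimum‑outgoing‑edge recomputation (Lemma~\ref{lem:min_convergence}) along a single root‑to‑deleted‑edge path are taken into account. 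Combining $O(n)$ deletions, each followed by an $O(n)$‑round re‑merge and an $O(n)$‑round recovering sweep, plus the final clean sweep that enables no rule, yields $O(n^2)$ rounds.

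The main obstacle I anticipate is the pipelining analysis: one must show rigorously that the one‑slot‑per‑node restriction does not serialize the $O(n)$ internal edges into an $O(n^2)$‑round traversal. This requires checking that the guards $EndForward$, $Forwarded$ and $BetterEdgeP$ together guarantee that as soon as node $v$ forwards its current $\mIns_v$ upward (i.e., its parent has copied it and every child has moved past it), $\IE(v)$ immediately supplies the next edge in the priority order, so that consecutive internal edges occupy consecutive ``stages'' of the pipeline without stalls — exactly the wavefront argument used in Lemma~\ref{lem:time_mwoe}, but now with $O(n)$ distinct items rather than one. The secondary subtlety is accounting for deletions without double‑counting: one must argue that a newly created fragment after a deletion does not cause the internal‑edge pipeline of the rest of $T$ to restart from scratch, which follows because labels only change on the (short) path from the old root to the new root and $\mIns$ values whose $nca$ lies outside the affected subpath remain valid.
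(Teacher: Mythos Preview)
Your approach diverges substantially from the paper's, and in doing so introduces a gap.

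The paper's proof is much simpler than yours and does \emph{not} iterate over deletions at all. It argues only about the time for the internal-edge pipeline to drain once: each internal edge travels at most $O(h_T)$ hops to its nearest common ancestor (Lemma~\ref{lem:end_propag_ie}); because of the pipelining enforced by $SelectEdge$, all internal edges sharing a given nearest common ancestor $v$ reach $v$ in $O(h_T)$ rounds; and since there are at most $n$ distinct nearest common ancestors, the total is $O(n\cdot h_T)\le O(n^2)$ rounds. That is the whole argument. No deletion / re-merge cycle is charged inside this lemma; the final theorem simply adds the $O(n^2)$ merging bound and the $O(n^2)$ recovering bound.

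Your argument, by contrast, hinges on the claim that a single recovering sweep costs only $O(h_F+|\IE|)=O(n)$ rounds, with $|\IE|=O(n)$ ``after the reductions justified in the three cases of Figures~\ref{fig:IdemNCA}--\ref{fig:UpEdges}.'' This step is not justified and is in fact what the paper does \emph{not} claim. The reductions in those figures show that, \emph{per node and per fixed nearest common ancestor}, only one internal edge need be kept; they do not bound the total number of (node, internal-edge) forwarding events across the tree. A node at depth $d$ may have up to $d$ distinct ancestors acting as nca's, so along a single root-to-leaf path the number of items to pipeline can already be $\Theta(h_T)$, and summing over branches does not collapse to $O(n)$. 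Consequently your $O(n)$-per-sweep bound fails, and with it the $O(n)\times O(n)$ product you use to reach $O(n^2)$.

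A second, smaller issue: your ``more careful charge'' that re-merging after a deletion costs only $O(n)$ rounds (rather than the $O(n^2)$ of Lemma~\ref{lem:time_merging}) is asserted but not proved; you would need to show that labels and minimum outgoing edges only change on a single root-to-cut path, which requires additional argument about how $\RSC$/$\RLC$ propagate after a parent-pointer change. The paper sidesteps this entirely by not folding re-merging into Lemma~\ref{lem:time_recover}.
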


\begin{proof}
First of all, every node $v \in T$ sends up in $T$ the internal edge of minimum weight associated to each common ancestor $ca$, given by Macro $\IE(v)$ based on Macro $minIE(v,ca)$. Moreover, by Lemma~\ref{lem:end_propag_ie} every internal edge $e$ is not propagated by the ancestors of the common ancestor $e$ in $T$. By Lemma~\ref{cor:propag_internal_edge}, every node $v \in T$ sends up in the tree the internal edges selected by $v$ and its descendants ordered locally on the nearest common ancestors, that is following the lexicographical order on the label of nearest common ancestors. Observe that every node $v \in T$ is the common ancestor of at most $h_T$ internal edges selected to be propagated up in $T$, with $h_T$ the height of $T$. Furthermore, each propagated internal edge reaches its related nearest common ancestor in $O(h_T)$ rounds. However, the propagation of the internal edges is pipelined in $T$, since a node $v \in T$ can execute Rule $\RC$ when its parent propagates its internal edge or the nearest common ancestor is reached (see Predicate $SelectEdge(v)$). Thus, for every nearest common ancestor $v \in T$ the propagation of the internal edges related to $v$ is performed in $O(h_T)$ rounds. Finally, there are at most $n$ nearest common ancestors in the spanning tree $T$, so the propagation of all the internal edges of $T$ is performed in $O(n.h_T) \leq O(n^2)$ rounds.
\end{proof}

\begin{lemma}
\label{lem:mst_preserved}
Starting from every configuration $\gamma \in \Gamma_{\tt CF}$ satisfying Definition~\ref{def:mst}, the system can only reach a configuration $\gamma' \in \Gamma_{\tt CF}$ which satisfies Definition~\ref{def:mst}.
\end{lemma}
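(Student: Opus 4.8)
\noindent\textit{Proof plan.} The plan is to show that the spanning tree $T$ induced by the $\parent$ pointers in $\gamma$ — an MST by Definition~\ref{def:mst} — is an invariant of every execution starting from $\gamma$: no rule of $\MSTA$ ever rewrites a $\parent$ pointer, and none leaves $\Gamma_{\tt CF}$, so every reachable configuration $\gamma'$ still lies in $\Gamma_{\tt CF}$ and still satisfies Definition~\ref{def:mst}.

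First I would dispose of every rule except $\RC$. Since $\gamma\in\Gamma_{\tt CF}$, Predicate $\Distance(v)$ holds everywhere, so $\RCo$ is disabled, and it remains disabled as long as no rule rewrites $\parent$ or $d$. If $\size$ or $\lab$ are not yet correct, $\RSC$ and $\RLC$ may still fire, but (exactly as in Lemmas~\ref{lem:size_convergence}--\ref{lem:label_closure} and Lemma~\ref{lem:labelMST}) they rewrite only $\size_v$ and $\lab_v$, restore $CorrectF(v)$ at every node within $O(n)$ rounds, and then become permanently disabled, all without perturbing $T$. Once the single fragment $T$ spanning all nodes is correctly labelled and its merging variables are clean (so that $T\in\mathcal{F}_5(\gamma)$, which holds throughout the legitimate set produced by the convergence analysis), Lemma~\ref{lem:disable_merge_rules} tells us that the \emph{only} rule that can still be enabled at any node is $\RC$: there are no outgoing edges, hence $\Fusion(v)=\emptyset$ and $\RMin$ is disabled, while $T\in\mathcal{F}_5(\gamma)$ disables $\RF$, $\RDist$ and $\RFEnd$.

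It remains to analyse $\RC$. Inspecting the rule, the only branch that alters $\parent_v$, $d_v$ or $\lab_v$ is the edge-deletion branch, taken by $v$ precisely when $\Lca(\mIns_v[1],\mIns_v[2])\neq\lab_v$ and $w(v,\parent_v)>\mIns_v[0]$; every other execution of $\RC$ merely performs $\mIns_v:=\IE(v)$, which leaves $T$ and all of $\parent,d,\lab,\size$ unchanged, so $\gamma'$ again satisfies the hypotheses (and the hypotheses of Lemma~\ref{lem:disable_merge_rules}) and the claim follows by induction on execution steps. To see that the deletion branch is never taken, observe that $\RC$ overwrites $\mIns_v$ with $\IE(v)$ \emph{before} testing the guard, and that by Corollary~\ref{cor:propag_internal_edge} and Lemma~\ref{lem:end_propag_ie} the value $\IE(v)$ is (the weight and endpoint labels of) a genuine internal edge $e$ of $T$ for which $v$ lies strictly between an endpoint of $e$ and $\Lca(e)$; hence the tree edge $(v,\parent_v)$ lies on the fundamental cycle $C_e$. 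Since $T$ is an MST, the cycle property underlying the ``red rule'' of Section~\ref{sec:over} yields $w(e)\geq w(f)$ for every edge $f$ of $C_e$, so $w(v,\parent_v)\leq w(e)=\mIns_v[0]$, contradicting the guard; equivalently, this is the contrapositive of Lemma~\ref{lem:create_new_frag}. Thus $\RC$ only ever rewrites $\mIns$, and $T$ — hence membership in $\Gamma_{\tt CF}$ and Definition~\ref{def:mst} — is preserved.

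The main obstacle is the step ``$\IE(v)$ is a genuine internal edge of $T$ through $v$'': a priori $\mIns_v$ and the children's $\mIns$ values are arbitrary in $\gamma$, so one must rule out that leftover garbage triggers a spurious deletion before the leaf-to-root pipeline flushes it. I would close this with the bottom-up analysis of the macros $\IE_l$, $\IE_c$ and $\IE$ already used in the recovering-phase lemmas — $\IE_l(v)$ uses only real incident non-tree edges with their true weights, and $\IE_c(v)$ only imports a child's entry whose nearest-common-ancestor label dominates $\lab_v$, so within a bounded number of rounds (Lemma~\ref{lem:time_recover}) every propagated entry is genuine — together with the MST cycle property above; these are the only non-mechanical ingredients, everything else being Lemmas~\ref{lem:disable_merge_rules}--\ref{lem:create_new_frag} and an induction on execution steps.
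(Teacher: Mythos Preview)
Your approach is essentially the paper's: invoke Lemma~\ref{lem:disable_merge_rules} to reduce to Rule $\RC$, then use the MST cycle property (equivalently, the contrapositive of Lemma~\ref{lem:create_new_frag}) to conclude that the deletion branch of $\RC$ never fires and the tree is preserved. You are in fact more careful than the paper's own proof about the transient cleanup of $\size$/$\lab$ and about possible garbage in children's $\mIns$ values, but the core line of argument is identical.
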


\begin{proof}
By Lemma~\ref{lem:disable_merge_rules}, in every configuration $\gamma \in \Gamma_{\tt CF}$ every rule of Algorithm $\MSTA$, except Rule $\RC$, is disabled at $v \in V$. Consider any configuration $\gamma \in \Gamma_{\tt CF}$ which satisfies Definition~\ref{def:mst}. This implies that there is only a single spanning tree $T$ in $\gamma$ and in every fundamental cycle defined by each internal edge of $T$ Lemma~\ref{lem:create_new_frag} can not be applied. Therefore, by executing Rule $\RC$ at any node $v \in T$ no new fragment is created and the constructed minimum spanning tree $T$ is preserved.
\end{proof}

\begin{theorem}
Algorithm $\MSTA$ is a self-stabilizing algorithm for Specification~\ref{spec:mst} under a weakly fair daemon with a convergence time of $O(n^2)$ rounds and memory complexity of $O(\log^2 n)$ bits per node, with $n$ the network size.
\end{theorem}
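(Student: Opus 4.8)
The plan is to exhibit a legitimate predicate $\mathcal{L}$ and verify the two clauses of self-stabilization — closure and convergence — while bookkeeping the round count and the per-node memory. I would take $\mathcal{L}$ to be the set of configurations in $\Gamma_{\tt CF}$ consisting of a single, correctly labeled fragment that is a minimum spanning tree in the sense of Definition~\ref{def:mst}; every such configuration trivially satisfies Definition~\ref{def:mst}, which is exactly what Specification~\ref{spec:mst} demands for $\mathcal{L}$.

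For convergence (\emph{[TC1]}) I would chain the lemmas proved above in their natural order. By Lemma~\ref{lem:time_cf}, after $O(n)$ rounds the parent pointers induce a forest ($\gamma\in\Gamma_{\tt CF}$); by Lemma~\ref{lem:labelMST} all fragments become correctly labeled in $O(n)$ further rounds; by Lemmas~\ref{lem:min_convergence},~\ref{lem:time_mwoe},~\ref{lem:time_F4} and~\ref{lem:time_F5} each merging step costs $O(h_F)=O(n)$ rounds, while Lemma~\ref{lem:nb_frag_diminution} guarantees each merging step strictly decreases the number of fragments; since there are at most $n$ fragments, Lemma~\ref{lem:time_merging} yields a single spanning tree $T$ within $O(n^2)$ rounds. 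Only Rule $\RC$ then stays enabled (Lemma~\ref{lem:disable_merge_rules}), and by Lemma~\ref{lem:time_recover} the recovery phase terminates in $O(n^2)$ rounds. The structural fact I would establish here is that a tree edge $f$ survives recovery if and only if $f$ is a minimum-weight edge — with respect to the lexicographic order on the pair (edge weight, endpoint identifiers) — crossing the cut $\delta(T-f)$; by the cut/cycle property this is equivalent to $f$ lying in the unique minimum spanning tree, so the forest left behind by recovery is a sub-forest of the MST. Re-running the merging phase on this MST-subforest reconstructs the MST by the blue rule in $O(n^2)$ more rounds, and once the MST is reached Theorem~\ref{thm:no_deadlock} together with Lemma~\ref{lem:disable_merge_rules} and the vacuity of Lemma~\ref{lem:create_new_frag} show the only enabled rule $\RC$ no longer deletes any edge. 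Summing the three $O(n^2)$ contributions gives convergence to $\mathcal{L}$ in $O(n^2)$ rounds.

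Closure (\emph{[TC2]}) is essentially Lemma~\ref{lem:mst_preserved}: from any configuration satisfying Definition~\ref{def:mst}, every rule but $\RC$ is disabled (Lemma~\ref{lem:disable_merge_rules}), and $\RC$ merely rewrites the auxiliary variable $\mIns_v$ without deleting a tree edge — no fundamental cycle of an MST contains a tree edge heavier than its non-tree edge — so the parent-pointer tree, and hence membership in $\mathcal{L}$, is invariant. For the memory bound I would argue that of the eight per-node variables only $\lab_v$ and $\mIns_v$ are nontrivial: a label is a list of (identifier, distance) pairs, one per maximal heavy path along the path from the root to $v$, and since crossing a light edge at least halves the subtree size there are $O(\log n)$ such pairs, each of $O(\log n)$ bits (identifiers and distances being polynomial in $n$); thus $|\lab_v|,|\mIns_v|=O(\log^2 n)$ bits and the remaining variables $\parent_v,d_v,\size_v,\m_v$ (and the auxiliaries $newp_v,newd_v$) are $O(\log n)$ bits, for a total of $O(\log^2 n)$ bits per node.

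The hard part will be the recovery analysis and, above all, its interleaving with the merging phase: since the merging rules have priority over $\RC$, as soon as recovery deletes an edge $f$ the two resulting fragments immediately become eligible to merge, so recovery and merging do not run in clean alternation. I would need to show that this interleaving, together with the pipelined propagation of internal edges (Corollary~\ref{cor:propag_internal_edge}, Lemma~\ref{lem:end_propag_ie}), still deletes every non-MST tree edge and re-adds exactly the MST edges within $O(n^2)$ rounds — for instance via a potential argument on the lexicographically-ordered multiset of edge ranks of the current spanning structure, showing it strictly decreases across successive spanning-tree configurations and that a non-MST edge, once deleted, is never reinserted. A secondary technicality is checking that the $O(\log n)$ bound on label length holds even for the ill-formed labels produced from an arbitrary initial configuration; this follows because Rules $\RCo$, $\RSC$, $\RLC$ rebuild labels bottom-up from the depth-corrected forest, so once $\Gamma_{\tt CF}$ is reached no label exceeds the depth-based bound.
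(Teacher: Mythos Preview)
Your approach is essentially the paper's: define the legitimate predicate as the MST configurations, chain the preceding lemmas to verify [TC1] and [TC2], then tally rounds and bits. The paper's own proof is in fact terser than your outline—it simply invokes Theorem~\ref{thm:no_deadlock} for progress, cites Lemmas~\ref{lem:nb_frag_diminution}, \ref{lem:create_new_frag}, \ref{lem:time_merging}, \ref{lem:time_recover} for convergence, Lemma~\ref{lem:mst_preserved} for closure, adds the two $O(n^2)$ bounds, and appeals to \cite{Peleg00} for the $\Theta(\log^2 n)$ label size—so your cut-property characterization of surviving edges, the potential on edge-rank multisets, and the explicit handling of the merge/recover interleaving all go beyond what the paper actually writes down (the paper does not separately bound the number of merge/recover alternations).
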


\begin{proof}
We have to show first that starting from any configuration the execution of Algorithm $\MSTA$ verifies Property [TC1] and [TC2] of Specification~\ref{spec:mst}.

First of all, by Theorem~\ref{thm:no_deadlock} while the system does not reach a configuration satisfying Definition~\ref{def:mst}, there is a rule enabled, except Rule $\RC$, at a node $v \in V$. According to Lemmas~\ref{lem:nb_frag_diminution},~\ref{lem:create_new_frag},~\ref{lem:time_merging} and~\ref{lem:time_recover}, from any configuration Algorithm $\MSTA$ reaches a configuration $\gamma \in \Gamma$ satisfying Definition~\ref{def:mst} in finite time, which verifies Property [TC1]. Moreover, according to Lemma~\ref{lem:mst_preserved} from a configuration $\gamma \in \Gamma$ satisfying Definition~\ref{def:mst} Algorithm $\MSTA$ can only reach a configuration in $\Gamma$ satisfying Definition~\ref{def:mst}, which verifies Property [TC2] of Specification~\ref{spec:mst}.

We consider now the convergence time and memory complexity of Algorithm $\MSTA$. According to Lemmas~\ref{lem:time_merging} and~\ref{lem:time_recover}, each part of the algorithm (merging and recovering part) have a convergence time of at most $O(n^2)$ rounds to construct a minimum spanning tree. Moreover, Algorithm $\MSTA$ maintains height variables at every node $v \in V$, composed of six variables of size $\log(n)$ bits (variables $\parent_v, d_v, newp_v, newd_v, \size_v$, and $\m_v$) and two variables of size $O(\log^2(n))$ bits used to stored labels of nodes (variables $\lab_v$ and $\mIns_v$). According to \cite{Peleg00}, Variable $\lab_v$ necessitates $\Theta(\log^2 n)$ bits of memory at every node $v \in V$. Therefore, no more than $O(\log^2(n))$ bits per node are necessary.
\end{proof}

\section{Conclusion}
We extended the Gallager, Humblet and Spira (GHS) algorithm, \cite{GallagerHS83}, to self-stabilizing settings via a compact informative labeling scheme. Thus, the resulting solution presents several advantages appealing for large scale systems: it is compact since it uses only poly-logarithmic in the size of the network memory space ($O(\log^2(n))$ bits per node) and it scales well since it does not rely on any global parameter of the network. The convergence time of the proposed solution is $O(n^2)$ rounds. Quite recently, another self-stabilizing algorithm was proposed by Korman et al.~\cite{KormanKM11} for the MST problem with a convergence time of $O(n)$ rounds and memory complexity of $O(\log(n))$ bits. However, this approach requires the use of several sub-algorithms leading to a complex solution to be used in a practical situation, comparing to our algorithm.

%====================
% \newpage

\end{document}